\newtheorem{theorem}{Theorem}%[section]
\newtheorem{lemma}{Lemma}%[section]
\newtheorem{example}{Example}
\newtheorem{corollary}{Corollary}%[section]
\newtheorem{proposition}{Proposition}%[section]
\newtheorem{remark}{Remark}
\def \arg{\operatorname{arg}}
\def \E{\operatorname{E}}
\def \var{\operatorname{Var}}
\newcommand{\uin}{}
\newcommand{\lin}{\bar}
\newcommand{\uset}{}
\newcommand{\lset}{\bar}
\begin{document}
\title{A Unified Approach for Network Information Theory}
\author{Si-Hyeon~Lee,~\IEEEmembership{Member,~IEEE,} and~Sae-Young~Chung,~\IEEEmembership{Senior Member,~IEEE}
\thanks{S.-H. Lee is with the Department of Electrical and Computer Engineering,
University of Toronto, Toronto, Canada (e-mail: sihyeon.lee@utoronto.ca). This work was done when she was at KAIST. S.-Y. Chung is with the Department of Electrical Engineering, KAIST, Daejeon, South Korea (e-mail: sychung@ee.kaist.ac.kr). The material in this paper will be presented in part at IEEE ISIT 2015 \cite{LeeChung:15ISIT1}.}
}

\maketitle
\begin{abstract}
In this paper, we take a unified approach for network information theory and prove a coding theorem, which can recover most of the achievability results in network information theory that are based on random coding. The final single-letter expression has a very simple form, which was made possible by many novel elements such as a unified framework that represents various network problems in a simple and unified way, a unified coding strategy that consists of a few basic ingredients but can emulate many known coding techniques if needed, and new proof techniques beyond the use of standard covering and packing lemmas. For example, in our framework, sources, channels, states and side information are treated in a unified way and various constraints such as cost and distortion constraints are unified as a single joint-typicality constraint.

Our theorem can be useful in proving many new achievability results easily and in some cases gives simpler rate expressions than those obtained using conventional approaches. Furthermore, our unified coding can strictly outperform existing schemes. For example, we obtain a generalized decode-compress-amplify-and-forward bound as a simple corollary of our main theorem and show it strictly outperforms previously known coding schemes. Using our unified framework, we formally define and characterize three types of network duality based on channel input-output reversal and network flow reversal combined with packing-covering duality.

\end{abstract}

%\begin{keywords}
%Unified coding theorem, duality, decode-compress-amplify-and-forward
%\end{keywords}
%

\IEEEpeerreviewmaketitle

\section{Introduction}

In network information theory, we study the fundamental limits of information flow and processing in a network and develop coding strategies that can approach the limits closely. Instead of studying a fully general network, however, we often study simple canonical models such as the multiple-access channel \cite{liao_thesis}, relay channel \cite{CoverElGamal:79}, and distributed source coding  \cite{SlepianWolf:73} because they are easier to study and more importantly because we can get useful insights from studying them. Once such insights are obtained, one can try to develop a more general theory that is applicable to general networks. 

However, such a task is challenging and only partial results have been known so far  \cite{Yeung:00,KramerGastparGupta:05,AvestimehrDiggaviTse:11,YassaeeAref:11, Lim:10, HouKramer:arxiv13, RiniGoldsmith:ITA13, MineroLimKim:15, LimKimKim:14, LimKimKim:14ITW, YassaeeArefGohari:14}, in which network model and/or applied coding technique is limited. 
For example, network coding~\cite{Yeung:00} and compress-and-forward (CF)~\cite{CoverElGamal:79}  were unified as noisy network coding in~\cite{YassaeeAref:11,Lim:10}, but does not include decode-and-forward (DF) \cite{CoverElGamal:79}. DF and partial DF \cite{CoverElGamal:79} were generalized for single-source multiple-relay single-destination networks \cite{KramerGastparGupta:05} and for multicast and broadcast networks \cite{LimKimKim:14,LimKimKim:14ITW}, respectively. In \cite{HouKramer:arxiv13}, noisy network coding was combined with network DF \cite{KramerGastparGupta:05}, but does not allow a relay to perform both partial DF and CF simultaneously.  
For joint source-channel coding problems, a hybrid analog/digital coding strategy \cite{MineroLimKim:15} was proposed that recovers and generalizes many previously known results. Such a hybrid coding scheme was applied to some relay networks and was shown to unify both amplify-and-forward (AF) \cite{ScheinGallager:00} and CF \cite{CoverElGamal:79}.  In \cite{YassaeeArefGohari:14}, a novel framework for proving achievability was proposed  based on output statistics of random binning and source--channel duality. One important feature of this framework is that the addition of secrecy is free, i.e.,  once an achievability result  is obtained for a network model using this framework, an achievability result with additional secrecy constraint is immediately obtained.  We note that \cite{MineroLimKim:15} and \cite{YassaeeArefGohari:14} took a bottom-up approach in a sense that achievability results are separately obtained for each of various network models.  

In this paper, we take a top-down approach and prove a unified achievability theorem for a general network scenario with arbitrarily many nodes. Our setup is general enough such that any combination of source coding, channel coding, joint source-channel coding, and coding for computing problems can be treated. 
Our result recovers most of the exiting achievability results in network information theory as long as they are based on random coding. Some examples of known results recovered by our theorem are listed as follows:
\begin{itemize}
\item Channel coding: Gelfand-Pinsker coding~\cite{GelfandPinsker:80}, Marton's inner bound for the broadcast channel~\cite{Marton:79}, Han-Kobayashi inner bound for the interference channel~\cite{HanKobayashi:81}, \cite{ChongMotaniGargElGamal:08},  coding for channels with action-dependent states \cite{Weissman:10}, interference decoding for a 3-user interference channel~\cite{BandemerElGamal:11}, \cite{BandemerElGamal:Allerton11}, Cover-Leung inner bound for the multiple access channel with feedback~\cite{CoverLeung:81}, a combination of partial DF and CF for the relay channel~\cite{CoverElGamal:79}, network DF~\cite{KramerGastparGupta:05},
noisy network coding \cite{YassaeeAref:11,Lim:10}, short message noisy network coding with a DF option~\cite{HouKramer:arxiv13},  offset encoding for the multiple access relay channel~\cite{SankarKramerMandayam:07}.
 
\item Source coding: Slepian-Wolf coding \cite{SlepianWolf:73}, Wyner-Ziv coding~\cite{WynerZiv:76},  Berger-Tung inner bound for distributed lossy compression~\cite{Berger:77}, \cite{tung_thesis},  Zhang-Berger inner bound for multiple description coding \cite{ZhangBerger:78}.

\item Joint source-channel coding:  hybrid coding~\cite{MineroLimKim:15} and all previous results recovered by hybrid coding including sending arbitrarily correlated sources over multiple access channels \cite{CoverElGamalSalehi:80}, broadcast channels \cite{HanCosta:87}, and interference channels \cite{LiuChen:11,LiuChen:10}. 

\item Coding for computing: coding for computing~\cite{Yamamoto:82}, cascade coding for computing \cite{CuffSuElGamal:09}.
\end{itemize}

In Table~\ref{table:comparison}, we compare many approaches that attemped to unify various coding strategies.\footnote{The check mark `\checkmark' means that the corresponding unification approach subsumes both the model and the achievability bound of previous result. %One exception is that hybrid coding is marked as to recover CF \cite{CoverElGamal:79} although the model of three-node relay channel \cite{CoverElGamal:79} is not included in that of \cite{MineroLimKim:15}, because hybrid coding  is shown to combine CF and AF for some relay networks. 
} 

\begin{table}[h]
\caption{Comparison of approaches that attempted unification of various network models and coding strategies}
\begin{center}
   \begin{tabular}{|l || c | c|  c | c | c |}
     \hline
Previous Results &  SWC \cite{YassaeeAref:11} & DDF \cite{LimKimKim:14,LimKimKim:14ITW}  & NNC-DF \cite{HouKramer:arxiv13} & HC \cite{MineroLimKim:15} & Our result \\  \hline \hline
 Gelfand-Pinsker coding \cite{GelfandPinsker:80} &  &   &          & $\checkmark$ & $\checkmark$ \\   \hline
 Marton coding\cite{Marton:79} &  &   \checkmark   & &    \checkmark        & $\checkmark$ \\  \hline
 Han-Kobayashi coding \cite{HanKobayashi:81} &  &      &       &         \checkmark   & $\checkmark$ \\   \hline
 Interference decoding \cite{BandemerElGamal:11} &  & &             &            & $\checkmark$ \\   \hline
  Cover-Leung coding \cite{CoverLeung:81} &&&&\checkmark& \checkmark \\ \hline
       DF  \cite{CoverElGamal:79}&  &  \checkmark &    \checkmark        &            & $\checkmark$ \\  \hline
     Partial DF \cite{CoverElGamal:79} &  &  \checkmark &       &            & $\checkmark$ \\  \hline
     AF \cite{ScheinGallager:00}&  &      &       & $\checkmark$ & $\checkmark$ \\   \hline 
      CF \cite{CoverElGamal:79} & $\checkmark$ & & $\checkmark$ & $\checkmark$ & $\checkmark$ \\  \hline 
      Combination of partial DF and CF \cite{CoverElGamal:79}  &&&&& \checkmark\\ \hline 
       Network coding \cite{Yeung:00}  &  $\checkmark$ &\checkmark & \checkmark &    \checkmark        & $\checkmark$ \\   \hline
     NNC \cite{YassaeeAref:11,Lim:10} &  $\checkmark$ & & \checkmark &        & $\checkmark$ \\  \hline 
    % Offset encoding \cite{SankarKramerMandayam:07} &&&&& \\ \hline
     Wyner-Ziv coding \cite{WynerZiv:76} &  &        &     & $\checkmark$ & $\checkmark$ \\   \hline
      Slepian-Wolf coding \cite{SlepianWolf:73} &  $\checkmark$ & & &         \checkmark   & $\checkmark$ \\   \hline
     Berger-Tung coding \cite{Berger:77}, \cite{tung_thesis}&   &            & &      \checkmark      & $\checkmark$ \\ \hline
         Zhang-Berger coding \cite{ZhangBerger:78} &   &             &&      \checkmark      & \checkmark \\  \hline 
     \scriptsize Joint source-channel coding over &&&&&\\ 
    \scriptsize   multiple access channels \cite{CoverElGamalSalehi:80}, broadcast channels \cite{HanCosta:87},  &&&& \checkmark &\checkmark \\ 
     \scriptsize  and interference channels \cite{LiuChen:11,LiuChen:10}  &&&&&\\ \hline
    Hybrid coding\cite{MineroLimKim:15} &  &     &        & $\checkmark$ & $\checkmark$ \\   \hline    
      Coding for computing \cite{Yamamoto:82}  &   &  & &  & \checkmark \\ \hline 
      Cascade coding for computing \cite{CuffSuElGamal:09} & & & & & \checkmark \\ \hline
   \end{tabular}
\end{center} [Abbreviations] SWC: Slepian-Wolf coding over networks, DDF: distributed decode-and-forward, NNC-DF: noisy network coding with a DF option, HC: hybrid coding
\label{table:comparison}
\end{table}

Our theorem can be useful in proving new achievability results easily and in some cases gives simpler rate expressions than those obtained using conventional approaches. Furthermore, our unified coding can strictly outperform existing schemes. To illustrate this, we show that a generalized decode-compress-amplify-and-forward bound for acyclic networks can be obtained as a simple corollary of our main theorem and show it strictly outperforms previously known coding schemes. As another special case of our main theorem, we derive a generalized decode-compress-and-forward bound for a discrete memoryless network (DMN) in~\cite{LeeChung:15ISIT2}, which recovers both noisy network coding~\cite{Lim:10} and distributed decode-and-forward~\cite{LimKimKim:14} bounds. This is the first time the partial-decode-compress-and-forward bound (Theorem 7) by Cover and El Gamal~\cite{CoverElGamal:79} is generalized for DMN's such that each relay performs both partial DF and CF simultaneously. 

Our unified coding theorem enables us to state various types of duality arising in network information theory. Specifically, we formally define and characterize three types of network duality based on channel input-output reversal and network flow reversal combined with packing-covering duality.  Our duality results include as special cases many known duality relationships in network information theory, e.g., the duality between coding for multiple-access channel  \cite{liao_thesis} and distributed sources  \cite{Berger:77}, \cite{tung_thesis} (type-I duality), the duality between Gelfand-Pinsker coding \cite{GelfandPinsker:80} and Wyner-Ziv coding \cite{WynerZiv:76} (type-II duality), and the duality between coding for multiple-access channel \cite{liao_thesis}  and broadast channel  \cite{Marton:79}  (type-III duality).

Our unified achievability result is enabled by many novel elements such as a unified framework that represents various network problems in a simple and unified way, a unified coding strategy that consists of a few basic ingredients but can emulate known coding techniques if needed, and new proof techniques beyond the use of standard covering and packing lemmas. In our framework, sources, channels, states and side information are treated in a unified way and various constraints such as cost and distortion constraints are combined as a joint-typicality constraint, which is specified by a single joint distribution. Furthermore, we mainly consider acyclic discrete memoryless networks (ADMN) in this paper, where information flows in an acyclic manner. However, we also show our coding theorem can also be applied to general DMN's by unfolding the network. Graph unfolding was first used in~\cite{Yeung:00} for network coding.

Our coding scheme has four main ingredients, i.e., superposition coding, simultaneous nonunique decoding, simultaneous compression, and symbol-by-symbol mapping. We note that our coding scheme does not explicitly include binning and multicoding, but is still general enough to emulate them if needed. Although each of these coding ingredients is not new, these are tweaked and combined in a special way to enable unification of many previous approaches. In our coding scheme, covering codebooks are used to compress information that each node observes and decodes. These covering codebooks are generated to permit superposition coding \cite{Cover:72}. Each node operates according to the following three steps. The first step is  simultaneous nonunique decoding  \cite{ChongMotaniGargElGamal:08,NairElGamal:09,BandemerElGamalKim:12}, where a node uniquely decodes some covering codewords of other nodes  together with some other covering codewords that do not need to be decoded uniquely. The next step is simultaneous compression, where the node finds covering codewords simultaneously that carry information about a received channel output sequence and decoded codewords. Since we allow general superposition relationship among covering codebooks, a more general analysis beyond multivariate covering lemma \cite{ElGamlvanderMeulen:81,ElGamalKim:11} is needed. The last step is a symbol-by-symbol mapping from a received channel output sequence and decoded and covered codewords to a channel input sequence. The technique of using a symbol-by-symbol mapping was introduced in \cite{Shannon:58}, which is referred to as the Shannon strategy. Our symbol-by-symbol mapping from all three, i.e., the channel output sequence and decoded and covered codewords, was first used in \cite{WangNaghshvar:14} for a three-node noncausal relay channel. 
We note that such a use of symbol-by-symbol mapping results in correlation between a channel input sequence and nonchosen covering codewords and thus the standard packing lemma \cite{ElGamalKim:11} cannot be applied for the error analysis. Such correlation was problematic in many previous works and solved for some simple networks in  \cite{LapidothTinguely:10,MineroLimKim:15,WangNaghshvar:14}. Our proof technique completely solves this correlation issue in a fully general network setup.

This paper is organized as follows. In Section~\ref{sec:framework}, we present our unified framework. In  Section~\ref{sec:main}, we propose a unified coding scheme and present the main theorem of this paper. We also show various examples to illustrate how to utilize our results. In Section~\ref{sec:duality}, we characterize three types of network duality. To demonstrate usefulness of our unified coding theorem, in Section \ref{sec:GDCAF}, we derive a generalized decode-compress-amplify-and-forward bound as a simple corollary of our theorem and show it strictly outperforms previously known coding schemes. In Section~\ref{sec:Gaussian}, we present a unified coding theorem for the Gaussian case. We conclude this paper in Section \ref{sec:conclusion}.

The following notations are used throughout the paper.
\subsection{Notation}
For two integers $i$ and $j$, $[i:j]$ denotes the set $\{i,i+1,\ldots, j\}$. For a set $S$ of real numbers,  $S_{[i]}$ denotes the $i$-th smallest element in $S$ and $S[i]$ denotes $\{j: j\in S, j<i\}$. For constants $u_1,\ldots, u_k$ and $S\subseteq [1:k]$, $u_S$  denotes the  vector $(u_j: j\in S)$ and $u^j_i$ denotes $u_{[i:j]}$ where the subscript is omitted when $i=1$, i.e., $u^j=u_{[1:j]}$.  For random variables $U_1,\ldots, U_k$ and $S\subseteq [1:k]$, $U_{S}$ and  $U^j_i$ are defined similarly. For sets $T_1, \ldots, T_k$ and $S\subseteq [1:k]$, $T_S$ denotes $\bigcup_{j\in S} T_j$ and $T^j_i$ denotes $T_{[i:j]}$ where the subscript is omitted when $i=1$. Consider two real vectors $u=(u_1,\ldots, u_k)$ and $v=(v_1,\ldots, v_k)$ of length $k$. We say that $u$ is smaller than $v$ and write $u<v$ if there exists $k'\in [1:k]$ such that $u_{j}= v_{j}$ for all $j\in [1:k'-1]$ and $u_{k'}<v_{k'}$. Furthermore, we say that $u$ is component-wise smaller than $v$ and write $u\prec v$ if $u_j<v_j$ for all $j\in [1:k]$.
 $\mathbf{1}$ denotes an all-one  vector and $\mathbf{I}$ denotes an identity matrix. When $U$ is a Gaussian random vector with mean $\mu$ and covariance matrix $\Lambda_U$, we write $U \sim \mathcal{N}(\mu, \Lambda_{U})$. $\mathbbm{1}_{u=v}$ is the indicator function, i.e., it is 1 if $u=v$ and 0 otherwise. $\delta(\epsilon)>0$ denotes a function of $\epsilon$ that tends to zero as $\epsilon$ tends to zero. 
%For constants $u_1,\ldots, u_k, v_1,\ldots,v_k$ and $S\subseteq [1:k]$, we say that $u_S$ is smaller than $v_S$ and write $u_S<v_S$ if there exists $k$ such that $u_{S_{[j]}}= v_{S_{[j]}}$ for all $j<k$ and $u_{S_{[k]}}<v_{S_{[k]}}$. 

We follow the notion of typicality in \cite{Orlitsky:01}, \cite{ElGamalKim:11}. Let $\pi_{x^n}(x)$ denote the number of occurrences of $x\in \mathcal{X}$ in the sequence $x^n$. 
Then, $x^n$ is said to be $\epsilon$-typical (or just typical) for $\epsilon>0$ if for every $x\in \mathcal{X}$,
\begin{align*}
|\pi_{x^n}(x)/n-p(x)|\leq \epsilon p(x) .
\end{align*}
The set of all $\epsilon$-typical $x^n$ is denoted as $\mathcal{T}_{\epsilon}^{(n)}(X)$, which is shortly denoted as $\mathcal{T}_{\epsilon}^{(n)}$. A jointly typical set (or just a typical set) such as $\mathcal{T}_{\epsilon}^{(n)}(X,Y)$ for multiple variables, which will also be denoted as $\mathcal{T}_{\epsilon}^{(n)}$, is naturally defined from the definition of $\mathcal{T}_{\epsilon}^{(n)}(X)$.

\section{Unified Framework} \label{sec:framework}
In this section, we build a unified framework for proving the achievability of many network information theory problems including channel coding, source coding, joint source--channel coding, and coding for computing. Let us first construct a unified framework for point-to-point scenarios and then generalize it to general network scenarios. 
\subsection{Point-to-point scenarios} \label{subsec:framework_p2p}
Consider the standard channel coding and source coding problems \cite{Shannon:48} illustrated in Fig. \ref{fig:p2p}. These two problems can be stated with the following elements: information to be communicated, node interaction and node processing functions, and the definition of achievability. Let us investigate differences between these two coding problems for each element and discuss how we can unify them into a single framework. In the following, $n$ denotes the number of channel uses for channel coding and the number of source symbols for source coding and $R\geq 0$ denotes the rate in each problem. 

\begin{figure}[t]
 \centering\subfigure[]
  {\includegraphics[width=105mm]{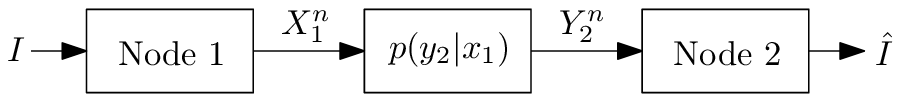}}
 \subfigure[]
  {\includegraphics[width=75mm]{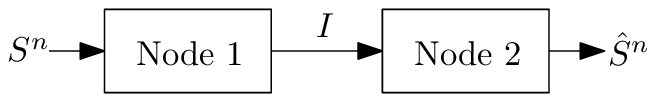}}
 \caption{(a) Channel coding, (b) Source coding} \label{fig:p2p}
\end{figure}

\begin{itemize}
\item Information to be communicated: In channel coding, a message $I$, uniformly distributed over $[1:2^{nR}]$, is communicated from node 1 to node 2. In source coding, a discrete memoryless source (DMS) $(\mathcal{S},p(s))$ is given to node 1 and is reconstructed at node 2 (up to a prescribed distortion level in the case of lossy source coding). We can observe that a message can be regarded as a DMS $(\mathcal{S},p(s))$ such that $H(S)=R$. Hence, in both channel coding and source coding problems, we can say that a DMS  $(\mathcal{S},p(s))$ is given to node 1  and is reconstructed at node 2.

\item Node interaction and node processing functions: In channel coding, node 1 communicates with node 2 through a discrete memoryless channel (DMC) $(\mathcal{X}_1, \mathcal{Y}_2$, $p(y_2|x_1))$. Node 1 maps $s^n$ to a channel input sequence $x_1^n$ and node 2 receives a channel output sequence $y_2^n$ and maps it to $\hat{s}^n$. In source coding, node 1 maps $s^n$ to an index $I\in [1:2^{nR}]$ and node 2 receives $I$ exactly and maps it to $\hat{s}^n$. The noiseless communication of an index in source coding can be regarded as a DMC $(\mathcal{X}_1, \mathcal{Y}_2, p(y_2|x_1))$ such that $\max_{p(x_1)}I(X_1;Y_2)=R$. Hence, in  both channel coding and source coding problems, we can say that node 1 communicates with node 2 through a DMC $(\mathcal{X}_1, \mathcal{Y}_2$, $p(y_2|x_1))$, the processing function at node 1 is a mapping from $S^n$ to $X_1^n$, and the processing function at node 2 is a mapping from $Y_2^n$ to $\hat{S}^n$. By denoting $S$ by $Y_1$ and $\hat{S}$ by $X_2$, we can further unify the notation for sequences and the node processing functions, i.e., a sequence received by node $k$ is denoted by $Y_k^n$, the resultant sequence from processing at node $k$ is denoted by $X_k^n$, and the node processing function at node $k=1,2$ is a mapping from $Y_k^n$ to $X_k^n$.

\item Achievability: In channel coding and lossless source coding problems, a rate $R$ is said to be achievable if there exists a sequence of node processing functions such that $\lim_{n\rightarrow\infty} P_e^{(n)}=0$, where 
$P_e^{(n)}$ denotes the probability of error event given as  
$P(Y_1^n\neq X_2^n)$. 
In lossy source coding problem, a rate--distortion pair $(R,d)$ is said to be achievable if there exists a sequence of node processing functions such that  $\limsup_{n\rightarrow \infty}\E(d(Y_1^n, X_2^n))\leq d$, where $d(\cdot,\cdot)\geq 0$ is a distortion measure between two arguments.

Now, let us introduce a new definition of achievability from which we can show the achievability of both channel coding and source coding problems in a unified way. We say a joint distribution $p^*(x_1,x_2,y_1,y_2)$, shortly denoted as $p^*$, is achievable if there exists a sequence of node processing functions such that $\lim_{n\rightarrow \infty }P_e^{(n)}(p^*,\epsilon)= 0$ for any $\epsilon>0$, where $P_e^{(n)}(p^*,\epsilon)$ denotes the probability \[P((X_1^n, X_2^n, Y_1^n, Y_2^n)\notin T_{\epsilon}^{(n)})\] in which the typical set is defined with respect to $p^*$. Then, the achievability of appropriately  chosen $p^*$ implies the achievability of $R$ or $(R,d)$ in channel coding and source coding problems. For channel coding and lossless source coding problems, $R$ is achievable if $p^*$ such that $X_2=Y_1$ is achievable. For lossy source coding problem, $(R,d)$ is achievable if $p^*$ such that $E(d(Y_1,X_2))\leq d/(1+\epsilon')$, $\epsilon'\rightarrow 0$ is achievable from the typical average lemma  \cite{ElGamalKim:11} and the continuity of the rate-distortion function $R(d)$ in $d$. 
\end{itemize}

To see whether the aforementioned unification approach is general enough for point-to-point scenarios, let us consider more general point-to-point scenarios in Fig. \ref{fig:p2p_side}. First, in channels with noncausal states \cite{GelfandPinsker:80} illustrated in Fig. \ref{fig:p2p_side}-(a), node 1 observes a message $I$ of rate $R$ and a state sequence $S^n\sim p(s)$ and encodes $(I, S^n)$ as $X_1^n$. Then, node 2 receives $Y_2^n \sim p(y_2|s,x_1)$ and estimates $I$ as $\hat{I}$. Achievability is defined in the same way as in the channel coding problem. Let us apply the aforementioned unification approach to this problem. Since $Y_1$ represents all the information node 1 receives, we let $Y_1=(M, S)$ such that $H(M)=R$ and $M$ and $S$ are independent, where $M^n$ corresponds to the message of rate $R$. But, we cannot use the channel form of $p(y_2|x_1)$ to capture the dependency of the channel output $Y_2$ on state $S$. This indicates that a more general channel form of $p(y_2|y_1,x_1)$ is needed in the unified framework. Then, we can let $p(y_2|y_1,x_1)$ be equal to $p(y_2|s,x_1)$. If we choose  $p^*$ such that $X_2=M$, the achievability of $p^*$ implies the achievability of $R$ of the original problem. 
 
Next, in lossy source coding with side information \cite{WynerZiv:76} represented in Fig. \ref{fig:p2p_side}-(b), node 1 receives a source sequence $S^n\sim p(s)$ and encodes it as an index $I\in [1:2^{nR}]$. Then, node 2 receives the index $I$ and side information $T^n\sim p(t|s)$ and reconstructs $S^n$ as $\hat{S}^n$ up to some distortion level. Achievability is defined in the same way as in the lossy source coding problem. 
For this problem, we  apply the unification approach as follows. We let $Y_1=S$. Since node 2 has two channel inputs, we let $Y_2=(Y_2',T)$ and let the  channel $p(y_2|y_1,x_1)$ be decomposed as $p(y_2'|x_1)p(t|y_1)$, where the channel $p(y_2'|x_1)$ corresponds to the communication of $I$ of rate $R$ and hence its capacity is given as $R$, i.e., $\max_{p(x_1)}I(X_1;Y_2')=R$, and the channel $p(t|y_1)$ captures the correlation between $Y_1=S$ and the side information $T$. We pick up the target distribution in the same way as in the lossy source coding problem.  Furthermore, coding for computing problem  \cite{Yamamoto:82}, where node 2 wishes to reconstruct a function $f(S,T)$ of $S$ and $T$ up to  distortion $d$ with respect to a distortion measure $d(\cdot,\cdot)$, can also be included in this framework by choosing $p^*$ such that $\E(d(g(Y_1,Y_2),X_2))\leq d/(1+\epsilon)$, $\epsilon\rightarrow 0$ where $g(Y_1,Y_2)=g(S,Y_2',T)=f(S,T)$. 

In summary, the achievability of the aforementioned point-to-point coding problems can be shown by considering the following unified framework. Network model is given by $(\mathcal{X}_1,\mathcal{X}_2, \mathcal{Y}_1,\mathcal{Y}_2,p(y_1)p(y_2|y_1,x_1))$ as illustrated in Fig. \ref{fig:p2p_framework} and the objective is specified by a target distribution $p^*$. $p^*$ is said to be achievable if there exists a sequence of node processing functions, $Y_k^n\rightarrow X_k^n$, $k=1,2$, such that $\lim_{n\rightarrow \infty}P_e^{(n)}(p^*,\epsilon)=0$ for any $\epsilon>0$.

\begin{figure}[t]
 \centering\subfigure[]
  {\includegraphics[width=102mm]{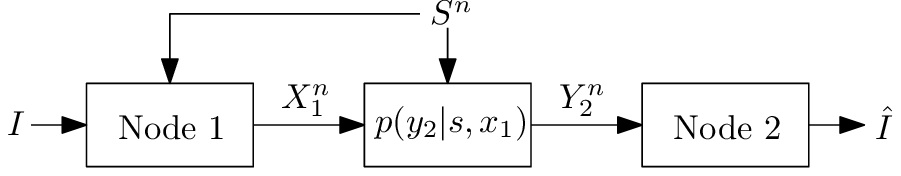}}
 \subfigure[]
  {\includegraphics[width=75mm]{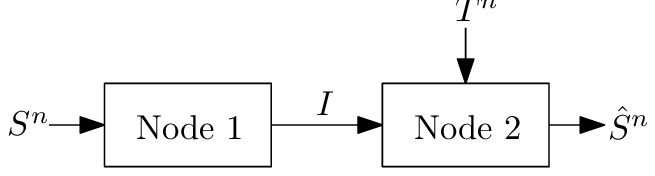}}
 \caption{(a) Channel with noncausal states, (b) Lossy source coding with side information } \label{fig:p2p_side}
\end{figure}
\begin{figure}[t]
 \centering
  {
   \includegraphics[width=112mm]{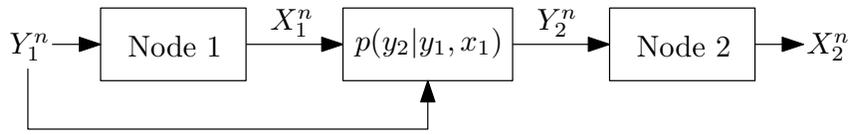}}
  \caption{Unified framework for point-to-point scenarios} \label{fig:p2p_framework}
\end{figure}

%\begin{remark}
%Assume $\epsilon>0$ and $\epsilon'>0$. Let $H(S)=R+\delta(\epsilon')$. Then, we have
%\begin{align*}
%\epsilon&>P(S^n\neq \hat{S}^n) \\
%&=\sum_{s^n}p(s^n)P(s^n\neq \hat{S}^n|S^n=s^n)\\
%&>\sum_{s^n\in T_{\epsilon'}}p(s^n)P(s^n\neq \hat{S}^n|S^n=s^n)\\
%&>\sum_{s^n\in T_{\epsilon'}}\frac{1}{2^{n(R+2\delta(\epsilon'))}}P(s^n\neq \hat{S}^n|S^n=s^n) \\
%&>\sum_{I\in[1:2^{nR}]}\frac{1}{2^{n(R+2\delta(\epsilon'))}}P(i^n\neq \hat{I}|I=i) \\
%\end{align*}
%for sufficiently large $n$. Note that $|T_{\epsilon'}|>2^{nR}$.
%\end{remark}

\subsection{General scenarios} 
In this subsection, we generalize the unified framework in Section \ref{subsec:framework_p2p} to general $N$-node networks. In our unified framework for $N$ nodes, we define an $N$-node acyclic discrete memoryless network (ADMN) $(\mathcal{X}_1,\ldots,\mathcal{X}_N$, $\mathcal{Y}_1,\ldots, \mathcal{Y}_N,\prod_{k=1}^Np(y_k|y^{k-1},x^{k-1}))$, which consists of a set of alphabet pairs $(\mathcal{X}_k, \mathcal{Y}_k)$, $k\in [1:N]$ and a collection of conditional pmfs $p(y_k|y^{k-1},x^{k-1})$, $k\in [1:N]$. Here, $Y_k$ and $X_k$ represent any information that comes into and goes out of node $k$, respectively. $Y_k$ can be a channel output, message, source, non-causal state information, and any combination of those. $X_k$ can be a channel input, message estimate, reconstructed source, action for generating channel state, and any combination of those. Next, $p(y_k|y^{k-1},x^{k-1})$ signifies the correlation betweeen information prior to node $k$ and information received at node $k$. It can capture channel distribution possibly with states, correlation between distributed sources, and complicated network-wide correlation among sources and channels.

In this network, information flows in one direction and node operations are sequential. Let $n$ denote the number of channel uses. First, $Y_1^n$ is generated according to $\prod_{i=1}^n p(y_{1,i})$ and then node 1 processes $X_1^n$ based on $Y_1^n$. Next, $Y_2^n$ is generated according to $\prod_{i=1}^n p(y_{2,i}|x_{1,i},y_{1,i})$ and then node 2 encodes $X_2^n$ based on $Y_2^n$. Similarly, $Y_k^n$ is generated according to $\prod_{i=1}^np(y_{k,i}|x^{k-1}_i,y^{k-1}_i)$ and node $k$ encodes $X_k^n$ based on $Y_k^n$ for $k\in [1:N]$.  Clearly, any layered network \cite{AvestimehrDiggaviTse:11} or noncausal network (without infinite loop) \cite{ElGamalHassanpourMammen:07} possibly with noncausal state or side information is represented as an ADMN. Furthermore, any strictly causal (usual discrete memoryless network with relay functions having one sample delay) or causal network (relays without delay \cite{ElGamalHassanpourMammen:07})  with blockwise operations can be represented as an ADMN by unfolding the network. Note that our unified achievability theorem (Theorem \ref{thm:main}) still applies to the unfolded network. Therefore, considering only \emph{acyclic} DMN (ADMN) in our unified approach is without loss of generality while greatly simplifying our unification approach. In the following subsection, we show several known examples represented by an ADMN. 

Achievability is specified using a target joint distribution $p^*(x^N,y^N)$, which is shortly denoted as $p^*$. For a set of node processing functions $Y_k^n\rightarrow X_k^n$, $k=1,\ldots, N$, the $\epsilon$-probability of error is defined as $P_e^{(n)}(p^*, \epsilon)=P((X_{[1:N]}^n,Y_{[1:N]}^n) \notin \mathcal{T}_{\epsilon}^{(n)} )$, where the typical set $\mathcal{T}_{\epsilon}^{(n)}$ is defined with respect to $p^*$. We say the target distribution $p^*$ is achievable if there exists a sequence of node processing functions $Y_k^n\rightarrow X_k^n$, $k=1,\ldots, N$, such that $\lim_{n\rightarrow \infty} P_e^{(n)}(p^*,\epsilon)=0$ for any $\epsilon>0$. We note that $p^*$ unifies diverse network demands and constaints. It can be used for designating the source--destination relationship and for imposing distortion and cost constraints. 

\subsection{Examples}
In this subsection, we represent some network information theory problems by an ADMN and a target distribution $p^*$ such that the achievability of $p^*$ implies the achievability of the original problem. Let us first consider some examples of single-hop networks.
\begin{example}[Multiple access channels \cite{liao_thesis}] For multiple access channel problem with rates $R_1$ and $R_2$, we choose $N=3$, $H(Y_1)=R_1$, $p(y_2|x_1,y_1)=p(y_2)$, $H(Y_2)=R_2$, $p(y_3|x_1,x_2,y_1,y_2)=p(y_3|x_1,x_2)$, and $p^*$ such that $X_3=(Y_1,Y_2)$. \label{ex:mac}
\end{example}

\begin{example} [Distributed lossy compression \cite{Berger:77}, \cite{tung_thesis}]
For distributed lossy compression problem with rate--distortion pairs $(R_1,d_1)$ and $(R_2,d_2)$, we let $N=3$,  $p(y_2|x_1,y_1)=p(y_2|y_1)$, $\mathcal{Y}_3=\mathcal{Y}_{3,1}\times \mathcal{Y}_{3,2}$,  $p(y_3|x_{[1:2]},y_{[1:2]})=p(y_{3,1}|x_1)p(y_{3,2}|x_2)$ such that $\max_{p(x_1)}I(X_1;Y_{3,1})=R_1$ and $\max_{p(x_2)}I(X_2;Y_{3,2})=R_2$, and $p^*$ such that $\E[d_k(Y_k,X_{3})]\leq \frac{d_k}{1+\epsilon}$ for $k\in [1:2]$, where $d_k(\cdot,\cdot)\ge 0$ is a distortion  measure between two arguments and $\epsilon\rightarrow 0$. \label{ex:distribuedlc}
\end{example}

\begin{example}[Broadcast channels \cite{Marton:79}] For broadcast channel problem with rates $R_1$ and $R_2$, we choose $N=3$, $Y_1=(Y_{1,1},Y_{1,2})$, $p(y_{1,1},y_{1,2})=p(y_{1,1})p(y_{1,2})$, $H(Y_{1,1})=R_1$, $H(Y_{1,2})=R_2$, $p(y_2|y_1,x_1)=p(y_2|x_1)$, $p(y_3|y_{[1:2]},x_{[1:2]})=p(y_3|x_1,y_2)$, and $p^*$ such that $X_2=Y_{1,1}$, $X_3=Y_{1,2}$. \label{ex:bc}
\end{example}

\begin{example}[Multiple description coding \cite{CoverElGamal:82}] For multiple description coding with rates $(R_1, R_2)$ and distortion triples $d_1, d_2$, and $d_3$, we choose $N=4$, $\mathcal{X}_1=\mathcal{X}_{1,1}\times \mathcal{X}_{1,2}$, $p(y_2|x_1,y_1)=p(y_2|x_{1,1})$ such that $\max_{p(x_{1,1})}I(X_{1,1};Y_2)=R_1$, $p(y_3|x_{[1:2]},y_{[1:2]})=p(y_3|x_{1,2})$ such that $\max_{p(x_{1,2})}I(X_{1,2};Y_3)=R_2$, $Y_4=(Y_2,Y_3)$, and $p^*$ such that $\E[d_k(Y_1,X_k)]\leq \frac{d_k}{1+\epsilon}$ for $k\in [2:4]$, where $d_k(\cdot,\cdot)\ge 0$ is a distortion  measure between two arguments and $\epsilon\rightarrow 0$. \label{ex:mdc}
\end{example}

Next, we show an example of multi-hop networks. 
\begin{example}[Relay channels] 
Consider a three-node relay channel $(\mathcal{X}_1,\mathcal{X}_2,\mathcal{Y}_2,\mathcal{Y}_3$, $p(y_2,y_3|x_1,x_2))$ illustrated in Fig. \ref{fig:three_relay}-(a), where node 1 wishes to send a message to node 3 with the help of node 2.  Let $R$ and $n$ denote the rate and the number of channel uses, respectively, and let $I$ and $\hat{I}$ denote the message of rate $R$ at node 1 and the estimated message at node 3, respectively. Then, the node processing function at node 1 is a mapping from $[1:2^{nR}]$ to $\mathcal{X}_1^n$, the node processing function at node 2 at time $i\in [1:n]$ is a mapping from $\mathcal{Y}_2^{i-1}$ to $\mathcal{X}_{2}$, and the node processing function at node 3 is a mapping from $\mathcal{Y}_3^n$ to $[1:2^{nR}]$. The probability of error is defined as $P_e^{(n)}=P(\hat{I} \neq I )$ and a rate $R$ is said to be achievable if there exists a sequence of node processing functions such that $\lim_{n\rightarrow \infty} P_e^{(n)}=0$. 

If we assume a blockwise operation at each node, we can represent this network as an ADMN by unfolding the network. Assume $B$ transmission blocks, each consisting of $n$ channel uses. 
In the unfolded network illustrated in Fig. \ref{fig:three_relay}-(b), we have $3(B+1)$ nodes and the operation of node $(k,b), k\in [1:3], b\in [1:B+1]$ corresponds to that of node $k$ of the original network at the end of block $b-1$. To reflect the fact that node $(k,b+1)$ is originally the same node as node $(k,b)$, we assume that node $(k,b)$ has an orthogonal  link of sufficiently large rate to node $(k,b+1)$, which is represented as a dashed line in Fig. \ref{fig:three_relay}-(b). Because this unfolded network is acyclic, it can be represented as an ADMN and $p^*$ can be chosen accordingly. 
\end{example}
 
\begin{figure}[t]
 \centering\subfigure[]
  {\includegraphics[width=60mm]{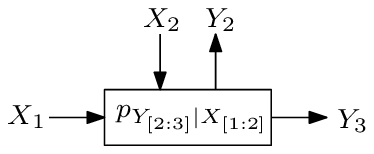}}\\
 \subfigure[]
  {\includegraphics[width=140mm]{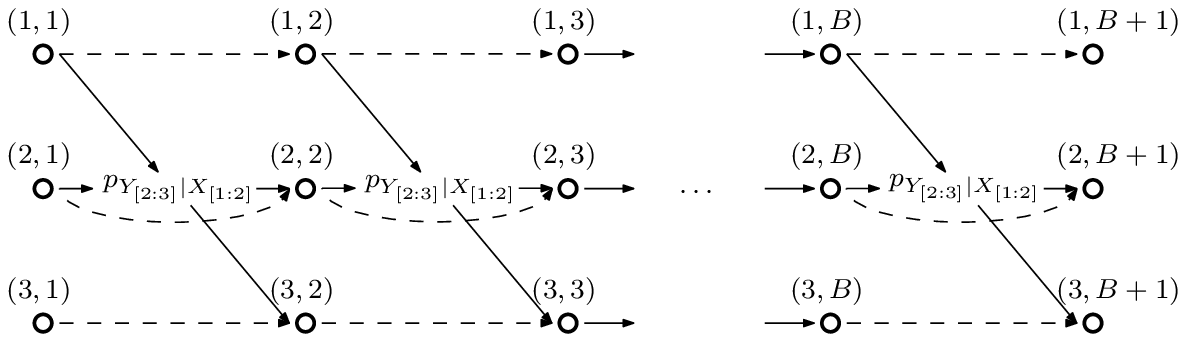}}\\
 \caption{Three-node relay network is illustrated in (a). The corresponding unfolded network is shown in (b). In the unfolded network, the operation of node $(k,b)$ corresponds to that of node $k$ of the original network at the end of block $b-1$.} \label{fig:three_relay}
\end{figure}

\subsection{Introduction of a virtual node}
The following two propositions are obtained by introducing a virtual node in an ADMN, which turn out to be useful in recovering some known achievability results in Section \ref{sec:main}. 
 
\begin{proposition} \label{lemma:virtual_node}
Consider an $N$-node ADMN \[(\mathcal{X}_1,\ldots,\mathcal{X}_N, \mathcal{Y}_1,\ldots, \mathcal{Y}_N, \prod_{k=1}^N p(y_k|y^{k-1},x^{k-1}))\] and target distribution $p^*$. 
For some $v\in [1:N]$ and finite set $\mathcal{Y}$, assume $p(y|x^{v-1},y^{v})$ for $y\in \mathcal{Y}$. Then, we have 
\begin{align*}
p(y|x^{v-1},y^{v-1})&=\sum_{y_v}p(y_v|x^{v-1},y^{v-1})p(y|x^{v-1},y^{v})\\
p(y_{v}|x^{v-1},y^{v-1},y)&=\frac{p(y_v|x^{v-1},y^{v-1})p(y|x^{v-1},y^{v})}{\sum_{y_v}p(y_v|x^{v-1},y^{v-1})p(y|x^{v-1},y^{v})}.
\end{align*}

Now, consider an $(N+1)$-node ADMN \[(\mathcal{X}_1',\ldots,\mathcal{X}_{N+1}', \mathcal{Y}_1',\ldots, \mathcal{Y}_{N+1}', \prod_{k=1}^{N+1} p'(y_k|y^{k-1},x^{k-1}))\] and target distribution $p'^*$ such that 
\begin{align*}
\mathcal{X}_k'=\begin{cases}
\mathcal{X}_k \mbox{ if } k<v \\
\emptyset \mbox{ if } k=v \\
\mathcal{X}_{k-1} \mbox{ if } k>v
\end{cases}, ~~~
\mathcal{Y}_k'=\begin{cases}
\mathcal{Y}_k \mbox{ if } k<v \\
\mathcal{Y} \mbox{ if } k=v \\
\mathcal{Y}_{k-1} \mbox{ if } k>v
\end{cases},
\end{align*}
\begin{align*}
p'(y_{k}|x^{k-1},y^{k-1})=\begin{cases}
p_{Y_{k}|X^{k-1},Y^{k-1}}(y_{k}|x^{k-1},y^{k-1}) \mbox{ if } k<v\\
p_{Y|X^{k-1},Y^{k-1}}(y_{k}|x^{k-1},y^{k-1}) \mbox{ if } k=v\\
p_{Y_{v}|X^{v-1},Y^{v-1},Y}(y_{k}|x^{v-1},y^{v-1},y_v) \mbox{ if } k=v+1\\
p_{Y_{k-1}|X^{k-2},Y^{k-2}}(y_{k}|x_{[1:k-1]\setminus \{v\}}, y_{[1:k-1]\setminus \{v\}}) \mbox{ if } k>v+1
\end{cases},
\end{align*}
and \[\sum_{x_v, y_v}p'^*(x^{N+1},y^{N+1})=p^*(x^{v-1},x_{v+1}^{N+1},y^{v-1},y_{v+1}^{N+1}).\]

Then, if $p'^*$ is achievable for the $(N+1)$-node ADMN, $p^*$ is achievable for the $N$-node ADMN. 
\end{proposition}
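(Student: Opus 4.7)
The plan is a straightforward reduction: given a sequence of node processing functions achieving $p'^*$ on the $(N+1)$-node ADMN, I will transfer them back to the $N$-node ADMN by identifying each original node with its counterpart in the larger network, and then verify that (i) the induced joint distribution on $(X^{N,n}, Y^{N,n})$ equals the marginal of the distribution on $(X'^{N+1,n}, Y'^{N+1,n})$ after summing out the virtual node's variables, and (ii) joint typicality passes to this marginal.

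\textbf{Step 1 (Transfer of the code).} For each original node $k < v$, let it use the processing function of node $k$ in the $(N+1)$-node scheme; this is well-defined since $\mathcal{X}_k' = \mathcal{X}_k$ and $\mathcal{Y}_k' = \mathcal{Y}_k$. For each original node $k \geq v$, let it use the processing function of node $k+1$ in the $(N+1)$-node scheme; this is well-defined since $\mathcal{X}_{k+1}' = \mathcal{X}_k$ and $\mathcal{Y}_{k+1}' = \mathcal{Y}_k$. The virtual node $v$ of the larger network has $\mathcal{X}_v' = \emptyset$, so no processing function needs to be transferred for it.

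\textbf{Step 2 (Distribution matching).} Write the one-symbol induced joint distribution on the $(N+1)$-node ADMN as
\begin{equation*}
p'(x'^{N+1}, y'^{N+1}) = \prod_{k=1}^{N+1} p'(y_k'|x'^{k-1}, y'^{k-1}) \cdot \prod_{k \neq v} \mathbf{1}_{x_k' = f_k'(y_k')}.
\end{equation*}
Under the relabeling $\tilde{x}_k = x_k', \tilde{y}_k = y_k'$ for $k < v$ and $\tilde{x}_k = x_{k+1}', \tilde{y}_k = y_{k+1}'$ for $k \geq v$, every conditional $p'(y_k'|\cdot)$ for $k \notin \{v, v+1\}$ coincides by definition with the corresponding conditional $p(\tilde{y}_{k'}|\cdot)$ in the original ADMN, and the same holds for the indicator terms. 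The only nontrivial computation is the two-term product at $k \in \{v, v+1\}$, for which I would sum over $y_v' = y$ and apply the identity $p(y|x^{v-1},y^{v-1}) p(y_v|x^{v-1},y^{v-1},y) = p(y_v|x^{v-1},y^{v-1}) p(y|x^{v-1},y^v)$ that is stated explicitly in the proposition. This gives
\begin{equation*}
\sum_{y} p(y|x^{v-1},y^{v-1}) p(y_v|x^{v-1},y^{v-1},y) = p(y_v|x^{v-1},y^{v-1}) \sum_{y} p(y|x^{v-1},y^v) = p(y_v|x^{v-1},y^{v-1}),
\end{equation*}
which is precisely the conditional at node $v$ in the original ADMN. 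Extending to $n$ symbols by memorylessness, the marginal of the larger network's distribution over $Y_v'^n$ equals the distribution induced on the original ADMN by the transferred code.

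\textbf{Step 3 (Typicality passes to the marginal).} By assumption, $P((X'^{N+1,n}, Y'^{N+1,n}) \notin \mathcal{T}_{\epsilon}^{(n)}(p'^*)) \to 0$ for every $\epsilon > 0$. Joint typicality with respect to $p'^*$ implies joint typicality of any subset of coordinates with respect to the corresponding marginal distribution, a standard property of the Orlitsky--El Gamal--Kim typical set. Marginalizing over $(x_v', y_v')$, and using Step~2 to identify the marginal of $p'^*$ with $p^*$ and the marginal distribution on the larger network with the distribution the transferred scheme induces on the original ADMN, we obtain $P((X^{N,n}, Y^{N,n}) \notin \mathcal{T}_{\epsilon}^{(n)}(p^*)) \to 0$, i.e., the transferred scheme achieves $p^*$.

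The statement looks delicate because one might worry about the ordering of the virtual node in the processing timeline, but since it emits no symbols and none of the real nodes observe $y_v'$ (each node $k$ only sees $Y_k^n$), introducing it is purely a change of measure on the joint distribution, and the only real content is the chain-rule identity of Step~2. That identity is handed to us in the statement, so no step should be technically difficult; the main conceptual work is setting up the relabeling correctly and keeping the empty alphabet $\mathcal{X}_v' = \emptyset$ from causing notational confusion.
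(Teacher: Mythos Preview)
Your proposal is correct and follows the same approach as the paper, which dispatches the proposition in one sentence by observing that the $(N+1)$-node ADMN is obtained from the $N$-node ADMN by inserting a virtual node (null input, output $Y$) between nodes $v-1$ and $v$ and reindexing. Your Steps~1--3 simply unpack this observation: since the virtual node emits nothing and no later conditional $p'(y_k|\cdot)$ for $k>v+1$ depends on $y_v'$, marginalizing over the virtual node's output recovers exactly the original network's law, and typicality with respect to $p'^*$ implies typicality of the relabeled subvector with respect to its marginal $p^*$.
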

\begin{proof}
The proof is straightforward from the observation that the $(N+1)$-node ADMN is obtained by introducing a virtual node, whose channel output is $Y$ and channel input is null, between nodes $v-1$ and $v$ in the $N$-node ADMN and reindexing the nodes.
\end{proof}

\begin{proposition} \label{lemma:common_part}
Consider an $N$-node ADMN \[(\mathcal{X}_1,\ldots,\mathcal{X}_N, \mathcal{Y}_1,\ldots, \mathcal{Y}_N, \prod_{k=1}^N p(y_k|y^{k-1},x^{k-1}))\] such that $p(y_{v_1}|y^{v_1-1},x^{v_1-1})=p(y_{v_1})$ and $p(y_{v_2}|y^{v_2-1},x^{v_2-1})=p(y_{v_2}|y_{v_1})$ for some $v_1\in [1:N]$, $v_2\in [1:N]$, $v_1<v_2$. Let $Y$ denote the common part of two random variables $Y_{v_1}$ and $Y_{v_2}$, where the common part of two discrete memoryless sources is defined in \cite{GacsKorner:73}, \cite{Witsenhausen:75}

Now, consider an $(N+1)$-node ADMN \[(\mathcal{X}_1',\ldots,\mathcal{X}_{N+1}', \mathcal{Y}_1',\ldots, \mathcal{Y}_{N+1}', \prod_{k=1}^{N+1} p'(y_k|y^{k-1},x^{k-1}))\] and target distribution $p'^*$ such that 
\begin{align*}
\mathcal{X}_k'=\begin{cases}
\mathcal{X}_k \mbox{ if } k<v_1 \\
\mathcal{X} \mbox{ if } k=v_1 \\
\mathcal{X}_{k-1} \mbox{ if } k>v_1
\end{cases}, ~~~
\mathcal{Y}_k'=\begin{cases}
\mathcal{Y}_k \mbox{ if } k<v_1 \\
\mathcal{Y} \mbox{ if } k=v_1 \\
\mathcal{Y}_{k-1}\times \mathcal{X} \mbox{ if } k=v_1+1 \mbox{ or } k=v_2+1 \\
\mathcal{Y}_{k-1} \mbox{ otherwise } 
\end{cases},
\end{align*}
\begin{align*}
p'(y_{k}|x^{k-1},y^{k-1})=
\begin{cases}
p_{Y_{k}|X^{k-1},Y^{k-1}}(y_{k}|x^{k-1},y^{k-1}) \mbox{ if } k<v_1\\
p_{Y}(y_{k}) \mbox{ if } k=v_1\\
p_{Y_{v_1}}(y_{k,1})\mathbbm{1}_{y_{k,2}=x_{v_1}} \mbox{ if } k=v_1+1\\
p_{Y_{v_2}|Y_{v_1}}(y_{k,1}|y_{v_1+1,1})\mathbbm{1}_{y_{k,2}=x_{v_1}}  \mbox{ if } k=v_2+1\\
p_{Y_{k-1}|X^{k-2},Y^{k-2}}(y_{k}|x_{[1:k-1]\setminus \{v_1\}}, y_{[1:k-1]\setminus \{v_1\}}) \mbox{ otherwise } \end{cases}
\end{align*}
and \[\sum_{x_{v_1}, y_{v_1}}p'^*(x^{N+1},y^{N+1})=p^*(x^{v_1-1},x_{v_1+1}^{N+1},y^{v_1-1},y_{v_1+1}^{N+1}),\]
where $y_k=(y_{k,1}, y_{k,2})$ for $k=v_1+1$ or $k=v_2+1$ and $|\mathcal{X}|$ can be arbitrarily large. 

Then, if $p'^*$ is achievable for the $(N+1)$-node ADMN, $p^*$ is achievable for the $N$-node ADMN.
\end{proposition}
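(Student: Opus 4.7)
The overall plan is a direct simulation: starting from any code achieving $p'^*$ on the $(N+1)$-node ADMN, construct node processing functions for the $N$-node ADMN so that the induced joint law on $(X^n_{[1:N]},Y^n_{[1:N]})$ matches the marginal of $p'^*$, which by hypothesis equals $p^*$. The key enabler is the defining property of the common part: there exist deterministic functions $f_1:\mathcal{Y}_{v_1}\to\mathcal{Y}$ and $f_2:\mathcal{Y}_{v_2}\to\mathcal{Y}$ with $Y=f_1(Y_{v_1})=f_2(Y_{v_2})$ almost surely, so nodes $v_1$ and $v_2$ in the $N$-node network can each compute $Y^n$ symbolwise from their own observation and agree almost surely on the result.

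Concretely, let $\phi$ denote the processing function of the virtual node (position $v_1$ of the $(N+1)$-node code) and $\psi_k$ the processing function of node $k$ of that code. I would set node $k$ of the $N$-node ADMN to use $\psi_k$ for $k<v_1$, and $\psi_{k+1}$ for $k>v_1$ with $k\neq v_2$. At node $v_1$, upon receiving $Y_{v_1}^n$, first compute $Y^n=f_1^{\otimes n}(Y_{v_1}^n)$ symbolwise, then $X^n=\phi(Y^n)$, and finally output $X_{v_1}^n:=\psi_{v_1+1}(Y_{v_1}^n,X^n)$, with the pair $(Y_{v_1}^n,X^n)$ playing the role of $(Y_{v_1+1,1}^n,Y_{v_1+1,2}^n)$ in the $(N+1)$-node network; node $v_2$ is modified analogously using $f_2$, $\phi$, and $\psi_{v_2+1}$. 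Because $f_1(Y_{v_1})=f_2(Y_{v_2})$ almost surely, the $X^n$ computed at $v_1$ equals that computed at $v_2$ almost surely, exactly matching the $(N+1)$-node constraint that the indicators $\mathbbm{1}_{y_{v_1+1,2}=x_{v_1}}$ and $\mathbbm{1}_{y_{v_2+1,2}=x_{v_1}}$ deliver the same virtual output to both successor nodes.

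The verification then reduces to checking that the joint law of the simulated $(X^n_{[1:N]},Y^n_{[1:N]})$ equals the marginal of the $(N+1)$-node joint law over the same variables; once this is in hand, the hypothesized marginal condition forces that joint law to be $p^*$, and $P_e^{(n)}(p'^*,\epsilon)\to 0$ transfers immediately to $P_e^{(n)}(p^*,\epsilon)\to 0$. The main obstacle I foresee is exactly this distributional bookkeeping: one must verify that replacing the $(N+1)$-node network's virtual draw $Y^n\sim p_Y^n$ by the locally computed $f_1^{\otimes n}(Y_{v_1}^n)$ does not disturb the marginal on the original variables. The resolution is that the coupling $Y=f_1(Y_{v_1})=f_2(Y_{v_2})$ is intrinsic to the original $N$-node distribution (by definition of the common part), and the $(N+1)$-node channels at positions $v_1+1$ and $v_2+1$ are engineered so that the conditional law of $(Y_{v_1+1,1},Y_{v_2+1,1})$ given the virtual $Y$ reproduces the conditional law of $(Y_{v_1},Y_{v_2})$ given their common part in the $N$-node network; hence the coupling I impose in the simulation is precisely the one already encoded in $p'^*$, and no extra correlation is introduced.
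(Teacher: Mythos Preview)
Your overall simulation strategy---having nodes $v_1$ and $v_2$ each locally compute the common part and then run the virtual node's map $\phi$---is exactly the idea behind the paper's one-line proof, and your identification of the distributional bookkeeping as the crux is well taken. However, your proposed resolution of that obstacle is incorrect. You assert that the $(N+1)$-node channels at positions $v_1+1$ and $v_2+1$ are ``engineered so that the conditional law of $(Y_{v_1+1,1},Y_{v_2+1,1})$ given the virtual $Y$ reproduces the conditional law of $(Y_{v_1},Y_{v_2})$ given their common part.'' Reading the definition, the channel at $v_1+1$ draws $Y_{v_1+1,1}$ from the \emph{marginal} $p_{Y_{v_1}}$, with no dependence on the virtual $Y$; likewise $Y_{v_2+1,1}$ is drawn from $p_{Y_{v_2}\mid Y_{v_1}}(\,\cdot\mid Y_{v_1+1,1})$. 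Hence in the $(N+1)$-node ADMN the virtual $Y$ is \emph{independent} of $(Y_{v_1+1,1},Y_{v_2+1,1})$, so the conditional law of the latter given $Y$ is simply $p_{Y_{v_1},Y_{v_2}}$, not $p_{Y_{v_1},Y_{v_2}\mid Y}$. Your simulation replaces the independent virtual $Y^n$ by $\tilde Y^n=f_1^{\otimes n}(Y_{v_1}^n)$, which is a deterministic function of $Y_{v_1}^n$; this genuinely introduces a correlation absent from the $(N+1)$-node model, so the joint laws do \emph{not} match as you claim, and the transfer of $P_e^{(n)}\to 0$ does not follow from equality of distributions.

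In fairness, the paper's own proof is a single sentence (``both nodes $v_1$ and $v_2$ observe the common part $Y$ and hence can share any function of $Y^n$'') that does not confront this issue either; it treats the proposition as an operational observation rather than a distributional identity. If you want to make the argument rigorous along your lines, you cannot rely on matching marginals; you would instead need to argue that the $(N+1)$-node code's success probability, conditioned on the virtual sequence $Y'^n=y^n$, is high for $p_Y^n$-almost all typical $y^n$, and in particular for the specific realization $\tilde Y^n=f_1^{\otimes n}(Y_{v_1}^n)$ that the $N$-node nodes compute---an averaging/uniformity step your proposal does not supply.
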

\begin{proof}
Note that in the $N$-node ADMN, both nodes $v_1$ and $v_2$ observe the common part $Y$ and hence can share any function of $Y^n$. Thus, we can introduce a virtual node whose channel output is $Y$ and channel input is $X$ and assume that $X^n$ is available at nodes $v_1$ and $v_2$.  
\end{proof}

%\newpage
\section{Unified Coding Theorem} \label{sec:main}
In this section, we propose a unified coding scheme and present the main theorem of this paper, followed by various examples that show how to utilize our results. Our scheme consists of the following ingredients:  1) superposition, 2) simultaneous nonunique decoding, 3) simultaneous compression, and 4) symbol-by-symbol mapping. These are tweaked and combined in a special way to enable unification of many previous approaches. Let us first briefly explain the proposed scheme and introduce related coding parameters. Detailed description of our scheme is given in the proof of Theorem \ref{thm:main}.

\begin{itemize} 
\item Codebook generation: In our coding scheme, covering codebooks are used to compress information that each node observes and decodes. 
We generate $\nu$ covering codebooks $\mathcal{C}_1,\ldots,\mathcal{C}_{\nu}$. Let $\mathcal{U}_j$ for $j\in [1:\nu]$ denote the alphabet for the codeword symbol of $\mathcal{C}_j$. For indexing of codewords, we consider $\mu$ index sets $\mathcal{L}_1,\ldots, \mathcal{L}_{\mu}$, where $\mathcal{L}_j=[1:2^{nr_j}]$ for some $r_j\geq 0$ for each $j\in [1:\mu]$. We denote by $\Gamma_j\subseteq [1:\mu]$ the set of indices of $\mathcal{L}$'s associated with $\mathcal{C}_j$ in a way that each codeword in $\mathcal{C}_j$ is indexed by the vector $l_{\Gamma_j}\in \prod_{i\in \Gamma_j} \mathcal{L}_i$ and hence $\mathcal{C}_j$ consists of $2^{n\sum_{i\in \Gamma_j}r_i}$ codewords, i.e., $\mathcal{C}_j=\{u_j^n(l_{\Gamma_j}): l_{\Gamma_j}\in \prod_{i\in \Gamma_j} \mathcal{L}_i\}$.  %An example of $\mu$, $\nu$, and $\Gamma_j$, $j\in [1:\mu]$ is illustrated in Fig. \ref{fig:Gamma_ex}. 
Each codebook is constructed allowing superposition coding. Let $A_j\subseteq [1:\nu], j\in [1:\nu]$ denote the set of the indices of $\mathcal{C}$'s on which $\mathcal{C}_j$ is constructed by superposition.

\item Node operation: Node $k\in [1:N]$ operates according to the following three steps as illustrated in Fig. \ref{fig:scheme_flow}. 
\begin{itemize}
\item Simultaneous nonunique decoding:  After receiving $y_k^n$, node $k$ decodes some covering codewords of previous nodes simultaneously, where some are decoded uniquely and the others are decoded non-uniquely. 
We denote by $D_k\subseteq [1:\nu]$ and $B_k\subseteq [1:\nu]$ the sets of the indices of $\mathcal{C}$'s whose codewords are decoded uniquely and non-uniquely, respectively, at node $k$. 

\item Simultaneous compression: After decoding, node $k$ finds covering codewords $u_{W_k}^n$ simultaneously according to a conditional pmf $p(u_{W_k}|u_{D_k}, y_k)$ that carry some information about the received channel output sequence $y_k^n$ and uniquely decoded codewords $u_{D_k}^n$, where $W_k\subseteq [1:\nu]$ denotes the set of the indices of  $\mathcal{C}$'s  used for compression. 

\item Symbol-by-symbol mapping: After decoding and compression, node $k$ generates $x_k^n$ by a symbol-by-symbol mapping from uniquely decoded codewords $u_{D_k}^n$,  covered codewords $u_{W_k}^n$, and received channel output sequence $y_k^n$. Let $x_k(u_{D_k}, u_{W_k}, y_k)$ denote the function used for symbol-by-symbol mapping. %Then, $x_k^n$ is generated as $x_{k,i}=x_k(u_{ \uin{D}_k,i}(\hat{l}_{\lin{D}_k,k}), u_{ \uin{W}_k,i}(\hat{l}_{\lin{D}_k,k}, l_{\lin{W}_k})$, $ y_{k,i})$ for $i\in [1:n]$. 
\end{itemize}

\end{itemize}

\begin{figure}[t]
 \centering
  {
   \includegraphics[width=110mm]{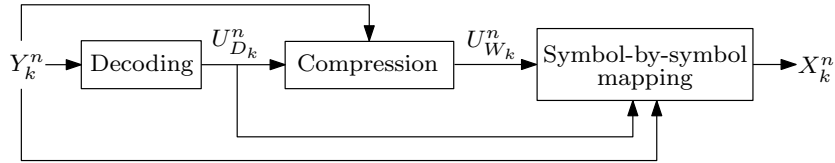}}
  \caption{Node $k\in [1:N]$ operates in three steps: 1) simultaneous nonunique decoding, 2) simultaneous compression, and 3) symbol-by-symbol mapping.} \label{fig:scheme_flow}
\end{figure}

In summary, our scheme requires the following set $\omega$ of coding parameters, where some constraints are added to  make the aforementioned codebook generation and node operation proper:
\begin{enumerate}
\item positive integers $\mu$ and $\nu$
\item alphabets $\mathcal{U}_j, j\in [1:\nu]$
\item $\mu$-rate tuple $(r_1,\ldots, r_{\mu})$
\item sets $\Gamma_j\subseteq [1:\mu]$, $A_j\subseteq [1:\nu]$, $\uin{D}_k\subseteq \uin{W}^{k-1}$, $\uin{B}_k\subseteq \uin{W}^{k-1}\setminus \uin{D}_k$, and  $\uin{W}_k\subseteq [1:\nu]\setminus \uin{W}^{k-1}$  for $k\in [1:N]$ and $j\in [1:\nu]$ that satisfy 
\begin{enumerate}[{A}-1]
\item $\Gamma_{\uin{W}_k}\setminus \Gamma_{\uin{D}_k}$'s are disjoint,
\item $\Gamma_{A_j}\subseteq \Gamma_j$ and  $j'<j$ if $j'\in A_j$,
\item $A_{\uin{W}_k}\subseteq \uin{W}_k\cup \uin{D}_k$, $A_{\uin{B}_k}\subseteq \uin{D}_k\cup \uin{B}_k$, and $A_{\uin{D}_k}\subseteq \uin{D}_k$.
\end{enumerate}
\item a set of conditional pmfs $p(u_{\uin{W}_k}|u_{\uin{D}_k},y_k)$ and functions $x_k(u_{\uin{D}_k},u_{\uin{W}_k},y_k)$ for $k\in [1:N]$ such that $p(x_{[1:N]},y_{[1:N]})$ induced by 
\begin{align}
\prod_{k=1}^N p(y_k|y^{k-1},x^{k-1})p(u_{\uin{W}_k}|u_{\uin{D}_k},y_k)\mathbbm{1}_{x_k=x_k(u_{\uin{D}_k},u_{\uin{W}_k},y_k)} \label{eqn:omega_distr}
\end{align}
is the same as the target distribution $p^*(x_{[1:N]},y_{[1:N]})$.
\end{enumerate}

Now, we are ready to present our main theorem, which gives a sufficient condition for achievability using the aforementioned scheme. For an ADMN $(\mathcal{X}_1,\ldots,\mathcal{X}_N, \mathcal{Y}_1,\ldots, \mathcal{Y}_N, \prod_{k=1}^N p(y_k|y^{k-1},x^{k-1}))$ and target distribution $p^*$, let $\Omega(\mathcal{X}_1,\ldots,\mathcal{X}_N, \mathcal{Y}_1,\ldots, \mathcal{Y}_N, \prod_{k=1}^N p(y_k|y^{k-1},x^{k-1}), p^*)$, shortly denoted as $\Omega(p^*)$ or $\Omega$,  denote the set of all possible $\omega$'s. 
\begin{theorem} \label{thm:main}
For an  $N$-node ADMN, $p^*$ is achievable if there exists $\omega\in \Omega$ such that for $1\leq k \leq N$ 
\begin{align}
\sum_{j\in \lset{S}_k} r_j &<\sum_{j\in  \uset{S}_k} I(U_j;U_{ \uset{S}_k[j]\cup  \uset{S}_k^c},Y_k|U_{A_j}) \label{eqn:main_sk}\\ 
\sum_{j\in \lset{T}_k} r_j&>\sum_{j\in  \uset{T}_k} I(U_j;U_{ \uset{T}_k[j]\cup \uin{D}_k},Y_k|U_{A_j})\label{eqn:main_tk}
\end{align}
for all $\lset{S}_k\subseteq \lin{D}_k\cup \lin{B}_k$ such that $\lset{S}_k\cap \lin{D}_k\neq \emptyset$ and for all $\lset{T}_k\subseteq \lin{W}_k$ such that $\lset{T}_k\neq \emptyset$, where $\lin{D}_k\triangleq \Gamma_{\uin{D}_k}$, $\lin{B}_k\triangleq \Gamma_{\uin{B}_k}\setminus \Gamma_{\uin{D}_k}$, $\lin{W}_k\triangleq \Gamma_{\uin{W}_k}\setminus \Gamma_{\uin{D}_k}$, 
\begin{align}
 \uset{S}_k&\triangleq \{j:j\in \uin{D}_k\cup \uin{B}_k, \Gamma_j\cap \lset{S}_k\neq \emptyset \}, \label{eqn:barSk}\\
 \uset{T}_k&\triangleq \{j:j\in \uin{W}_k, \Gamma_j\cap (\lset{T}_k\cup \lin{D}_k)^c=\emptyset \}. \label{eqn:barTk} 
% \\ &=\uin{W}_k\setminus \{j:j\in \uin{W}_k, \Gamma_j\cap \lset{T}_k^c\neq \emptyset \}. \nonumber
\end{align}
%and $ \uset{S}_k^c=(\uin{D}_k\cup \uin{B}_k)\setminus  \uset{S}_k$.
\end{theorem}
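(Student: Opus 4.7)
The plan is a random-coding achievability argument that mirrors the combinatorial structure of $\omega$. I would construct the $\nu$ codebooks sequentially for $j=1,\ldots,\nu$: each codeword $u_j^n(l_{\Gamma_j})$ is drawn conditionally i.i.d.\ given its parent codeword $u_{A_j}^n$ using the marginal of $p^*$ induced by (\ref{eqn:omega_distr}). Property A-2 ensures this superposition generation is well-defined, and property A-1 guarantees that the fresh sub-indices $\Gamma_{W_k}\setminus\Gamma_{D_k}$ at different nodes never collide. I would then bound the expected probability of error sequentially over $k=1,\ldots,N$, conditioning on the event that all earlier nodes produced tuples that are jointly typical with $p^*$, so the analysis at node $k$ reduces to (a) a compression (covering) error and (b) a decoding (packing) error.

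For the compression step, given $(u_{D_k}^n, y_k^n)$ typical, the question is when a vector $l_{\bar{W}_k\setminus\bar{D}_k}$ exists making $u_{W_k}^n$ jointly typical with $(u_{D_k}^n, y_k^n)$. Because the superposition relations inside $W_k$ are arbitrary, I would need to generalize the multivariate covering lemma: by a second-moment bound on the number of jointly typical codeword tuples, the probability that no typical choice exists vanishes whenever, for every nonempty $\bar{T}_k\subseteq\bar{W}_k$, the rate sum $\sum_{j\in\bar{T}_k}r_j$ exceeds the mutual-information sum on the right of (\ref{eqn:main_tk}). The set $T_k$ in (\ref{eqn:barTk}) enters naturally because only those codebooks in $W_k$ whose $\Gamma_j$-coordinates lie inside $\bar{T}_k\cup\bar{D}_k$ are truly being varied; the remaining codebooks factor out of the second-moment calculation conditional on $u_{D_k}^n$.

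The decoding step is the main obstacle. An error occurs if some alternative index vector $l'_{\bar{S}_k}\neq l_{\bar{S}_k}$, for some $\bar{S}_k\subseteq\bar{D}_k\cup\bar{B}_k$ with $\bar{S}_k\cap\bar{D}_k\neq\emptyset$, still produces codewords jointly typical with $y_k^n$. Under this alternative the codewords $u_j^n$ with $j\in S_k$ (see (\ref{eqn:barSk})) are re-randomized, while those with $j\in (D_k\cup B_k)\setminus S_k$ remain fixed. The standard packing lemma does \emph{not} apply: earlier nodes generated $x_m^n$, and hence $y_k^n$, via symbol-by-symbol mappings that depend on covered codewords $u_{W_m}^n$ outside $D_k\cup B_k$, inducing correlation between $y_k^n$ and codewords nominally regarded as independent. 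The resolution I would pursue exploits property A-3: for every $j\in S_k$ the parent set $A_j$ lies in $D_k\cup S_k$, so $\{u_j^n:j\in S_k\}$ forms a self-contained superposition sub-tree. Conditioning on all codewords outside this sub-tree together with $y_k^n$, the re-randomized codewords are drawn freshly from their superposition-conditioned distribution and are thus independent of $y_k^n$. A conditional joint-typicality bound then yields a per-alternative probability of $\prod_{j\in S_k}2^{-n(I(U_j;U_{S_k[j]\cup S_k^c},Y_k\mid U_{A_j})-\delta(\epsilon))}$, and a union bound over the $2^{n\sum_{j\in\bar{S}_k}r_j}$ alternatives gives exactly (\ref{eqn:main_sk}).

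Once decoding and compression succeed at every node, the symbol-by-symbol mapping $x_k(u_{D_k},u_{W_k},y_k)$ composed with the channel $p(y_{k+1}|y^k,x^k)$ preserves joint typicality by the conditional typicality lemma, so $(X_{[1:N]}^n,Y_{[1:N]}^n)\in\mathcal{T}_\epsilon^{(n)}(p^*)$ with vanishing error probability after a union bound over $k$ and over the subset choices $(\bar{S}_k,\bar{T}_k)$, followed by a selection of $\epsilon\to 0$. The hardest technical content is in the decoding step: making the self-contained-sub-tree independence argument rigorous for an arbitrary superposition tree, and verifying that the resulting mutual-information expression matches (\ref{eqn:main_sk}), is the new proof element the paper advertises as going beyond standard packing/covering analyses.
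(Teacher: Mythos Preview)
Your overall architecture matches the paper's, but the decoding step contains a genuine gap. You correctly identify the obstacle: because earlier nodes choose their covering indices $L_{\bar{W}_m}$ by scanning the \emph{entire} covering codebook for the smallest jointly typical index, the channel output $Y_k^n$ is correlated with \emph{every} codeword in those codebooks, including the alternative $u_j^n(l'_{\Gamma_j})$ you want to treat as ``fresh.'' Your proposed resolution---invoke A-3, observe that $\{u_j^n: j\in S_k\}$ forms a self-contained superposition sub-tree, and then condition on everything outside it together with $y_k^n$---does not break this dependence. The correlation is not through the superposition parent structure (which A-3 controls) but through the \emph{selection mechanism}: knowing $L_{\bar{W}_m}$ tells you that the codewords at smaller indices failed the typicality test, so the alternative codeword at $l'$ is not conditionally i.i.d.\ from $p(u_j|u_{A_j})$ once you fix $Y_k^n$. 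Conditioning on codewords ``outside the sub-tree'' does nothing to remove this, because the information leaks through $L_{\bar{W}_m}$, not through the parent codewords.

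The paper's fix is a coupling argument you are missing. For each fixed alternative $l'_{\bar{S}_k}$, define a \emph{modified} node operation at every earlier node $m$: perform the same smallest-index search for $\hat{l}''_{\bar{D}_m,m}$ and $l''_{\bar{W}_m}$, but restricted to indices that differ from $l'$ in every coordinate of $\bar{S}_k$. Let $\tilde{Y}_k^n$ be the resulting channel output. On the event that earlier nodes decoded and covered correctly with indices satisfying $L_i\neq l'_i$ for all $i\in\bar{S}_k$ (which is implied by $\bigcap_{j<k}(\mathcal{E}_{j,1}\cup\mathcal{E}_{j,2}\cup\mathcal{E}_{j,3})^c$ together with the summand's constraint), the modified search reproduces the original one, so $\tilde{Y}_k^n=Y_k^n$. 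But $\tilde{Y}_k^n$ is by construction a function of the codebook entries \emph{excluding} those at $l'_{\bar{S}_k}$, so the joint distribution of $(U_{S_k}^n(l'_{\bar{S}_k},L''_{\bar{S}_k^c}),\tilde{Y}_k^n)$ factors as in the paper's equation~(\ref{eqn:seconderror_distr}), and the standard typicality bound applies. This device---replacing the true channel output by one computed under a rule that cannot see the competing codeword---is the ``new proof technique beyond the standard packing lemma'' the paper advertises, and it is not recoverable from A-3 alone.

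A secondary gap: your final paragraph asserts that ``the conditional typicality lemma'' propagates joint typicality once covering and decoding succeed. That lemma handles the channel step $Y_{k}^n\to Y_{k+1}^n$, but it does \emph{not} handle the selected codeword $U_{W_k}^n(L_{\bar{W}_k})$, which is not drawn memorylessly from $p(u_{W_k}|u_{D_k},y_k)$ but is the first codeword passing a typicality test. The paper isolates this as a separate error event $\mathcal{E}_{k,4}$ and proves a modified Markov lemma: one must show both that the selected codeword is typical with $(u_{D_k}^n,y_k^n)$ with high probability and that its conditional pmf is upper-bounded by $2^{-n(H(U_{W_k}|U_{D_k},Y_k)-\delta)}$. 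The second condition requires a short argument exploiting the symmetry of the codebook and the fact that the selection rule depends on the codeword only through its typicality status; you should not skip it.
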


\begin{remark}
For $k\in [1:N]$, the inequalities (\ref{eqn:main_sk}) and (\ref{eqn:main_tk}) are the conditions for successful simultaneous nonunique decoding and simultaneous compression, respectively, at node $k$. 
\end{remark}
\begin{remark}
Theorem \ref{thm:main} can be improved using coded time sharing \cite{HanKobayashi:81}.
\end{remark}
%\begin{remark}
%Our proof technique completely solves the correlation problem between chosen and nonchosen codewords  \cite{LapidothTinguely:10}, \cite{MineroLimKim:15} in a fully general network setup. %Such correlation was problematic in many previous works and prevented generalization to networks.
%\end{remark}

\begin{proof}
Consider $\omega\in \Omega$. Let $0<\epsilon_k<\epsilon_k'<\epsilon_k''$ for all $k\in [1:N]$ such that $\epsilon_{k-1}''<\epsilon_k$ and $\epsilon''_N<\epsilon$. Let $\mathcal{L}_j= [1:2^{nr_j}]$ for $j\in [1:\mu]$. In the following, $l_j\in\mathcal{L}_j$ for $j\in [1:\mu]$.

\subsubsection{Codebook generation} 
For each $j\in [1:\nu]$ and $l_{\Gamma_j}\in \prod_{i\in \Gamma_j} \mathcal{L}_i$, generate $u_j^n(l_{\Gamma_j})$ conditionally independently according to $\prod_{i=1}^n p(u_{j,i}|u_{A_j,i}(l_{\Gamma_{A_j}}))$. Let $u^n_S(l_{\Gamma_S})$ for $S\subseteq [1:\nu]$ denote $\{u_i^n(l_{\Gamma_i}): i\in S\}$.

\subsubsection{Operation at node $k\in [1:N]$}
After receiving $Y_k^n$, node $k$ finds the smallest $\hat{l}_{\lin{D}_k,k}$ such that 
\begin{align}
(u_{ \uin{D}_k\cup  \uin{B}_k}^n(\hat{l}_{\lin{D}_k,k},l_{\lin{B}_k}), y_k^n)\in \mathcal{T}_{\epsilon_k}^{(n)} \label{eqn:packing_typicality}
\end{align}
for some $l_{\lin{B}_k}$. If there is no such index vector, let $\hat{l}_{\lin{D}_k,k}=\mathbf{1}$. 

Next, node $k$ finds the smallest $l_{\lin{W}_k}$ such that\footnote{In (\ref{eqn:covering_typicality}), $(\hat{l}_{\Gamma_{ \uin{W}_k}\setminus \lin{W}_k,k}, l_{\lin{W}_k})$ suffices to specify the index set of $u_{ \uin{W}_k}^n$, but we write $(\hat{l}_{\lin{D}_k,k},  l_{\lin{W}_k})$ as the index set of $u_{ \uin{W}_k}^n$ for notational convenience. } 
\begin{align}
(u_{ \uin{D}_k}^n(\hat{l}_{\lin{D}_k,k}), u_{ \uin{W}_k}^n(\hat{l}_{\lin{D}_k,k}, l_{\lin{W}_k}), y_k^n)\in \mathcal{T}_{\epsilon_k'}^{(n)}.  \label{eqn:covering_typicality}
\end{align}
If there is no such index vector, let $l_{\lin{W}_k}=\mathbf{1}$. 
Send $x_{k,i}=x_k(u_{ \uin{D}_k,i}(\hat{l}_{\lin{D}_k,k}), u_{ \uin{W}_k,i}(\hat{l}_{\lin{D}_k,k}, l_{\lin{W}_k}), y_{k,i})$ for $i\in [1:n]$.

\subsubsection{Error analysis}
For $k\in [1:N]$, let $\hat{L}_{\lin{D}_k,k}$ and  $L_{\lin{W}_k}$ denote the chosen index vectors at node $k$. 
 Let us define the error event as follows: 
\begin{align*}
\mathcal{E}=\bigcup_{k=1}^{N}(\mathcal{E}_{k,1}\cup \mathcal{E}_{k,2} \cup \mathcal{E}_{k,3}\cup \mathcal{E}_{k,4})
\end{align*}
where 
\begin{align*}
\mathcal{E}_{k,1} &= \{(U^n_{ \uin{W}^{k-1}}(L_{\lin{W}^{k-1}}), Y_{[1:k]}^n)\notin \mathcal{T}_{\epsilon_k}^{(n)}\}\cr
\mathcal{E}_{k,2} &= \{(U_{ \uin{D}_k\cup  \uin{B}_k}^n(l_{\lin{D}_k\cup \lin{B}_k}), Y_k^n)\in \mathcal{T}_{\epsilon_k}^{(n)} \mbox{ for some } l_{\lin{D}_k}\neq L_{\lin{D}_k}, l_{\lin{B}_k}\} \cr 
\mathcal{E}_{k,3} &= \{(U_{ \uin{D}_k}^n(\hat{L}_{\lin{D}_k,k}), U_{ \uin{W}_k}^n(\hat{L}_{\lin{D}_k,k}, L_{\lin{W}_k}), Y_k^n)\notin \mathcal{T}_{\epsilon_k'}^{(n)} \}\cr
\mathcal{E}_{k,4} &= \{(U^n_{ \uin{W}^k}(L_{\lin{W}^k}), Y_{[1:k]}^n)\notin \mathcal{T}_{\epsilon_k''}^{(n)}\}.
\end{align*}
Note that $\mathcal{E}^c$ implies  $\hat{L}_{\lin{D}_k,k}=L_{\lin{D}_k}$ for all $k\in [1:N]$ and $(U^n_{ \uin{W}^N}(L_{\lin{W}^N}), Y_{[1:N]}^n)\in \mathcal{T}_{\epsilon}^{(n)}$, which means $(X_{[1:N]}^n, Y_{[1:N]}^n)\in \mathcal{T}_{\epsilon}^{(n)}$. Hence, $P_e^{(n)}(\epsilon)\leq P(\mathcal{E})$. 

The probability of the error event can be upper-bounded as follows: 
\begin{align}
P(\mathcal{E})&\leq \sum_{k=1}^N(P(\mathcal{E}_{k,1}\cap \bigcap_{j=1}^{k-1}(\mathcal{E}_{j,1}\cup \mathcal{E}_{j,2})^c \cap \mathcal{E}_{k-1,4}^c )+P(\mathcal{E}_{k,2}\cap \bigcap_{j=1}^{k-1}(\mathcal{E}_{j,1}\cup \mathcal{E}_{j,2}\cup \mathcal{E}_{j,3})^c)  \cr 
&~~~~~~~~+P(\mathcal{E}_{k,3}\cap \mathcal{E}_{k,1}^c \cap \mathcal{E}_{k,2}^c )+P(\mathcal{E}_{k,4}\cap \mathcal{E}_{k,1}^c \cap \mathcal{E}_{k,2}^c )). \label{eqn:error_analysis_ub}
\end{align}
Note that $(\mathcal{E}_{k,1}\cup \mathcal{E}_{k,2})^c$ implies $\hat{L}_{\lin{D}_k,k}=L_{\lin{D}_k}$.

Let us bound each term in the summation in (\ref{eqn:error_analysis_ub}) for given $k\in [1:N]$. First, we have 
\begin{align*}
&P(\mathcal{E}_{k,1}\cap \bigcap_{j=1}^{k-1}(\mathcal{E}_{j,1}\cup \mathcal{E}_{j,2})^c \cap \mathcal{E}_{k-1,4}^c )\cr
&\leq P((U^n_{ \uin{W}^{k-1}}(L_{\lin{W}^{k-1}}), Y_{[1:k-1]}^n)\in \mathcal{T}_{\epsilon_{k-1}''}^{(n)}, (U^n_{ \uin{W}^{k-1}}(L_{\lin{W}^{k-1}}), Y_{[1:k-1]}^n, Y_k^n)\notin \mathcal{T}_{\epsilon_{k}}^{(n)},\cr
&~~~~~~~~~~~ \hat{L}_{\lin{D}_j,j}=L_{\lin{D}_j} \mbox{ for all } j\in [1:k-1]), 
\end{align*}
which tends to zero as $n$ tends to infinity from the conditional typicality lemma \cite{ElGamalKim:11}.

Next, we show in Appendix \ref{appendix:sec_error} that the second term in the summation in (\ref{eqn:error_analysis_ub}) tends to zero as $n$ tends to infinity if
\begin{align*}
\sum_{j\in \lset{S}_k}r_j&<\sum_{j\in   \uset{S}_k} I(U_j;U_{  \uset{S}_k[j]\cup   \uset{S}_k^{c}}, Y_k|U_{A_j})-(1+\nu)\delta(\epsilon_k)
\end{align*}
for all $\lset{S}_k\subseteq \lin{D}_k\cup \lin{B}_k$ such that $\lset{S}_k\cap \lin{D}_k\neq \emptyset$, where $  \uset{S}_k$ is defined in (\ref{eqn:barSk}).

The third term in the summation in (\ref{eqn:error_analysis_ub}) is shown in Appendix \ref{appendix:third_error} to tend to zero as $n$ tends to infinity if
\begin{align}
\sum_{j\in \lset{T}_k}r_j &>\sum_{j\in   \uset{T}_k}I(U_j;U_{  \uset{T}_k[j]\cup  \uin{D}_k},Y_k|U_{A_j})+4(1+\nu)\delta(\epsilon_k') \label{eqn:tkcondition}
\end{align}
for all  $\lset{T}_k\subseteq \lin{W}_k $ such that $\lset{T}_k\neq \emptyset$, where $  \uset{T}_k$ is defined in (\ref{eqn:barTk}).

Finally, the fourth  term in the summation in (\ref{eqn:error_analysis_ub}) is proved in Appendix \ref{appendix:fourth_error} to tend to zero as $n$ tends to infinity for sufficiently small $\epsilon_k$ and $\epsilon_k'$ under the aforementioned condition (\ref{eqn:tkcondition}) for all  $\lset{T}_k\subseteq \lin{W}_k $ such that $\lset{T}_k\neq \emptyset$. 
  
Therefore, $P(\mathcal{E})$ and thus $P_e^{(n)}(\epsilon)$ tend to zero as $n$ tends to infinity if rate tuple $(r_1,\ldots,r_\mu)$ satisfies for $1\leq k \leq N$,
\begin{align*}
\sum_{j\in \lset{S}_k} r_j&<\sum_{j\in   \uset{S}_k} I(U_j;U_{  \uset{S}_k[j]\cup   \uset{S}_k^c},Y_k|U_{A_j})\\ 
\sum_{j\in \lset{T}_k} r_j&>\sum_{j\in   \uset{T}_k} I(U_j;U_{  \uset{T}_k[j]\cup  \uin{D}_k},Y_k|U_{A_j})
\end{align*}
for all $\lset{S}_k\subseteq \lin{D}_k\cup \lin{B}_k$ such that $\lset{S}_k\cap \lin{D}_k\neq \emptyset$ and for all $\lset{T}_k\subseteq \lin{W}_k$ such that $\lset{T}_k\neq \emptyset$. This completes the proof.
\end{proof}
Let $\Omega'$  denote the set of all possible $\omega$'s that satisfy additional conditions $\nu=\mu$ and $\Gamma_j=\{j\}\cup A_j$. In many cases, it is sufficient to consider $\Omega'$.

\begin{corollary} \label{corollary:main}
For an  $N$-node ADMN, $p^*$ is achievable if there exists $\omega'\in \Omega'$ such that for $1\leq k \leq N$ 
\begin{align*}
\sum_{j\in S_k} r_j&<\sum_{j\in S_k} I(U_j;U_{S_k[j]\cup S_k^c},Y_k|U_{A_j})\\ 
\sum_{j\in T_k} r_j&>\sum_{j\in T_k} I(U_j;U_{T_k[j]\cup D_k},Y_k|U_{A_j})
\end{align*}
for all $S_k\subseteq D_k\cup B_k$ such that $S_k\cap D_k \neq \emptyset$ and if $j\in S_k^c$, then $A_j \subseteq S_k^c$ and for all $T_k\subseteq W_k$ such that $T_k\neq \emptyset$ and if $j\in T_k$, then $A_j\cap W_k \subseteq T_k$. 
\end{corollary}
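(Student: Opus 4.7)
The plan is to deduce Corollary~\ref{corollary:main} from Theorem~\ref{thm:main} by showing that, under the extra structural restrictions defining $\Omega'$ (i.e., $\nu=\mu$ and $\Gamma_j=\{j\}\cup A_j$), the corollary's smaller family of inequalities already implies the entire family of inequalities demanded by Theorem~\ref{thm:main}. The key structural lemma I would first establish is an ``ancestor transitivity'' property: substituting $\Gamma_j=\{j\}\cup A_j$ into constraint A-2 gives $A_j\cup A_{A_j}=\Gamma_{A_j}\subseteq\Gamma_j=\{j\}\cup A_j$, and since every element of $A_{A_j}$ is strictly less than $j$, this forces $A_{A_j}\subseteq A_j$. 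Thus $A_j$ already contains every ancestor, not merely immediate parents. Combined with A-3, this also yields $\bar{D}_k=\Gamma_{D_k}=D_k$, $\bar{B}_k=\Gamma_{B_k}\setminus\Gamma_{D_k}=B_k$, and $\bar{W}_k=\Gamma_{W_k}\setminus\Gamma_{D_k}=W_k$, so the $\mathcal{L}$-index and $\mathcal{C}$-index layers collapse.

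For the packing (first) inequality of Theorem~\ref{thm:main}, I would fix any $\bar{S}_k\subseteq D_k\cup B_k$ with $\bar{S}_k\cap D_k\neq\emptyset$ and set $S_k:=\{j\in D_k\cup B_k:\Gamma_j\cap\bar{S}_k\neq\emptyset\}$, exactly the derived set appearing on the right-hand side of Theorem~\ref{thm:main}. Using ancestor transitivity I would verify that $S_k$ is descendant-closed in the corollary's sense: if $j\notin S_k$ then $A_j\cap\bar{S}_k=\emptyset$, and for any $j'\in A_j$ the inclusion $A_{j'}\subseteq A_j$ gives $A_{j'}\cap\bar{S}_k=\emptyset$ and $j'\notin\bar{S}_k$, hence $j'\notin S_k$, so $A_j\subseteq S_k^c$. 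Since $S_k\cap D_k\supseteq\bar{S}_k\cap D_k\neq\emptyset$, this $S_k$ is admissible in the corollary, and the corollary's inequality $\sum_{j\in S_k}r_j<\sum_{j\in S_k}I(U_j;U_{S_k[j]\cup S_k^c},Y_k|U_{A_j})$, together with $\bar{S}_k\subseteq S_k$ and $r_j\geq 0$, immediately implies the theorem's condition $\sum_{j\in\bar{S}_k}r_j<\sum_{j\in S_k}I(U_j;U_{S_k[j]\cup S_k^c},Y_k|U_{A_j})$.

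The covering (second) inequality is handled symmetrically. For any nonempty $\bar{T}_k\subseteq W_k$ I would set $T_k:=\{j\in W_k:\Gamma_j\subseteq\bar{T}_k\cup D_k\}=\{j\in\bar{T}_k:A_j\subseteq\bar{T}_k\cup D_k\}$, again the theorem's derived set. Ancestor transitivity then shows $T_k$ is ancestor-closed in $W_k$: for $j\in T_k$ and any $j'\in A_j\cap W_k$, $j'\in A_j\subseteq\bar{T}_k\cup D_k$ together with $j'\notin D_k$ forces $j'\in\bar{T}_k$, while $A_{j'}\subseteq A_j\subseteq\bar{T}_k\cup D_k$ gives $j'\in T_k$. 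Whenever $T_k\neq\emptyset$, the corollary's inequality for $T_k$ combined with $T_k\subseteq\bar{T}_k$ and $r_j\geq 0$ recovers the theorem's condition for $\bar{T}_k$. The only loose end is the degenerate case $T_k=\emptyset$, in which the theorem reduces to $\sum_{j\in\bar{T}_k}r_j>0$; this holds without loss of generality by assuming $r_j>0$ for every $j\in[1:\mu]$ (a zero-rate index carries a trivial codebook and can be absorbed). Invoking Theorem~\ref{thm:main} completes the derivation. The main obstacle is precisely the descendant- and ancestor-closure of these derived sets, which rests squarely on the ancestor-transitivity $A_{A_j}\subseteq A_j$ forced by $\Gamma_j=\{j\}\cup A_j$ in $\Omega'$.
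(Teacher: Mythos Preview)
Your proposal is correct and follows essentially the same route as the paper's proof: both arguments observe that under $\Omega'$ the index layers collapse ($\bar D_k=D_k$, $\bar B_k=B_k$, $\bar W_k=W_k$), that $\bar S_k\subseteq S_k$ and $T_k\subseteq \bar T_k$, and hence that only the ``fixed-point'' choices $\bar S_k=S_k$ and $\bar T_k=T_k$ need to be checked, which are precisely the sets satisfying the corollary's closure conditions. You supply considerably more detail than the paper---in particular the ancestor-transitivity $A_{A_j}\subseteq A_j$ derived from $\Gamma_j=\{j\}\cup A_j$ and constraint A-2, and the explicit verification that the derived sets meet the closure conditions---whereas the paper's proof simply asserts the inclusions and concludes in two sentences; your treatment of the degenerate case $T_k=\emptyset$ is also a point the paper passes over.
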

\begin{proof}
For $\omega'\in \Omega'$, we have $\lin{W}_k=\uin{W}_k, \lin{D}_k=\uin{D}_k, \lin{B}_k=\uin{B}_k$, $\lset{S}_k\subseteq   \uset{S}_k$, and $  \uset{T}_k\subseteq \lset{T}_k$ for all $\lset{S}_k\subseteq \lset{D}_k\cup \lset{B}_k$ such that $\lset{S}_k\cap \lset{D}_k \neq \emptyset$ and for all $\lset{T}_k\subseteq \lin{W}_k$ such that $\lset{T}_k\neq \emptyset$.  Hence, it is enough to consider $\lset{S}_k$ and $\lset{T}_k$ such that $\lset{S}_k=  \uset{S}_k$ and $  \lset{T}_k= \uset{T}_k$. 
\end{proof}

Now, let us show various examples to illustrate how to utilize the unified framework and unified coding theorem.
Throughout this paper, unspecified components $W_k, D_k, B_k$, and $A_j$ of $\omega\in \Omega$ or $\omega'\in \Omega'$ are assumed to be empty\footnote{For $\omega'\in \Omega'$, we do not explicitly specify $\Gamma_j$ since $\Gamma_j=\{j\}\cup A_j$.}.  

\subsection{Multicoding and binning}
Our scheme does not include multicoding and binning explicitly, but our scheme is general enough to emulate those. In the following two examples, we provide guidelines for choosing $\omega$ that correpond multicoding and binning operations. 
\begin{example}[Gelfand-Pinsker coding  \cite{GelfandPinsker:80}]
Consider channels with noncausal states represented by a two-node ADMN such that $Y_1=(M,S)$,  $p(y_1)=p(m)p(s)$ where $H(M)=R$, and $p(y_2|y_1,x_1)=p(y_2|x_1,s)$, as discussed in Section \ref{sec:framework}. We choose  $\omega'\in \Omega'$ in Corollary \ref{corollary:main} as $\mu=1, U_1=(M,U), W_1=\{1\}, D_2=\{1\}$, $p(u|y_1)=p(u|s)$, $x_1(u_1,y_1)=x_1(u,s)$, and $x_2(u_1,y_2)=m$. Then, from Corollary \ref{corollary:main}, we have the condition $r_1>R+I(U;S)$ for compression at node 1 and the condition $r_1<I(U;Y_2)$ for decoding at node 2. By the Fourier-Motzkin (F-M) elimination, we obtain $R<I(U;Y_2)-I(U;S)$. 

To see the relevance to the multicoding operation, let us assume that $2^R$ is an integer and $M\sim \mbox{Unif}[1:2^R]$. In this case, there are roughly $2^{nR}$ sets of $u_1^n$ sequences where each set consists of roughly $2^{n(r_1-R)}$ $u_1^n$ sequences having the same $m^n$ sequence. Note that $r_1-R>I(U;S)$. Then, when $y_1^n=(\tilde{m}^n,\tilde{s}^n)$ is received at node 1, among  $2^{n(r_1-R)}$ $u_1^n$'s having $\tilde{m}^n$, we can find a $\tilde{u}_1^n=(\tilde{u}^n, \tilde{m}^n)$ with high probability such that $\tilde{u}^n$ is jointly typical with $\tilde{s}^n$ with respect to $p(u,s)$. This corresponds to the multicoding operation in a sense that for each message $m^n$, multiple $u^n$'s are matched to satisfy the joint typicality with respect to $p(u,s)$.  
\end{example}

\begin{example}[Wyner-Ziv coding \cite{WynerZiv:76}]
Consider lossy source coding with side information represented by a two-node ADMN such that $Y_2=(Y_2',T)$ and $p(y_2|y_1,x_1)=p(y_2'|x_1)p(t|y_1)$ where $\max_{p(x_1)}I(X_1;Y_2')$ $=R$, and target distribution  $p^*$ such that $p^*(x_1)=\arg \max I(X_1;Y_2')$ and $\E[d(Y_1,X_2)]\leq \frac{d}{1+\epsilon}$ for $\epsilon\rightarrow 0$, as discussed in Section \ref{sec:framework}.  We choose  $\omega'\in \Omega'$ in Corollary \ref{corollary:main} as $\mu=1$, $W_1=\{1\}$, $D_2=\{1\}$,  $U_1=(U, X_1)$, $p(u_1|y_1)=p(u|y_1)p(x_1)$, and $x_2(u_1,y_2)=x_2(u,t)$ such that $p=p^*$. Then, from Corollary \ref{corollary:main}, we need the condition $r_1>I(U;Y_1)$ for compression at node 1 and the condition $r_1<R+I(U;T)$ for decoding at node 2. By the F-M elimination, we get  $R>I(U;Y_1)-I(U;T)=I(U;Y_1|T)$. 

To see the relevance to the binning operation, let us assume that $2^R$ is an integer, $|\mathcal{X}_1|=2^R$, and $Y_2'=X_1$.  In this case, there are roughly $2^{nR}$ sets of $u_1^n$ sequences where each set consists of roughly $2^{n(r_1-R)}$ $u_1^n$ sequences having the same $x_1^n$ sequence. Note that $r_1-R<I(U;T)$. Then, when $y_2^n=(\tilde{x}_1^n,\tilde{t}^n)$ is received at node 2, among $2^{n(r_1-R)}$ $u_1^n$ sequences having $\tilde{x}_1^n$, there is a unique $\tilde{u}_1^n=(\tilde{u}^n, \tilde{x}_1^n)$ with high probability such that $\tilde{u}^n$ is jointly typical with $\tilde{t}^n$ with respect to $p(u,t)$. This corresponds to the binning operation in a sense that multiple $u^n$ sequences are matched to the same bin index $x_1^n$ but node 2 can decode  $u^n$ using the joint typicality with respect to $p(u,t)$.

\end{example}

\subsection{Rate-splitting}
The following example shows how the rate-splitting can be incorporated. 
\begin{example}[Han-Kobayashi coding  \cite{HanKobayashi:81}]
To perform the rate-splitting, we represent the interference channel  as a four-node ADMN such that $Y_k=(M_{k0},M_{kk})$, $H(M_{k0})=R_{k0}$, $H(M_{kk})=R_{kk}, R_k=R_{k0}+R_{kk}$  for $k\in[1:2]$, $p(y_1)=p(m_{10})p(m_{11})$, $p(y_2|y_1,x_1)=p(m_{20})p(m_{22})$, $p(y_3|y_{[1:2]},x_{[1:2]})=p(y_3|x_{[1:2]})$, and $p(y_4|y_{[1:3]}, x_{[1:3]})=p(y_4|y_3,x_{[1:2]})$, and target distribution $p^*$ such that $X_3=Y_1, X_4=Y_2$. Here, $M_{k0}^n$ and $M_{kk}^n$ for $k\in[1:2]$ correspond to the rate-splitted messages at node $k$.

We choose $\omega'\in \Omega'$ in Corollary \ref{corollary:main} as follows:  
$\mu=4$, $U_1=(M_{10}, V_1), U_2=(M_{11}, X_1), U_3=(M_{20}, V_2), U_4=(M_{22}, X_2)$, $W_1=\{1,2\}, W_2=\{3,4\}, D_3=\{1,2\}, D_4=\{3,4\}, B_3=\{3\}, B_4=\{1\}, A_2=\{1\}, A_4=\{3\}$, $p(v_k,x_k|y_k)=p(v_k,x_k)$ for $k\in [1:2]$. Then, from Corollary \ref{corollary:main} followed by the F-M elimination, we get the Han-Kobayashi inner bound.
%\begin{itemize}
%\item $T_1=\{1\}$: $r_1>R_{10}$
%\item $T_1=\{1,2\}$: $r_1+r_2>R_{10}+R_{11}$
%\item $T_2=\{3\}$: $r_2>R_{20}$
%\item $T_2=\{3,4\}$: $r_3+r_4>R_{20}+R_{22}$
%\item $S_3=\{1,2\}$: $r_1+r_2<I(X_1;Y_3|V_2)$
%\item $S_3=\{1,2,3\}$: $r_1+r_2+r_3<I(X_1,V_2;Y_3)$
%\item $S_3=\{2\}$: $r_2<I(X_1;Y_3|V_1,V_2)$
%\item $S_3=\{2,3\}$: $r_2+r_3<I(X_1,V_2;Y_3|V_1)$
%\item $S_4=\{3,4\}$: $r_3+r_4<I(X_2;Y_4|V_1)$
%\item $S_4=\{1,3,4\}$: $r_1+r_3+r_4<I(X_2,V_1;Y_4)$
%\item $S_4=\{4\}$: $r_4<I(X_2;Y_4|V_1,V_2)$
%\item $S_4=\{1,4\}$: $r_1+r_4<I(X_2,V_1;Y_4|V_2)$.
%\end{itemize}
%The F-M elimination gives the Han-Kobayashi inner bound.
\end{example}

\subsection{Introduction of a virtual node}
Let us show an example where a simpler rate expression than previously known result can be obtained by using Proposition \ref{lemma:virtual_node}. 
\begin{example}[Interference decoding for a 3-SD-pair deterministic interference channel \cite{BandemerElGamal:11}]
%%%%%%%%%%%%%%%%%%%%%%%%%%%%%%%%%%%%%%%%%%%%%%
In the 3-SD-pair deterministic channel \cite{BandemerElGamal:11}, source $k\in [1:3]$ encodes a message $I_k$ of rate $R_k\geq 0$ to channel input sequence $X_k^n$ and destination $k$ estimates $I_k$ as $\hat{I}_k$ from its channel output sequence $Z_k^n$, where $n$ denotes the number of channel uses. The channel output at destination $k\in[1:3]$ is given as $Z_k=f_k(X_{k,k},V_k)$ for some function $f_k$, where $X_{i,j}=g_{i,j}(X_i)$ for some function $g_{i,j}$ for $i\in [1:3]$ and $j\in [1:3]$, $V_1=h_1(X_{2,1},X_{3,1})$, $V_2=h_2(X_{1,2},X_{3,2})$, and $V_3=h_3(X_{1,3},X_{2,3})$ for some functions $h_1$, $h_2$, and $h_3$. $h_k$ and $f_k$ for $k\in[1:3]$ are assumed to be injective in their arguments. The probability or error and achievability of a rate triple $(R_1, R_2, R_3)$ are defined in the standard way. In Fig. \ref{fig:inter_dec}, the 3-SD-pair deterministic channel is illustrated for destination 1.  
\begin{figure}[t]
 \centering
  {
   \includegraphics[width=80mm]{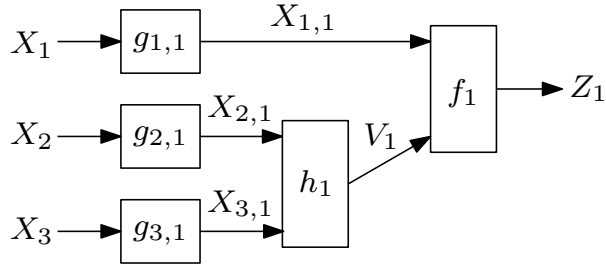}}
  \caption{The 3-SD-pair deterministic interference channel considered in \cite{BandemerElGamal:11}} \label{fig:inter_dec}
\end{figure}
By using a new technique of decoding the combined interference, Bandemer and El Gamal showed in  \cite{BandemerElGamal:11} the following achievable rate region.\footnote{For simplicity, we present the rate region without coded time sharing.} 
\begin{proposition}[Interference decoding inner bound \cite{BandemerElGamal:11}] \label{thm:inter_dec} The rate region $\bigcap_{k=1}^3 \mathcal{R}_k(X_1,X_2,X_3)$ is achievable for some $(X_1,X_2,X_3)\sim p(x_1)p(x_2)p(x_3)$, where $\mathcal{R}_1(X_1,X_2,X_3)$ is the set of rate triples $(R_1,R_2,R_3)$ such that 
\begin{align}
R_1&<H(X_{1,1})\label{eqn:id1}\\
R_1+\min(R_2,H(X_{2,1}))&<H(Z_1|X_{3,1})\label{eqn:id2}\\
R_1+\min(R_3,H(X_{3,1}))&<H(Z_1|X_{2,1})\label{eqn:id3}\\
R_1+\min(R_2+R_3,R_2+H(X_{3,1}),H(X_{2,1})+R_3,H(V_1))&<H(Z_1),\label{eqn:id4}
\end{align}
and $\mathcal{R}_2(X_1,X_2,X_3)$ and $\mathcal{R}_3(X_1,X_2,X_3)$ are defined similarly by replacing the subscripts as $1\mapsto 2 \mapsto 3 \mapsto 1$ and $1 \mapsto 3 \mapsto 2 \mapsto 1$  in $\mathcal{R}_1(X_1,X_2,X_3)$, respectively. 
\end{proposition}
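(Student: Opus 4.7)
The plan is to represent the channel as an ADMN, apply Proposition~\ref{lemma:virtual_node} to expose the combined interferences $V_1,V_2,V_3$ as explicit variables in the joint distribution, and then invoke Theorem~\ref{thm:main} (via Corollary~\ref{corollary:main}) with a scheme in which each destination uniquely decodes its own message codeword and nonuniquely decodes a covering codeword for the combined interference, together with selected individual interferer codewords. Different choices of the nonunique-decoding set will produce the different terms of the $\min$ expressions in (\ref{eqn:id2})--(\ref{eqn:id4}).

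Concretely I first model the channel as a $6$-node ADMN: sources $k\in\{1,2,3\}$ with $Y_k=I_k$ and $H(I_k)=R_k$, destinations receiving $Z_1,Z_2,Z_3$, and target $p^*$ enforcing that each destination output recovers its intended message with $X_1,X_2,X_3$ mutually independent under the prescribed marginals. Three applications of Proposition~\ref{lemma:virtual_node} insert virtual (null-input) nodes---one just before each destination---whose outputs are $V_1,V_2,V_3$; in the resulting ADMN the channel to destination $k$ factors through $V_k$ via $Z_k=f_k(X_{k,k},V_k)$. I take one covering codebook per source, of the form $U_k=(I_k,X_k)$ with rate $r_k\approx R_k$, and one per virtual node, of the form $U_{V_k}=V_k$ with rate $r_{V_k}\approx H(V_k)$. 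At destination $k$ I uniquely decode $U_k$, nonuniquely decode a subset $B\subseteq\{U_i,U_j,U_{V_k}\}$ with $\{i,j\}=\{1,2,3\}\setminus\{k\}$, and output the message component of the uniquely-decoded $U_k$ via the symbol-by-symbol mapping.

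For each such $\omega\in\Omega$, every subset $S$ of the decoded codewords containing $U_k$ produces a packing inequality via Corollary~\ref{corollary:main}; using $X_1\perp X_2\perp X_3$ and the injectivity of $f_k,h_k$, the mutual-information right-hand sides collapse to simple entropy combinations. For instance, $S=\{U_1\}$ gives $R_1<H(X_{1,1})$, the bound (\ref{eqn:id1}); $S=\{U_1,U_{V_1}\}$ with $B=\{U_{V_1}\}$ gives $R_1+H(V_1)<H(Z_1)$, the last entry of the $\min$ in (\ref{eqn:id4}); $S=\{U_1,U_2,U_{V_1}\}$ with $B\supseteq\{U_2,U_{V_1}\}$ gives $R_1+R_2<H(Z_1|X_{3,1})$, the first entry of the $\min$ in (\ref{eqn:id2}); and $S=\{U_1,U_2,U_3,U_{V_1}\}$ gives $R_1+R_2+R_3<H(Z_1)$. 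Taking the union over allowed $B$ (equivalently, over different $\omega\in\Omega$), using $\min(a,b)<c\Leftrightarrow a<c\vee b<c$ to combine the resulting regions, and Fourier--Motzkin eliminating the auxiliary rates $r_{V_k}$, recovers (\ref{eqn:id1})--(\ref{eqn:id4}).

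The main obstacle I anticipate is producing the mixed terms $R_k+H(X_{i,k})<H(Z_k|X_{j,k})$ and $R_k+R_i+H(X_{j,k})<H(Z_k)$ inside the $\min$s, since these do not arise from decoding either the full codeword $U_i$ (contributing $R_i$) or the combined codeword $U_{V_k}$ (contributing $H(V_k)$). I would handle these by augmenting source $i$ with a superposition layer: an auxiliary codebook representing $X_{i,k}$ with rate $\approx H(X_{i,k})$ superposed under $U_i$, and have destination $k$ nonuniquely decode this auxiliary codebook in place of $U_i$. The corresponding packing inequality then carries $H(X_{i,k})$ in place of $R_i$, and the missing terms emerge after Fourier--Motzkin elimination.
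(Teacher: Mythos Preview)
Your overall plan---ADMN representation, virtual node(s) via Proposition~\ref{lemma:virtual_node}, then Corollary~\ref{corollary:main} with different nonunique-decoding sets $B$ and a union over them---matches the paper's approach. The paper, however, introduces a \emph{single} virtual node whose output is the full vector $(X_{1,2},X_{1,3},X_{2,1},X_{2,3},X_{3,1},X_{3,2},V_1,V_2,V_3)$ and has that node compress each coordinate separately ($\mu=12$, $W_4=[4{:}12]$). The destination then nonuniquely decodes, depending on the rate regime, either a source codeword $U_i$ (contributing $R_i$) or the virtual node's codeword for $X_{i,k}$ (contributing $H(X_{i,k})$), or the $V_k$ codeword.

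Your handling of the mixed terms is where the argument breaks. You propose an auxiliary codebook for $X_{i,k}$ at \emph{source} $i$ as a superposition cloud center under $U_i=(I_i,X_i)$, expecting the packing inequality to ``carry $H(X_{i,k})$ in place of $R_i$'' after Fourier--Motzkin. But the term that survives F--M on the left-hand side is the \emph{compression lower bound} on $r_{X_{i,k}}$, and at source $i$ that bound is $I(X_{i,k};Y_i)=I(X_{i,k};I_i)=0$, since in the coding distribution $X_i$ is independent of the message $I_i$. So F--M yields $R_k<\cdots$ rather than $R_k+H(X_{i,k})<\cdots$. Simply declaring the rate to be ``$\approx H(X_{i,k})$'' does not make that quantity appear after elimination; it just restricts the feasible rate tuple without producing the desired inequality. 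The paper sidesteps this by placing the $X_{i,j}$ codebook at the virtual node, whose channel output \emph{contains} $X_{i,j}$; there the compression constraint genuinely is $r_{X_{i,j}}>I(X_{i,j};Y_4)=H(X_{i,j})$, and this is exactly what produces the $H(X_{i,k})$ terms on the left of (\ref{eqn:id2})--(\ref{eqn:id4}). Enlarging your virtual nodes to also expose the individual cross-link signals $X_{i,j}$ (or collapsing them into one such node, as the paper does) fixes the gap.
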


Now, let us show that Corollary \ref{corollary:main} can recover the  interference decoding inner bound by applying Proposition \ref{lemma:virtual_node}. By introducing a virtual node from Proposition \ref{lemma:virtual_node}, the 3-SD-pair deterministic channel can be represented by the following ADMN and target distribution: 
\begin{itemize}
\item ADMN: $N=7$, $H(Y_k)=R_k$ for $k\in [1:3]$, $p(y_2|x_1,y_1)=p(y_2)$, $p(y_3|x_{[1:2]},y_{[1:2]})=p(y_3)$, $Y_4=(X_{1,2}, X_{1,3}, X_{2,1}, X_{2,3}, X_{3,1}, X_{3,2}, V_1, V_2, V_3)$, $X_{4}=\emptyset$, $Y_{5}=Z_1$, $Y_{6}=Z_2$, $Y_{7}=Z_3$.
\item Target distribution: $p^*$ such that $X_{5}=Y_1, X_{6}=Y_2, X_{7}=Y_3$.
\end{itemize}

For this ADMN and $p^*$, let us choose $\omega'\in \Omega'$ for Corollary \ref{corollary:main}. We let $\mu= 12$,  $W_k=\{k\}$ for $k\in [1:3]$, $W_4=[4:12]$, $U_j=(Y_j,X_j)$ and $p(x_j|y_j)=p(x_j)$ for $j\in [1:3]$, $U_4=X_{1,2}, U_5=X_{1,3}, U_6=X_{2,1}, U_7=X_{2,3}, U_8=X_{3,1}, U_9=X_{3,2}, U_{10}=V_1, U_{11}=V_2, U_{12}=V_3$, $D_{5}=\{1\}$, $D_{6}=\{2\}$, $D_{7}=\{3\}$. We let $B_{5}$ as follows: 
\begin{align*}
B_5=\begin{cases}
\{10\} &\mbox{ if } m_1=H(V_1)\\
\{2,3\} &\mbox{ if } R_2<H(X_{2,1}), R_3<H(X_{3,1}), m_1<H(V_1)\\
\{2,8\} &\mbox{ if } R_2<H(X_{2,1}), R_3\geq H(X_{3,1}), m_1<H(V_1) \\
\{3,6\} &\mbox{ if } R_2\geq H(X_{2,1}), R_3<H(X_{3,1}), m_1<H(V_1)\\
\end{cases},
\end{align*} 
where $m_1=\min(R_2+R_3, R_2+H(X_{3,1}), H(X_{2,1})+R_3, H(V_1))$. We choose $B_6$ and $B_7$ similarly. Then, by applying Corollary \ref{corollary:main}, we can obtain an inner bound that is at least as good as that in Proposition \ref{thm:inter_dec} and has a simpler form\footnote{When $m_1=H(V_1)$, our bound for the decoding at the first destination is given as $R_1<H(X_{1,1})$ and $R_1+H(V_1)<H(Z_1)$, while that in Proposition \ref{thm:inter_dec} has two additional inequalities (\ref{eqn:id2}) and (\ref{eqn:id3}).}.  Furthermore, the interference decoding inner bound in  \cite{BandemerElGamal:11} was improved in \cite{BandemerElGamal:Allerton11} by incorporating rate splitting, Marton coding, and superposition coding. We can choose $\omega'\in \Omega'$ that includes such coding techniques and obtain an inner bound from Corollary \ref{corollary:main} that includes that in \cite{BandemerElGamal:Allerton11} and has a simpler form. 
\end{example}

The following example illustrates the usefulness of Proposition \ref{lemma:common_part} in networks with correlated sources. 
\begin{example}[Lossless communication of two correlated sources over a multiple access channel \cite{CoverElGamalSalehi:80}]
By using Proposition \ref{lemma:common_part}, we can represent the problem of sending two correlated sources over a multiple access channel as the following ADMN and target distribution: 
\begin{itemize}
\item ADMN: $N=4$, $Y_1=V_1$, $Y_2=(V_2,X_1)$, $Y_3=(V_3,X_1)$, and $p(y_4|y_{[1:3]},x_{[1:3]})=p(y_4|x_{[2:3]})$ where $V_2$ and $V_3$ are two discrete memoryless sources, $V_1$ is the common part of $V_2$ and $V_3$, and $|\mathcal{X}_1|$ is arbitrarily large.
\item Target distribution: $p^*$ such that $X_4=(V_2,V_3)$.
\end{itemize}
We choose $\omega'\in \Omega'$ in Corollary \ref{corollary:main} as follows: $\mu=3$, $U_1=(V_1,U,X_1)$, $U_2=(V_2,X_2)$, $U_3=(V_3,X_3)$, $W_k=\{k\}$ for $k\in [1:3]$, $D_2=\{1\}$, $D_3=\{1\}$, $D_4=\{1,2,3\}$, $A_2=\{1\}$, $A_3=\{1\}$, $p(u,x_1|y_1)=p(u)/|\mathcal{X}_1|$, $p(x_2|y_2,u_1)=p(x_2|v_2,u)$, $p(x_3|y_3,u_1)=p(x_3|v_3,u)$.
By Corollary \ref{corollary:main} followed by the F-M elimination, the sufficient condition for lossless communication of two correlated sources over a multiple access channel \cite{CoverElGamalSalehi:80} is recovered. 
\end{example} 

\subsection{Application to DMNs}
Note that any strictly causal or causal network with blockwise operations can be represented as an ADMN by unfolding the network as illustrated in Section \ref{sec:framework}. The following lemma is useful when we apply Theorem \ref{thm:main} to unfolded networks.
\begin{lemma} \label{corollary:GDCF_reduced}
Consider $\omega\in \Omega$. For $\lset{S}_k\subseteq \lin{D}_k\cup \lin{B}_k$ such that $\lset{S}_k\cap \lin{D}_k\neq \emptyset$ and $\lset{T}_k\subseteq \lin{W}_k$ such that $\lset{T}_k\neq \emptyset$, the decoding and compression bounds, i.e., (\ref{eqn:main_sk}) and (\ref{eqn:main_tk}), in Theorem \ref{thm:main} are satisfied if 
\begin{align}
\sum_{j\in \bar{S}_k}r_j<\sum_{j\in S_k'}I(U_j;U_{S_k'[j]\cup S_k'^c},Y_k|U_{A_j})\\
\sum_{j\in \bar{T}_k}r_j>\sum_{j\in T_k'}I(U_j;U_{T_k'[j]\cup D_k},Y_k|U_{A_j}),
\end{align}
for some $S_k'\subseteq S_k$ such that $A_j\subseteq (S_k\setminus S_k')[j]\cup S_k^c$ for all $j\in S_k\setminus S_k'$ and for some $T_k'$ such that $T_k\subseteq T_k'$.
\end{lemma}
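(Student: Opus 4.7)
The plan is to show that the two RHS expressions in the lemma bound the corresponding RHS expressions in (\ref{eqn:main_sk}) and (\ref{eqn:main_tk}) in the directions needed for the implications to go through. The compression case is a short monotonicity argument; the decoding case requires a chain-rule identity combined with the hypothesis on $A_j$. One notational caveat to keep in mind throughout is that the $S_k$ and $T_k$ appearing in the lemma's hypotheses refer to the unbarred sets $\uset{S}_k$ and $\uset{T}_k$ defined in (\ref{eqn:barSk})--(\ref{eqn:barTk}).

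For the compression bound, I split $T_k'=\uset{T}_k\cup (T_k'\setminus \uset{T}_k)$. The summands indexed by $j\in T_k'\setminus \uset{T}_k$ are nonnegative. For $j\in \uset{T}_k$, the containment $\uset{T}_k[j]\subseteq T_k'[j]$ implies that the random variables $U_{\uset{T}_k[j]\cup \uin{D}_k}$ are contained in $U_{T_k'[j]\cup \uin{D}_k}$, so by monotonicity of mutual information in the observed variables,
\[
I(U_j;U_{T_k'[j]\cup \uin{D}_k},Y_k\mid U_{A_j})\ge I(U_j;U_{\uset{T}_k[j]\cup \uin{D}_k},Y_k\mid U_{A_j}).
\]
Summing shows that the lemma's compression RHS dominates the RHS of (\ref{eqn:main_tk}), so the lemma's strict inequality implies (\ref{eqn:main_tk}).

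For the decoding bound, the workhorse is the identity
\[
\sum_{j\in \uset{S}_k}I(U_j;U_{\uset{S}_k[j]\cup \uset{S}_k^c},Y_k\mid U_{A_j})=\sum_{j\in \uset{S}_k}H(U_j\mid U_{A_j})-H(U_{\uset{S}_k}\mid U_{\uset{S}_k^c},Y_k),
\]
which follows because condition A-2 forces $A_j\subseteq [1{:}j-1]\subseteq \uset{S}_k[j]\cup \uset{S}_k^c$, so each conditional entropy appearing in a mutual information term collapses to $H(U_j\mid U_{\uset{S}_k[j]\cup \uset{S}_k^c},Y_k)$, and enumerating $\uset{S}_k$ in increasing index order reassembles the sum into $H(U_{\uset{S}_k}\mid U_{\uset{S}_k^c},Y_k)$ by the chain rule. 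The same identity applies to $S_k'$ in place of $\uset{S}_k$. Using $S_k'^c=\uset{S}_k^c\cup (\uset{S}_k\setminus S_k')$ and the chain rule gives $H(U_{\uset{S}_k}\mid U_{\uset{S}_k^c},Y_k)-H(U_{S_k'}\mid U_{S_k'^c},Y_k)=H(U_{\uset{S}_k\setminus S_k'}\mid U_{\uset{S}_k^c},Y_k)$, so subtracting the two identities yields
\[
\bigl[\text{RHS of (\ref{eqn:main_sk})}\bigr]-\bigl[\text{lemma's decoding RHS}\bigr]=\sum_{j\in \uset{S}_k\setminus S_k'}H(U_j\mid U_{A_j})-H(U_{\uset{S}_k\setminus S_k'}\mid U_{\uset{S}_k^c},Y_k).
\]

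To show this difference is nonnegative, expand the joint conditional entropy by the chain rule in increasing index order within $\uset{S}_k\setminus S_k'$ as $\sum_{j\in \uset{S}_k\setminus S_k'}H(U_j\mid U_{(\uset{S}_k\setminus S_k')[j]},U_{\uset{S}_k^c},Y_k)$. The hypothesis $A_j\subseteq (\uset{S}_k\setminus S_k')[j]\cup \uset{S}_k^c$ for each such $j$ means exactly that $U_{A_j}$ sits inside the conditioning, so conditioning-reduces-entropy bounds each summand by $H(U_j\mid U_{A_j})$, and summing proves the desired inequality. The main (mild) obstacle is the bookkeeping: respecting the superposition order from A-2 when invoking the chain rule and correctly identifying that $U_{A_j}$ is absorbed by the larger conditioning sets at each step. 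Once the two entropy identities are in hand, the rest of the argument is mechanical.
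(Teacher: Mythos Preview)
Your proof is correct and follows essentially the same approach as the paper. For the compression part, the paper simply declares it ``straightforward,'' and your monotonicity argument is the natural way to fill that in; for the decoding part, both you and the paper convert the mutual-information sums into $\sum_j H(U_j\mid U_{A_j})$ minus a joint conditional entropy via the chain rule (using that $A_j$ lies in the conditioning set), then use the hypothesis $A_j\subseteq (S_k\setminus S_k')[j]\cup S_k^c$ together with ``conditioning reduces entropy'' to bound the difference by zero---the paper adds the nonnegative correction terms first and then collapses, while you compute the difference directly, but the content is identical.
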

\begin{proof}
The compression part is straightforward. For the decoding part, we have
\begin{align*}
&\sum_{j\in S_k'}I(U_j;U_{S_k'[j]\cup S_k'^c},Y_k|U_{A_j})\\
&\overset{(a)}{\leq}\sum_{j\in S_k'}I(U_j;U_{S_k'[j]\cup S_k'^c},Y_k|U_{A_j})+\sum_{j\in S_k\setminus S_k'}(H(U_j|U_{A_j})-H(U_j|U_{(S_k\setminus S_k')[j]}, U_{S_k^c}, Y_k))\\
&=\sum_{j\in S_k} H(U_j|U_{A_j})-H(U_{S_k'}|U_{S_k'^c},Y_k)-H(U_{S_k\setminus S_k'}|U_{S_k^c},Y_k)\\
&=\sum_{j\in S_k} H(U_j|U_{A_j})-H(U_{S_k}|U_{S_k^c},Y_k)\\
&=\sum_{j\in S_k} (H(U_j|U_{A_j})-H(U_j|U_{S_k[j]},U_{S_k^c},Y_k))\\
&=\sum_{j\in S_k}I(U_j;U_{S_k[j]\cup S_k^c},Y_k|U_{A_j}),
\end{align*}
where $(a)$ is from the condition for $S_k'$ stated in the lemma. 
\end{proof}
 
In the following example, we show an achievable rate for a DMN by applying Theorem \ref{thm:main} to the unfolded network. 

\begin{example}[Noisy network coding  \cite{Lim:10}]
Consider a single-source multicast DMN $(\mathcal{X}_1,\ldots,\mathcal{X}_N, \mathcal{Y}_1$, $\ldots,\mathcal{Y}_N, p(y_{[1:N]}|x_{[1:N]}))$. Let node 1 denote the source node and let $\mathcal{D}\subseteq [2:N]$ denote the set of destination nodes. An $(R,n)$ code for the single-source multicast DMN consists of message $I$, uniformly distributed over $\mathcal{I}\triangleq[1:2^{nR}]$, encoding function at the source that maps $I\in \mathcal{I}$ to $x_1^n\in \mathcal{X}_1^n$,  processing function at node $k\in [2:N]$ at time $i\in [1:n]$ that maps $y_k^{i-1}\in \mathcal{Y}_k^{i-1}$ to $x_{k,i}\in \mathcal{X}_{k}$, and decoding function at destination $d\in \mathcal{D}$ that maps $y_d^n\in \mathcal{Y}_d^n$ to $\hat{I}_d\in \mathcal{I}$. 
The probability of error is defined as $P_e^{(n)}=P(\hat{I}_d \neq I \mbox{ for some } d\in \mathcal{D})$ and a rate $R$ is said to be achievable if there exists a sequence of $(R,n)$ codes such that $\lim_{n\rightarrow \infty} P_e^{(n)}=0$. 

For a single-source multicast DMN, noisy network coding (NNC) rate  \cite{Lim:10} is given as follows.
\begin{proposition}[Noisy network coding bound \cite{Lim:10}] \label{proposition:nnc}
For a single-source multicast DMN, a rate of $R$ is achievable if 
\begin{align*}
R<\min_{d\in \mathcal{D}}\min_{S\in [2:N]\setminus \{d\}} I(X_1,X_S;\hat{Y}_{S^c},Y_d|X_{S^c})-I(Y_S;\hat{Y}_S|X^N,\hat{Y}_{S^c}, Y_d)
\end{align*}
for some $p_{X_1}\prod_{k=2}^Np_{X_k}p_{\hat{Y}_k|X_k,Y_k}$. 
\end{proposition}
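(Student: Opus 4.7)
The plan is to represent the single-source multicast DMN as an ADMN via the network-unfolding construction from Section~\ref{sec:framework}, then apply Theorem~\ref{thm:main} with a choice of $\omega$ that emulates the NNC scheme, and finally let the number of blocks $B \to \infty$ to wash out the message-repetition penalty.

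First, I would unfold the $N$-node DMN into an ADMN consisting of $(B+1)N$ nodes indexed by $(k,b)$ for $k\in[1:N]$ and $b\in[1:B+1]$, following the relay-channel example in Section~\ref{sec:framework}; orthogonal infinite-rate links from $(k,b)$ to $(k,b+1)$ capture the fact that these represent the same physical node in successive blocks. The source information $Y_{(1,1)}$ is the repeated message $M$ of rate $BR$, and the target distribution $p^*$ enforces $X_{(d,B+1)}=M$ for every $d\in\mathcal{D}$, so that achievability of $p^*$ implies achievability of $R$ in the original DMN.

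Second, I would pick $\omega\in\Omega$ that implements NNC. For each block $b\in[1:B]$ and each relay $k\in[2:N]$, introduce a covering codebook with $U_{k,b}=(X_{k,b},\hat Y_{k,b})$ generated conditionally on a superposed i.i.d.\ $X_{k,b}\sim p(x_k)$, using the NNC test channel $p(\hat y_k\mid x_k,y_k)$; the source uses a codebook carrying $U_{1,b}=X_{1,b}$ indexed by $M$. The relay $(k,b+1)$ performs compression only (no decoding, so $D_{(k,b+1)}=\emptyset$), producing the compression index that is then passed forward via the orthogonal block link and re-transmitted as $X_{k,b+1}$ in the next block. At each destination $(d,B+1)$, we set $D_{(d,B+1)}$ to contain the source's message codebooks $\{U_{1,b}\}_{b=1}^{B}$ and $B_{(d,B+1)}$ to contain all relay compression codebooks $\{U_{k,b}\}_{k\ne d,\,b\in[1:B]}$, i.e.\ the message is decoded uniquely and all compression indices simultaneously but non-uniquely, exactly as in~\cite{Lim:10}. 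The symbol-by-symbol mapping at each relay is the identity on the $X_{k,b}$ component, and the joint distribution in~(\ref{eqn:omega_distr}) matches the product form in the statement of Proposition~\ref{proposition:nnc}.

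Third, I would read off the compression and decoding inequalities from Theorem~\ref{thm:main}. The compression condition at each relay reduces, since $D_{(k,b+1)}=\emptyset$, to $r_{k,b}>I(\hat Y_k;Y_k\mid X_k)$. The decoding condition at destination $d$ yields, for each subset $S$ of relay--block pairs with $S$ containing at least one source-message codebook, a sum-rate bound. Here Lemma~\ref{corollary:GDCF_reduced} is essential: it lets me restrict attention to subsets that, for each relay $k\notin\{d\}\cup S^c$, include \emph{all} blocks of that relay, thereby collapsing the combinatorial explosion of per-block cuts to per-relay cuts. Applying Fourier--Motzkin elimination to remove every $r_{k,b}$ then leaves, for every $d\in\mathcal{D}$ and every $S\subseteq[2:N]\setminus\{d\}$,
\begin{equation*}
B R \;<\; B\bigl[I(X_1,X_S;\hat Y_{S^c},Y_d\mid X_{S^c}) - I(Y_S;\hat Y_S\mid X^N,\hat Y_{S^c},Y_d)\bigr] + O(1),
\end{equation*}
where the $O(1)$ term absorbs the boundary effects of the first and last blocks. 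Dividing by $B$ and sending $B\to\infty$ yields the NNC bound in Proposition~\ref{proposition:nnc}.

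The main obstacle is the Fourier--Motzkin bookkeeping on the unfolded network, because the non-unique decoding produces one inequality for every subset of (relay, block) pairs. The key simplification is Lemma~\ref{corollary:GDCF_reduced}, which permits restricting $S_k'$ so that each relay is treated monolithically across blocks; once this ``full-relay cut'' reduction is justified, the algebra is parallel to that in~\cite{Lim:10}, and the $H(M)/B = R$ penalty from repeating the message vanishes in the limit, recovering Proposition~\ref{proposition:nnc} exactly.
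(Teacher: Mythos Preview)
Your overall strategy---unfold the DMN into an ADMN, choose an $\omega$ implementing NNC, apply Theorem~\ref{thm:main}, and send $B\to\infty$---matches the paper's proof. Two technical points, however, are not right as stated.

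First, $D_{(k,b+1)}=\emptyset$ is incompatible with the NNC superposition structure. In NNC the compression codebook $\hat Y_{k,b-1}$ is superposed on the \emph{previous} transmit codebook $X_{k,b-1}$, and constraint A-3 in the definition of $\Omega$ forces $A_{W_k}\subseteq W_k\cup D_k$. Since $X_{k,b-1}$ was generated at the earlier unfolded node $(k,b-1)$, it cannot lie in $W_{(k,b)}$ and must therefore lie in $D_{(k,b)}$. The paper resolves this by setting $D_{k,b}=W_k^{b-1}$; because these codewords arrive noiselessly through the orthogonal block link, the associated decoding bounds are vacuous, but the superposition constraint is satisfied and the compression bound becomes $r_{k,b-1}>I(\hat Y_k;Y_k\mid X_k)$ with the conditioning on $X_k$ coming from $U_{A_j}$. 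With $D=\emptyset$ you cannot legally superpose, and the covering condition you quote does not follow. A smaller related point: the paper uses a \emph{single} source codebook $U_0=(M,X_{1,1},\dots,X_{1,B})$ with one rate $r_0>BR$, not $B$ separate codebooks indexed by $M$; this keeps the destination's unique-decoding set to $\{l_0\}$ and streamlines the F-M step.

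Second, your description of how Lemma~\ref{corollary:GDCF_reduced} is used is inaccurate. The lemma allows you to replace the decoding bound for a \emph{given} $\bar S_{d,B+1}$ by a weaker bound indexed by some $S'_k\subseteq S_k$; it does not let you discard $(\text{relay},\text{block})$ subsets from the family you must check. In the paper, every $\bar S_{d,B+1}$ is written as $\{l_0\}\cup\bigcup_{b}\{l_{k,b}:k\in S_b\}$ with a possibly different $S_b\subseteq[2{:}N]\setminus\{d\}$ in each block; Lemma~\ref{corollary:GDCF_reduced} together with the blockwise i.i.d.\ property yields a block-additive bound \eqref{eqn:GDCF_packing1}, and only \emph{after} imposing the symmetric rate assignment $r_{k,b}=r_k$ does the minimum over all sequences $\{S_b\}$ collapse to a minimum over a single per-relay cut $S$. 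The ``full-relay cut'' reduction you describe is the outcome, but the mechanism is the symmetric rate choice plus block-additivity, not the lemma by itself.
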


Now, let us obtain the NNC rate from Theorem \ref{thm:main}.  Fix  $p_{X_1}\prod_{k=2}^Np_{X_k}p_{\hat{Y}_k|X_k,Y_k}$. 
Achievability uses $B$ transmission blocks, each consisting of $n$ channel uses. Let $Y_{k,b}^n$ and $X_{k,b}^n$ for $k\in [1:N]$ and $b\in [1:B]$ denote the channel output and channel input sequences, respectively, at node $k$ at block $b$. Let us assume the following blockwise operation at each node: at the end of block $b-1$, where $b\in [1:B]$, node $k\in [1:N]$ encodes what to transmit in block $b$, i.e., $X_{k,b}^n$, using previously received channel outputs up to block $b-1$, i.e., $Y_{k,[1:b-1]}^n$. Then, we can unfold the network. 

In the unfolded network, we have $(B+1)N$ nodes. The operation of node $(k,b), k\in [1:N], b\in [1:B]$ corresponds to that of node $k$ of the original network transmitting in block $b$ based on the received channel outputs up to block $b-1$ and the operation of node $(d,B+1), d\in \mathcal{D}$ corresponds to that of node $d$ of the original network that estimates the message based on the received channel outputs up to block $B$. Let $Y_{k,b}^{\mathrm{unf}}$ and $X_{k,b}^{\mathrm{unf}}$ denote the channel output and channel input at node $(k,b)$ of the unfolded network, respectively. A message generated at the source is regarded as the channel output at node $(1,1)$, i.e., $Y_{1,1}^{\mathrm{unf}}=M$ such that $H(M)=BR$, and the message estimate at destination $d\in \mathcal{D}$ is regarded as the channel input at node $(d,B+1)$, i.e., $\mathcal{X}_{d,B+1}^{\mathrm{unf}}=\mathcal{M}$.  To reflect the fact that node $(k,b+1)$ is originally the same node as node $(k,b)$, we assume that node $(k,b)$ has an orthogonal link of sufficiently large rate to node $(k,b+1)$. Hence, for $k\in [1:N]$ and $b\in [1:B]$, we let $Y_{k,b+1}^{\mathrm{unf}}=(X_{k,b}^{\mathrm{unf}}, Y_{k,b})$ and let 
\begin{align*}
p(y_{k,b}|y_{[1:N],[1:b]}^{\mathrm{unf}},y_{[1:k-1],b+1}^{\mathrm{unf}},x_{[1:N],[1:b]}^{\mathrm{unf}},x_{[1:k-1],b+1}^{\mathrm{unf}})=p_{Y_k|Y_{[1:k-1]},X_{[1:N]}}(y_{k,b}|y_{[1:k-1],b},x_{[1:N],b})
\end{align*}
where $x_{k,b}^{\mathrm{unf}}=(x_{k,b}, z_{k,b})$ for $x_{k,b}\in \mathcal{X}_k$ and $z_{k,b}\in \mathcal{Z}_{k,b}$ for some $\mathcal{Z}_{k,b}$ with arbitrarily large cardinality. 
We assume a target joint distribution $p^*(x_{[1:N],[1:B+1]}^{\mathrm{unf}},y_{[1:N],[1:B+1]}^{\mathrm{unf}})$ such that $X_{d,B+1}^{\mathrm{unf}}=Y_{1,1}^{\mathrm{unf}}$ for all $d\in \mathcal{D}$. Note that the achievability of $p^*$ for the unfolded network implies the achievability of rate $R$ for the original network.  

Now, let us choose $\omega\in \Omega$ to obtain the NNC rate. Let $\mu=BN-B+1$ and $\nu=2BN-2B-N+2$. Consider a $\mu$-rate tuple $(r_0, r_{k,b}: k\in [2:N], b\in [0:B-1])$. For notational convenience, let us index the codebook $\mathcal{C}$ by the auxiliary random variable used for its generation, i.e., if a codebook consists of $u^n(1),\ldots,u^n(2^{nr})$ generated conditionally independently according to $\prod_{i=1}^n p(u_i|v_i)$ for some $r\geq 0$ and $p(u|v)$, we denote the codebook by $\mathcal{C}_U$. In addition, we index the index set $\mathcal{L}$ in the following way: $\mathcal{L}_{l_0}=[1:2^{nr_0}]$ and $\mathcal{L}_{l_{k,b}}=[1:2^{nr_{k,b}}]$ for $k\in [2:N]$ and $b\in [0:B-1]$. The remaining coding parameters associated with each node are given as follows:

\begin{itemize}
\item Node $(1,1)$: 
\begin{align*} 
W_{1,1}=\{U_0\}, \Gamma_{U_0}= \{l_0\}, U_0=(Y_{1,1}^{\mathrm{unf}}, X_{1,1}, \ldots, X_{1,B}), p(x_{1,1}, \ldots, x_{1,B}| y_{1,1}^{\mathrm{unf}})=\prod_{b\in [1:B]} p_{X_1}(x_{1,b}) 
\end{align*}

\item Node $(k,1)$, $k\in [2:N]$: 
\begin{align*}
W_{k,1}=\{X_{k,1}\}, \Gamma_{X_{k,1}}=\{l_{k,0}\}, p(x_{k,1}|y_{k,1}^{\mathrm{unf}})=p_{X_k}(x_{k,1}) 
\end{align*}

\item Node $(1,b)$, $b\in [2:B]$:
\begin{align*}
D_{1,b}&=W_{1,1} 
\end{align*}

\item Node $(k,b)$, $k\in [2:N], b\in [2:B]$:
\begin{align*}
D_{k,b}=W_{k}^{b-1}, W_{k,b}=\{\hat{Y}_{k,b-1}, X_{k,b}\},  \Gamma_{\hat{Y}_{k,b-1}}&=\{l_{k,b-1},l_{k,b-2}\},\Gamma_{X_{k,b}}=\{l_{k,b-1}\},A_{\hat{Y}_{k,b-1}}=\{X_{k,b-1}\} \\
p(W_{k,b}|D_{k,b}, y_{k,b}^{\mathrm{unf}})&=p_{\hat{Y}_k|X_k,Y_k}(\hat{y}_{k,b-1}|x_{k,b-1},y_{k,b-1})p_{X_k}(x_{k,b}) 
\end{align*}

\item Node $(d,B+1)$, $d\in \mathcal{D}$:
\begin{align*}
D_{d,B+1}=W_{d}^{B}\cup \{U_0\}, B_{d,B+1}= \{ X_{k,b}, \hat{Y}_{k,b}, k\in [2:N]\setminus \{d\}, b\in [1:B-1] \} 
\end{align*}

\end{itemize}
For $k\in [1:N]$ and $b\in [1:B]$, we let $X_{k,b}^{\mathrm{unf}}=(Y_{k,[1:b-1]}, W_{k,b}, D_{k,b})$. 
For $d\in \mathcal{D}$, let $X_{d,B+1}^{\mathrm{unf}}=U_0$. Note that the above choice of coding parameters shows the following blockwise i.i.d. property:
 \begin{align}
& p(x_{[1:N],[1:B-1]},y_{[2:N],[1:B-1]},\hat{y}_{[2:N],[1:B-1]}) \cr
&=\!\!\prod_{b\in [1:B-1]}\! \!\big( p_{X_1}(x_{1,b})\!\!\prod_{k\in [2:N]} p_{X_k}(x_{k,b})p_{\hat{Y}_k|X_k,Y_k}(\hat{y}_{k,b}|x_{k,b},y_{k,b}) p_{Y_{[2:N]}|X_{[1:N]}}(y_{[2:N],b}|x_{[1:N],b})\big). \label{eqn:GDCF_independence}
\end{align}

Now, we are ready to apply Theorem \ref{thm:main} to obtain the NNC rate. For each node in the unfolded network, the decoding and compression bounds i.e., (\ref{eqn:main_sk}) and (\ref{eqn:main_tk}), respectively, are given as follows.

\begin{itemize}
\item Compression at node (1,1): The bound for compression is given as follows: 
\begin{align}
r_0&>BR.\label{eqn:nnc1}
\end{align} 

\item Compression at node $(k,1), k\in [2:N]$: It can be easily shown that the bound for compression is inactive.  

\item Decoding at node $(1,b), b\in [2:B]$: Since $Y_{1,b}^{\mathrm{unf}}=D_{1,b}$, the bound for decoding becomes inactive.

\item Decoding and compression at node $(k,b), k\in [2:N], b\in [2:B]$: Since $Y_{k,b}^{\mathrm{unf}}$ contains $D_{k,b}$, the bound for decoding becomes inactive.  From the blockwise i.i.d. property (\ref{eqn:GDCF_independence}), the bound for compression is given as 
\begin{align}
r_{k,b-1}>I(\hat{Y}_k;Y_k|X_k). \label{eqn:GDCF_C}
\end{align}

\item Decoding  at node $(d,B+1), d\in \mathcal{D}$: Since  $D_{d,B+1}=W_{d}^B \cup \{U_0\}$ and $Y_{d,B+1}^{\mathrm{unf}}$ contains $W_{d}^B$, we only need to consider $\bar{S}_{d,B+1}\subseteq \{l_0, l_{k,b}: k\in [2:N]\setminus \{d\}, b\in [0:B-1]\}$ such that $l_0\in \bar{S}_{d,B+1}$. Note that such $\bar{S}_{d,B+1}$ can be represented as $\{l_0\}\cup \bigcup_{b\in [0:B-1]}\{l_{k,b}: k\in S_b\}$ for some 
$S_b\subseteq [2:N]\setminus \{d\}$ for $b\in [0:B-1]$. Then, from Lemma \ref{corollary:GDCF_reduced} and using the blockwise i.i.d. property shown in (\ref{eqn:GDCF_independence}), the bound for decoding is given as  
\begin{align}
&r_0+\sum_{b\in[0:B-1]}\sum_{k \in S_b}r_{k,b} < \sum_{b\in [1:B-1]}\Big(I(X_{1}; \hat{Y}_{S_{b-1}^c}, X_{S_{b-1}^c}, Y_{d})\cr
&+\sum_{k\in S_{b-1}} I(X_{k};X_{S_{b-1}[k]}, X_{1}, \hat{Y}_{S_{b-1}^c}, X_{S_{b-1}^c}, Y_{d}) +\sum_{k\in S_{b-1}} I(\hat{Y}_{k};\hat{Y}_{S_{b-1}[k]}, \hat{Y}_{S_{b-1}^c}, X^N,  Y_{d}|X_{k}) \Big)\cr
&<\sum_{b\in [1:B-1]}\Big(I(X_{1}, X_{S_{b-1}}; \hat{Y}_{S_{b-1}^c},  Y_{d}|X_{S_{b-1}^c})+\sum_{k\in S_{b-1}} I(\hat{Y}_{k};\hat{Y}_{S_{b-1}[k]}, \hat{Y}_{S_{b-1}^c}, X^N,  Y_{d}|X_{k}) \Big) \label{eqn:GDCF_packing1}
\end{align}
for all $S_b\subseteq [2:N]\setminus \{d\}$ for $b\in [0:B-1]$. \end{itemize}

Now, let $r_{k,b}=r_k, k\in [2:N], b\in [0:B-1]$ for some $r_k\geq 0$.
Then, (\ref{eqn:GDCF_C}) is satisfied if 
\begin{align}
r_{k}&>I(\hat{Y}_k;Y_k|X_k) \label{eqn:nnc2}
\end{align}
for $k\in [2:N]$, and (\ref{eqn:GDCF_packing1}) is satisfied if 
\begin{align}
r_0&<(B-1)\!\min_{S: S\subseteq [2:N]\setminus \{d\}} \Big(\!
I(X_1, X_S;\hat{Y}_{S^c},  Y_d|X_{S^c})+\sum_{k\in S} I(\hat{Y}_k;\hat{Y}_{S[k]}, \hat{Y}_{S^c},X^N,   Y_d|X_k) -\sum_{k \in S}r_{k}\Big)\cr 
&~~~~~~~~~~~~~~~~~~~~~~~~~~~~~~~~~~~~~~~~~~~~~~~~~~~~~~~~~~~~~~~~~-\sum_{k\in [2:N]} r_k \label{eqn:nnc3}
\end{align}
for all $d\in \mathcal{D}$.

By performing F-M elimination to (\ref{eqn:nnc1}), (\ref{eqn:nnc2}), and (\ref{eqn:nnc3}) and by taking $B\rightarrow \infty$, the NNC rate in Proposition \ref{proposition:nnc} is obtained. 
\end{example}
%%%%%%%%%%%%%%%%%%%%%%%%%%%%%%%%%%%%%%%%%%%%%%%%%%%%%%%%%%%%% 
\subsection{Wiretap channel}
We note that a secrecy constraint cannot be imposed by using our definition of achievability based on joint typicality. Nevertheless, we show that our unified coding scheme can be specialized to the scheme that achieves the secrecy capacity of the wiretap channel. 
\begin{example}[Wiretap channel \cite{CsiszarKorner:78}]
Consider a three-node ADMN such that $Y_1=(M,M_1,M_2)$, $H(M)=R, H(M_1)=R_1, H(M_2)=R_2$, $p(y_1)=p(m)p(m_1)p(m_2)$, $p(y_2|y_1,x_1)=p(y_2|x_1)$, and $p(y_3|y_{[1:2]},x_{[1:2]})=p(y_3|y_2,x_1)$ and target distribution $p^*$ such that $X_2=M$. Here, nodes 1, 2, and 3 correspond to a source, legitimate destination, and wiretapper, respectively, in the wiretap channel. $M^n$ corresponds to the message and $M_1^n$ and $M_2^n$ play the roles of fictitious messages to confuse the wiretapper. We note that the achievability of $p^*$ implies the reliable communication to the legitimate destination, but does not guarantee the security.

If we choose $\omega'\in \Omega'$ in Corollary \ref{corollary:main} as $\mu=2$, $U_1=(M,M_1,U)$, $U_2=(M_2,X_1)$, $W_1=\{1,2\}$, $D_2=\{1\}$, $A_2=\{1\}$, $p(u,x_1|y_1)=p(u,x_1)$, we obtain following set of inequalities from Corollary \ref{corollary:main}:
\begin{itemize}
\item $T_1=\{1\}$: $r_1>R+R_1$
\item $T_1=\{1,2\}$ $r_1+r_2>R+R_1+R_2$ 
\item $S_1=\{1\}$: $r_1<I(U;Y_2)$ 
\end{itemize}

If $R_1>I(U;Y_3)$ and $R_2>I(X_1;Y_3|U)$ in addition to the above conditions, it can be shown through the analysis of the equivocation at the wiretapper that this coding scheme satisfies the secrecy constraint \cite{CsiszarKorner:78}. By the F-M elimination, the secrecy capacity $C_S= \max_{p(u,x_1)}I(U;Y_2)-I(U;Y_3)$ is recovered. 
\end{example}

More examples recovered by the unified coding theorem are relegated to Appendix \ref{appendix:special}.

%\newpage 
\section{Duality} \label{sec:duality} 
In this section, we establish a duality theorem that shows interesting similarities among achievability conditions for ADMNs in dual relations. For an $N$-node ADMN \[(\mathcal{X}_1,\ldots,\mathcal{X}_N, \mathcal{Y}_1,\ldots,\mathcal{Y}_N, \prod_{k=1}^Np(y_k|y^{k-1},x^{k-1}))\] with target joint distribution $p^*(x_{[1:N]},y_{[1:N]})$, which we call the original problem in this section to distinguish from dual problems, we define three types of dual problems as follows:  

\begin{itemize}
\item Type-I dual problem consists of an $N$-node ADMN $(\mathcal{Y}_1,\ldots,\mathcal{Y}_N, \mathcal{X}_1,\ldots,\mathcal{X}_N,  \prod_{k=1}^Np_1(x_k|y^{k-1},x^{k-1}))$, i.e., the input and output alphabets are swapped,  and a target joint distribution $p^*_1(x_{[1:N]},y_{[1:N]})$. 

\item Type-II dual problem consists of an $N$-node ADMN $(\mathcal{X}_N,\ldots,\mathcal{X}_1, \mathcal{Y}_N,\ldots,\mathcal{Y}_1,  \prod_{k=1}^Np_2(y_k|y_{k+1}^{N},x_{k+1}^{N}))$, 
i.e.,  the order of nodes is reversed, and a target joint distribution $p^*_2(x_{[1:N]},y_{[1:N]})$. 

\item Type-III dual problem consists of an $N$-node ADMN $(\mathcal{Y}_N,\ldots,\mathcal{Y}_1, \mathcal{X}_N,\ldots,\mathcal{X}_1,   \prod_{k=1}^Np_3(x_k|y_{k+1}^{N},x_{k+1}^{N}))$,  
i.e., the input and output alphabets are swapped and the order of nodes is reversed, and a target joint distribution $p^*_3(x_{[1:N]},y_{[1:N]})$. 
\end{itemize} 
We note that $p^*_{t}(x_{[1:N]},y_{[1:N]})$ for $t\in [1:3]$ is not necessarily the same as   $p^*(x_{[1:N]},y_{[1:N]})$.

For the coding parameters of unified coding for the original problem and its dual problems, we define  $\Omega_d$ as the set of $\omega_d=(\mu, \mathcal{U}_j, r_j, W_k, D_k, p(u_{W_k}|u_{D_k},y_k), p_1(u_{W_k}|u_{D_k},x_k),  p_2(u_{D_k}|u_{W_k},y_k)$, $p_3(u_{D_k}|u_{W_k},x_k)$, $x_k(u_{W_k},u_{D_k},y_k)$, $y_{k,1}(u_{W_k},u_{D_k},x_k)$, $x_{k,2}(u_{W_k},u_{D_k},y_k)$, $y_{k,3}(u_{W_k},u_{D_k},x_k)$  for $k\in [1:N]$ and $j\in [1:\mu])$ such that 
\begin{align*}
(\mu, \mathcal{U}_j, r_j, W_k, D_k, B_k, A_j, p(u_{W_k}|u_{D_k},y_k), x_k(u_{W_k},u_{D_k},y_k)  \mbox{ for } k\in [1:N], j\in [1:\mu])\\
\in \Omega'(\mathcal{X}_1,\ldots,\mathcal{X}_N, \mathcal{Y}_1,\ldots,\mathcal{Y}_N, \prod_{k=1}^Np(y_k|y^{k-1},x^{k-1}), p^*)\\
(\mu, \mathcal{U}_j, r_j, W_k, D_k, B_k, A_j, p_1(u_{W_k}|u_{D_k},x_k), y_{k,1}(u_{W_k},u_{D_k},x_k)  \mbox{ for } k\in [1:N], j\in [1:\mu])\\
\in \Omega'(\mathcal{Y}_1,\ldots,\mathcal{Y}_N, \mathcal{X}_1,\ldots,\mathcal{X}_N,  \prod_{k=1}^Np_1(x_k|y^{k-1},x^{k-1}),p^*_1)\\
(\mu, \mathcal{U}_j, r_j, W_k, D_k, B_k, A_j, p_2(u_{D_k}|u_{W_k},y_k), x_{k,2}(u_{W_k},u_{D_k},y_k)  \mbox{ for } k\in [1:N], j\in [1:\mu])\\
\in \Omega'(\mathcal{X}_N,\ldots,\mathcal{X}_1, \mathcal{Y}_N,\ldots,\mathcal{Y}_1,  \prod_{k=1}^Np_2(y_k|y_{k+1}^{N},x_{k+1}^{N}), p^*_2)\\
(\mu, \mathcal{U}_j, r_j,W_k, D_k, B_k, A_j, p_3(u_{D_k}|u_{W_k},x_k), y_{k,3}(u_{W_k},u_{D_k},x_k)  \mbox{ for } k\in [1:N], j\in [1:\mu])\\
\in \Omega'(\mathcal{Y}_N,\ldots,\mathcal{Y}_1, \mathcal{X}_N,\ldots,\mathcal{X}_1,   \prod_{k=1}^Np_3(x_k|y_{k+1}^{N},x_{k+1}^{N})), p^*_3), 
\end{align*}
where $B_k=A_j=\emptyset$ for $k\in [1:N], j\in [1:\mu]$. We note that for type I and type III dual problems, where the input and output alphabets are swapped, $y_{k,t}$ for $k\in [1:N]$ and $t\in \{1,3\}$ is the function used for the symbol-by-symbol maping from $(U_{W_k}^n, U_{D_k^n}, X_k^n)$ to the channel input $Y_k^n$. 
We also note that for type-II and type-III dual problems, where the node order is reversed, the roles of $W_k$ and $D_k$ are swapped, i.e., node $k$ decodes $U_{W_k}^n$ and compresses the channel output sequence and decoded codewords as $U_{D_k}^n$ for $k\in [1:N]$.

The following duality theorem is directly obtained from Corollary \ref{corollary:main}.
\begin{theorem} \label{thm:duality}
Consider $\omega_d\in \Omega_d$. For the original network, $p^*$ is achievable if for $1\leq k \leq N$ 
\begin{subequations}
\begin{align}
\sum_{j\in S_k} r_j&<\sum_{j\in S_k} I(U_j;U_{S_k[j]\cup S_k^c},Y_k)\\ 
\sum_{j\in T_k} r_j&>\sum_{j\in T_k} I(U_j;U_{T_k[j]\cup D_k},Y_k) \label{eqn:duality_tk}
\end{align}\label{eqn:duality}
\end{subequations}
for all $S_k\subseteq D_k$ such that $S_k\neq \emptyset$ and for all $T_k\subseteq W_k$ such that $T_k\neq \emptyset$.

For the type-I dual network, $p_1^*$ is achievable if for $1\leq k \leq N$ 
\begin{subequations}
\begin{align}
\sum_{j\in S_k} r_j&<\sum_{j\in S_k} I_{1}(U_j;U_{S_k[j]\cup S_k^c},X_k)\\ 
\sum_{j\in T_k} r_j&>\sum_{j\in T_k} I_{1}(U_j;U_{T_k[j]\cup D_k},X_k) 
\end{align} \label{eqn:duality1}
\end{subequations}
for all $S_k\subseteq D_k$ such that $S_k\neq \emptyset$ and for all $T_k\subseteq W_k$ such that $T_k\neq \emptyset$, where the mutual informations are evaluated using the distribution $\prod_{k=1}^N p_1(u_{W_k}|u_{D_k},x_k)\mathbbm{1}_{y_k=y_{k,1}(u_{W_k},u_{D_k},x_k)}p_1(x_k|y^{k-1},x^{k-1})$. 

For the type-II dual network, $p_2^*$ is achievable if for $1\leq k \leq N$ 
\begin{subequations}
\begin{align}
\sum_{j\in S_k} r_j&<\sum_{j\in S_k} I_{2}(U_j;U_{S_k[j]\cup S_k^c},Y_k) \\ 
\sum_{j\in T_k} r_j&>\sum_{j\in T_k} I_{2}(U_j;U_{T_k[j]\cup W_k},Y_k)
\end{align}
\label{eqn:duality2}
\end{subequations}
for all $S_k\subseteq W_k$ such that $S_k\neq \emptyset$ and for all $T_k\subseteq D_k$ such that $T_k\neq \emptyset$, where the mutual informations are evaluated using the distribution $\prod_{k=1}^N p_2(u_{D_k}|u_{W_k},y_k)\mathbbm{1}_{x_k=x_{k,2}(u_{W_k},u_{D_k},y_k)}p_2(y_k|y_{k+1}^{N},x_{k+1}^{N})$.

For the type-III dual network, $p_3^*$ is achievable if for $1\leq k \leq N$ 
\begin{subequations}
\begin{align}
\sum_{j\in S_k} r_j&<\sum_{j\in S_k} I_{3}(U_j;U_{S_k[j]\cup S_k^c},X_k) \\ 
\sum_{j\in T_k} r_j&>\sum_{j\in T_k} I_{3}(U_j;U_{T_k[j]\cup W_k},X_k) 
\end{align}
\label{eqn:duality3}
\end{subequations}
for all $S_k\subseteq W_k$ such that $S_k\neq \emptyset$ and for all $T_k\subseteq D_k$ such that $T_k\neq \emptyset$, where the mutual informations are evaluated using the distribution $\prod_{k=1}^Np_3(u_{D_k}|u_{W_k},x_k)\mathbbm{1}_{y_k= y_{k,3}(u_{W_k},u_{D_k},x_k)}p_3(x_k|y_{k+1}^{N},x_{k+1}^{N})$.
\end{theorem}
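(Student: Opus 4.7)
The plan is to derive each of the four parts of Theorem~\ref{thm:duality} as a direct specialization of Corollary~\ref{corollary:main}. By the definition of $\Omega_d$, each of the four component tuples of $\omega_d$ already belongs to $\Omega'$ of the corresponding ADMN, and in addition $B_k = \emptyset$ and $A_j = \emptyset$ for all $k$ and all $j$. Under these vanishing choices the conditional mutual informations $I(U_j;\cdot\,|U_{A_j})$ in Corollary~\ref{corollary:main} collapse to unconditional ones, the constraint ``$S_k \subseteq D_k \cup B_k$ with $S_k \cap D_k \neq \emptyset$'' reduces to ``$S_k \subseteq D_k$ with $S_k \neq \emptyset$'', and the side conditions $A_j \subseteq S_k^c$ and $A_j \cap W_k \subseteq T_k$ become trivial. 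The claim for the original problem then follows immediately by reading off (\ref{eqn:duality}) from Corollary~\ref{corollary:main} with these simplifications.

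For the type-I dual the same reasoning will apply, except that the input and output alphabets have been swapped. Applying Corollary~\ref{corollary:main} to the type-I ADMN with the component that uses $p_1$ and $y_{k,1}$, the variable playing the role of the channel output at node $k$ is now $X_k$ rather than $Y_k$, and the induced joint distribution used for computing the mutual informations is exactly $\prod_{k=1}^N p_1(u_{W_k}|u_{D_k},x_k)\mathbbm{1}_{y_k = y_{k,1}(u_{W_k},u_{D_k},x_k)}\,p_1(x_k|y^{k-1},x^{k-1})$, which is precisely the distribution used to define $I_1$ in the theorem. After the same simplifications caused by $B_k = A_j = \emptyset$, this yields (\ref{eqn:duality1}).

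For the type-II and type-III duals the node indexing is reversed, so a slightly more careful bookkeeping step is needed before invoking the corollary. The structural conditions of $\Omega'$, such as $D_k \subseteq W^{k-1}$, are indexed by the ADMN processing order; after reversal, a codeword that was compressed early in the original is received (and hence available for decoding) at the corresponding node in the dual. Consequently, in order for the type-II (resp.\ type-III) component of $\omega_d$ to belong to $\Omega'$ of the reversed ADMN, the ``compression slot'' of $\Omega'$ must be filled by the $D_k$ of $\omega_d$ and the ``decoding slot'' by the $W_k$ of $\omega_d$. With this identification, Corollary~\ref{corollary:main} produces packing conditions indexed by $S_k \subseteq W_k$ and covering conditions indexed by $T_k \subseteq D_k$, with $U_{T_k[j]\cup W_k}$ replacing $U_{T_k[j]\cup D_k}$ in the covering sum because the decoding set is now $W_k$. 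The induced joint distributions match those specified in the theorem for $I_2$ and $I_3$, and substitution delivers (\ref{eqn:duality2}) and (\ref{eqn:duality3}).

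The main obstacle, as suggested by the above, is precisely this role swap in the reversed ADMNs: one must verify that the slot matching between $(W_k, D_k)$ of $\omega_d$ and the compression/decoding roles in $\Omega'$ is consistent simultaneously for all four problems. Since $B_k = A_j = \emptyset$ renders most structural constraints in $\Omega'$ vacuous, this consistency essentially reduces to checking the relative positions of $W_k$ and $D_k$ in the (forward or reversed) processing order, which is already encoded into $\Omega_d$ by definition. Once that observation is in hand, the rest of the proof is a routine substitution into Corollary~\ref{corollary:main} case by case.
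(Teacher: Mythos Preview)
Your proposal is correct and takes essentially the same approach as the paper: the paper's proof is the single sentence ``directly obtained from Corollary~\ref{corollary:main},'' and you have simply unpacked what that entails, namely that $B_k=A_j=\emptyset$ collapses the conditional mutual informations and the side constraints, and that for the type-II and type-III duals the node-order reversal forces the $W_k$ and $D_k$ slots of $\Omega'$ to swap roles. Your discussion of the slot-matching for the reversed ADMNs is accurate and is exactly the bookkeeping implicit in the paper's definition of $\Omega_d$.
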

\begin{remark}
Theorem \ref{thm:duality} shows some similarities among achievability conditions for dual problems. 
In the achievability conditions (\ref{eqn:duality1}) and (\ref{eqn:duality3}) for type-I and type-III dual problems, respectively, $X_k$ and $Y_k$ are swapped from (\ref{eqn:duality}). In the achievability conditions (\ref{eqn:duality2}) and (\ref{eqn:duality3}) for type-II and type-III dual problems, respectively, $W_k$ and $D_k$ are swapped from (\ref{eqn:duality}). 
\end{remark}

Theorem \ref{thm:duality} includes as special cases many known duality relationships. Examples include the duality between the point-to-point channel coding  (original problem) and the point-to-point source coding (type-I and type-II dual problems) and the duality between Gelfand-Pinsker coding \cite{GelfandPinsker:80} (original problem) and Wyner-Ziv coding \cite{WynerZiv:76} (type-II dual problem). 
Furthermore, using Theorem \ref{thm:duality}, duality can be shown among the achievability results for  multiple-access channel \cite{liao_thesis} (original problem), distributed source coding \cite{Berger:77}, \cite{tung_thesis} (type-I dual problem), multiple-description  \cite{CoverElGamal:82} (type-II dual problem), and broadcast channel \cite{Marton:79} (type-III dual problem). Let us describe the last duality in detail.  For the original network, consider the multiple-access channel in Example \ref{ex:mac} represented as a  three-node ADMN $(\mathcal{X}_1, \mathcal{X}_2, \mathcal{X}_3, \mathcal{Y}_1, \mathcal{Y}_2, \mathcal{Y}_3, \prod_{k=1}^3p(y_k|y^{k-1},x^{k-1}))$
such that $\mathcal{X}_3=\mathcal{X}_{3,1}\times \mathcal{X}_{3,2}$, 
$\mathcal{X}_{3,1}=\mathcal{Y}_1$, $\mathcal{X}_{3,2}=\mathcal{Y}_2$,   $H(Y_1)=R_1$, $p(y_2|y_1,x_1)=p(y_2)$, $H(Y_2)=R_2$, and $p(y_3|y_{[1:2]},x_{[1:2]})=p(y_3|x_{[1:2]})$ with a target distribution $p^*(x_{[1:3]},y_{[1:3]})$ such that $X_3=(Y_1,Y_2)$.
For the type-I dual problem, in which the input and output alphabets are swapped, we consider the distributed source coding problem, by assuming $p_1(x_1)$, $p_1(x_2|x_1,y_1)=p_1(x_2|x_1)$, and $p_1(x_3)=p_1(x_{3,1}|y_1)p_1(x_{3,2}|y_2)$ such that $\max_{p(y_1)}I(Y_1;X_{3,1})=R_1$ and $\max_{p(y_2)}I(Y_2;X_{3,2})=R_2$, and assuming target distribution $p^*_1(x_{[1:3]},y_{[1:3]})$ similarly as in  Example \ref{ex:distribuedlc}.
Next, for the type-II dual problem, in which the order of nodes is reversed, we consider the multiple-description scenario without combined reconstruction, which is a special case of Example \ref{ex:mdc}, by assuming $p_2(y_3)$, $p_2(y_2|y_3,x_3)=p_2(y_2|x_{3,2})$ such that $\max_{p(x_{3,2})}I(X_{3,2};Y_2)=R_2$, and $p_2(y_1|y_2,y_3,x_2,x_3)=p_2(y_1|x_{3,1})$ such that $\max_{p(x_{3,1})}$ $I(X_{3,1};Y_1)=R_1$, and assuming target distribution $p^*_2(x_{[1:3]},y_{[1:3]})$ similarly as in Example \ref{ex:mdc}. 
For the type-III dual problem, we consider the broadcast channel problem in Example  \ref{ex:bc} by assuming $p_3(x_3)=p_3(x_{3,1})p_3(x_{3,2})$ such that $H(X_{3,1})=R_1$ and $H(X_{3,2})=R_2$, $p_3(x_2|y_3,x_3)=p(x_2|y_3)$, and $p_3(x_1|y_2,y_3,x_2,x_3)=p_3(x_1|y_3,x_2)$, and assuming target distribution $p^*_3(x_{[1:3]},y_{[1:3]})$ such that $Y_1=X_{3,1}$ and $Y_2=X_{3,2}$. 

Choose $\omega_d\in\Omega_d$ as follows: $\mu=2, \mathcal{U}_1=\mathcal{Y}_1\times \mathcal{X}_1 \times \mathcal{V}_1,  \mathcal{U}_2=\mathcal{Y}_2\times \mathcal{X}_2\times \mathcal{V}_2,  W_1=\{1\}, W_2=\{2\}, D_3=\{1,2\}$. 
For the multiple access channel problem, we have $U_1=(Y_1, X_1)$ and $U_2=(Y_2,X_2)$, where $p(x_1|y_1)=p(x_1)$ and $p(x_2|y_2)=p(x_2)$ such that $p=p^*$. 
For the distributed source coding problem, we let $U_1=(V_1,Y_1)$, $U_2=(V_2,Y_2)$, and $y_{3,1}(u_1,u_2,x_3)=y_{3,1}(v_1,v_2)$, where $p_1(u_1|x_1)=p_1(y_1)p_1(v_1|x_1)$ and $p_1(u_2|x_2)=p_1(y_2)p_1(v_2|x_2)$ such that $p_1=p_1^*$. 
For the multiple description problem, we assume $U_1=(X_1,X_{3,1})$ and  $U_2=(X_2,X_{3,2})$, where $p_2(u_1,u_2|y_3)=p_2(x_{3,1})p_2(x_{3,2})p_2(x_1|y_3)p_2(x_2|y_3)$ such that $p_2=p_2^*$. 
For the broadcast channel problem, we assume $U_1=(X_{3,1}, V_1)$, $U_2=(X_{3,2},V_2)$, and $y_{3,3}(u_1,u_2)=y_{3,3}(v_1,v_2)$ such that $p_3(v_1,v_2|x_3)=p_3(v_1,v_2)$ and $p_3=p_3^*$. 
    
Now, we are ready to apply Theorem \ref{thm:duality}. For the multiple access channel, we obtain the following condition to achieve $p^*$: 
\begin{align*}
r_1&>I(U_1;Y_1)=R_1\cr
r_2&>I(U_2;Y_2)=R_2\cr
r_1&<I(U_1;U_2,Y_3)=I(X_1;Y_3|X_2)\cr
r_2&<I(U_2;U_1,Y_3)=I(X_2;Y_3|X_1) \cr
r_1+r_2&<I(U_1,U_2;Y_3)+I(U_1;U_2)=I(X_1,X_2;Y_3), 
\end{align*}
which corresponds to the capacity region of the multiple access channel \cite{liao_thesis} by performing the F-M elimination, taking the union over $p(x_1)p(x_2)$, and incorporating coded time sharing \cite{HanKobayashi:81}. 

Next, for the distributed source coding problem, we obtain the following condition to achieve $p_1^*$: 
\begin{align*}
r_1&>I_{1}(U_1;X_1)=I_{1}(V_1;X_1) \cr
r_2&>I_{1}(U_2;X_2)=I_{1}(V_2;X_2) \cr
r_1&<I_{1}(U_1;U_2,X_3)=R_1+I_{1}(V_1;V_2)\cr
r_2&<I_{1}(U_2;U_1,X_3)=R_2+I_{1}(V_1;V_2)\cr
r_1+r_2&<I_{1}(U_1,U_2;X_3)+I_{1}(U_1;U_2)=R_1+R_2+I_{1}(V_1;V_2), 
\end{align*}
which corresponds to the Berger-Tung inner bound \cite{Berger:77} by performing the F-M elimination.

Next, for the multiple description problem without combined reconstruction, we obtain the following condition to achieve $p_2^*$: 
\begin{align*}
r_1&<I_{2}(U_1;Y_1)=R_1\cr
r_2&<I_{2}(U_2;Y_2)=R_2\cr
r_1&>I_{2}(U_1;Y_3)=I_{2}(X_1;Y_3)\cr%\overset{(a)}{=}I_d(X_1;X_2,Y_3)\cr
r_2&>I_{2}(U_2;Y_3)=I_{2}(X_2;Y_3)\cr%\overset{(b)}{=}I_d(X_2;X_1,Y_3) \cr
r_1+r_2&>I_{2}(U_1,U_2;Y_3)+I_2(U_1;U_2)=I_{2}(X_1;Y_3)+I_{2}(X_2;Y_3),
\end{align*}
which corresponds to the optimal rate-distortion region by performing the F-M elimination and taking the union over $p(x_1|y_3)p(x_2|y_3)$ that satisfies the distortion constraints.

Lastly, for the broadcast channel problem, we obtain the following condition to achieve $p_3^*$: 
\begin{align*}
r_1&<I_{3}(U_1;X_1)=I_{3}(V_1;X_1)\cr
r_2&<I_{3}(U_2;X_2)=I_{3}(V_2;X_2)\cr
r_1&>I_{3}(U_1;X_3)=R_1\cr
r_2&>I_{3}(U_2;X_3)=R_2\cr
r_1+r_2&>I_{3}(U_1,U_2;X_3)+I_{3}(U_1;U_2)=R_1+R_2+I_{3}(V_1;V_2),
\end{align*}
which corresponds to the Marton bound \cite{Marton:79} by performing the F-M elimination.
 
\section{Generalized Decode-Compress-Amplify-and-Forward} \label{sec:GDCAF}
In this section, as an application of our unified coding theorem, we present a generalized decode-compress-amplify-and-forward (GDCAF) bound for a single-source single-destination $N$-node ADMN, which includes hybrid coding \cite{MineroLimKim:15} and distributed decode-and-forward (DDF)  \cite{LimKimKim:14} bounds applied to layered networks as special cases. Furthermore, we show  an example where our GDCAF scheme strictly outperforms many previously known schemes. 

For a single-source single-destination $N$-node ADMN where node 1 and node $N$ are the source and the destination, respectively, we let $H(Y_1)=R$ for some $R\geq 0$, $p(y_k|x^{k-1},y^{k-1})=p(y_k|x^{k-1},y^{k-1}_2)$ for $k\in [2:N]$, and $p^*$ such that $X_N=Y_1$, i.e., $Y_1^n$ corresponds to the message of rate $R$ that does not affect the remaining channels and node $N$ wishes to decode $Y_1^n$ reliably. The following theorem gives the GDCAF bound using Corollary \ref{corollary:main}. 

\begin{theorem}[GDCAF bound for a single-source single-destination ADMN] \label{thm:GDCAF} 
For  a single-source single-destination $N$-node ADMN, a rate of $R$ is achievable if 
 \begin{align*} 
R& < \min_{S,T: S\subseteq T\subseteq [2:N-1]}I(X_1,U_S,\hat{Y}_T;\hat{Y}_{T^c}, Y_N|U_{S^c}) -\sum_{j\in T} I(\hat{Y}_j;Y_j|U_{[2:N-1]}, \hat{Y}_{T[j]}, X_1)\\
&~~~~~~~~~~~~~~~~~~~~~~~~~~~~~~~~~+ H(U_{S^c})-\sum_{j\in S^c} H(U_j|Y_j)
\end{align*}
 for some $p(x_1,u_2,\ldots,u_N)\prod_{j\in [2:N-1]}p(\hat{y}_j|y_j,u_j)$ and functions $x_k(u_k,\hat{y}_k,y_k)$ for $k\in [2:N-1]$ such that 
 \begin{align*}
\sum_{j\in S'} I(U_j;U_{S'[j]}) <\sum_{j\in S'} I(U_j;Y_j)
\end{align*} 
for all $S'\subseteq [2:N-1]$.
\end{theorem}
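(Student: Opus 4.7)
The plan is to specialize Corollary~\ref{corollary:main} to a particular $\omega'\in\Omega'$ encoding the GDCAF scheme and then eliminate the auxiliary rates by Fourier--Motzkin elimination. Use $\mu=\nu=(N-1)+(N-2)$ covering codebooks split into $N-1$ DF-type codebooks with auxiliaries $U_2,\ldots,U_N$ (all generated at the source without superposition) and $N-2$ CF-type codebooks with auxiliaries $\hat{Y}_2,\ldots,\hat{Y}_{N-1}$ (each generated at the corresponding relay and superposed on $U_j$ via $A_{\hat{Y}_j}=\{U_j\}$). At the source take $W_1=\{U_2,\ldots,U_N\}$, $D_1=\emptyset$, absorbing $X_1$ into the symbol-by-symbol mapping from $(U_{[2:N]},Y_1)$; at relay $k\in[2:N-1]$ take $D_k=\{U_k\}$, $W_k=\{\hat{Y}_k\}$, and $x_k(u_k,\hat{y}_k,y_k)$ as given; at the destination take $D_N=\{U_j: j\in S\}\cup\{\hat{Y}_j: j\in T\}\cup\{U_N\}$, $B_N=\{\hat{Y}_j: j\in T^c\}$, leaving $\{U_j: j\in S^c\setminus\{N\}\}$ undecoded.

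Plugging this $\omega'$ into Corollary~\ref{corollary:main} produces four families of inequalities: a multivariate covering family at the source, $\sum_{j\in T'} r_{U_j}>\sum_{j\in T'}I(U_j;U_{T'[j]}\mid Y_1)$ for every non-empty $T'\subseteq[2:N]$; a per-relay decoding bound $r_{U_k}<I(U_k;Y_k)$ for each $k\in[2:N-1]$; a per-relay compression bound $r_{\hat{Y}_k}>I(\hat{Y}_k;Y_k\mid U_k)$; and the simultaneous non-unique decoding family at the destination indexed by subsets of $D_N\cup B_N$ having non-empty intersection with $D_N$. The tightest inequality at the destination is the one for the full set $D_N\cup B_N$; expanding via the chain rule and the superposition $A_{\hat{Y}_j}=\{U_j\}$ turns its right-hand side into a packing bound that matches $I(X_1,U_S,\hat{Y}_T;\hat{Y}_{T^c},Y_N\mid U_{S^c})$ plus the summed entropy terms for the non-uniquely decoded $\hat{Y}_{T^c}$.

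Fourier--Motzkin elimination is then carried out as follows: eliminating each $r_{\hat{Y}_k}$ between its relay lower bound and its appearance in the destination bound produces the $-\sum_{j\in T}I(\hat{Y}_j;Y_j\mid U_{[2:N-1]},\hat{Y}_{T[j]},X_1)$ term via repeated use of the chain rule; eliminating $r_{U_j}$ for $j\in[2:N-1]\cap S$ uses the relay upper bound; and eliminating $r_{U_j}$ for $j\in S^c\setminus\{N\}$ uses the multivariate covering lower bounds at the source together with the relay upper bounds, which after bookkeeping produces the $H(U_{S^c})-\sum_{j\in S^c}H(U_j\mid Y_j)$ term. The main obstacle is this last elimination: the source covering constraints couple all $r_{U_j}$ through every subset $T'$, whereas the relay constraints are per-index, so a Hall-type feasibility argument (or, equivalently, a careful subset-by-subset induction) is needed to show that the stated hypothesis $\sum_{j\in S'}I(U_j;U_{S'[j]})<\sum_{j\in S'}I(U_j;Y_j)$ for all $S'\subseteq[2:N-1]$ is precisely the joint feasibility condition; substituting $H(Y_1)=R$ and minimizing over $(S,T)$ then yields the stated achievable rate.
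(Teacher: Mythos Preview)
There is a genuine gap: your $\omega'$ does not carry the message. You take $W_1=\{U_2,\ldots,U_N\}$ with the $U_j$ drawn from $p(u_2,\ldots,u_N)$ independently of $Y_1$, and let $X_1=x_1(U_{[2:N]},Y_1)$. But then at the destination you need $X_N=x_N(U_{D_N},Y_N)=Y_1$, and under the induced law in~\eqref{eqn:omega_distr} the message $Y_1$ is not a deterministic function of $(U_{D_N},Y_N)$; hence no such $x_N$ exists and your tuple is not in $\Omega'$. The paper avoids this by adding one more source codebook $U_1=Y_1$ (so $\mu=2N-2$, $W_1=[1:N]$) and superposing $U_N=X_1$ on all of $U_1,\ldots,U_{N-1}$ via $A_N=[1:N-1]$; the destination then takes $D_N=\{1\}$ and $B_N=[2:2N-2]$, so that recovering the message is exactly unique decoding of $U_1$.

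This structural difference also explains why your last two steps go wrong. First, the minimum over $(S,T)$ in the theorem does not come from choosing a different $\omega'$ for each $(S,T)$; it comes from the single $\omega'$ above, whose packing constraints at node $N$ are indexed by all admissible $S_N\subseteq D_N\cup B_N$ with $1\in S_N$. The superposition relations force $N\in S_N$ and $\{j:U_j\in S_N\}\subseteq\{j:\hat Y_j\in S_N\}$, which is precisely the parametrization $S\subseteq T\subseteq[2:N-1]$, and Fourier--Motzkin over the full family yields the min. If instead you let $D_N,B_N$ depend on $(S,T)$, the achievable region is the \emph{union} over $(S,T)$, so ``minimizing over $(S,T)$'' at the end is the wrong direction. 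Second, the claim that ``the tightest inequality at the destination is the one for the full set $D_N\cup B_N$'' is unjustified and generally false: for a fixed $\omega'$ the binding constraint can correspond to any admissible subset, which is exactly why the theorem has a min. Fixing these two points essentially forces you back to the paper's choice of $\omega'$.
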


\begin{remark}
In the GDCAF bound, $U_k$ and $\hat{Y}_k$ for $k\in[2:N-1]$ correspond to the partial information about the message decoded by node $k$ and the compressed version of $Y_k$ at node $k$, respectively. 
\end{remark}
\begin{remark}
The GDCAF bound recovers hybrid coding bound \cite{MineroLimKim:15} applied to layered networks by letting $U_k=\emptyset$ for $k\in [2:N-1]$. 
\end{remark}
\begin{remark}
The GDCAF bound recovers DDF bound \cite{LimKimKim:14} applied to layered networks  by letting $\hat{Y}_k=\emptyset$ and $U_k=(V_k,X_k)$ for $k\in [2:N-1]$, and $p(x_1,u_2,\ldots,u_{N-1})=\prod_{k=2}^{N-1}p(x_k)p(x_1|x_2^{N-1})p(v_2^N|x^{N-1})$. 
\end{remark}

\begin{proof} 
We apply Corollary \ref{corollary:main} to derive the GDCAF bound. We choose $\omega'\in \Omega'$ as follows: $\mu=2N-2$, $W_1=[1:N]$, $A_{N}=[1:N-1], p(u_1, \ldots, u_{N}|y_1)=\mathbbm{1}_{u_1=y_1}\cdot p(u_2,\ldots, u_{N})$, $X_1=U_N$, $D_k=\{k\}$, $W_k=\{N+k-1\}$,  $A_{N+k-1}=\{k\}$ for $k\in [2:N-1]$, $D_N=\{1\}$, and $B_N=[2:2N-2]$. For notational convenience, let $r_k'$ and $\hat{Y}_k$ denote $r_{N+k-1}$ and $U_{N+k-1}$, respectively, for $k\in [2:N-1]$.   

By applying Corollary \ref{corollary:main} for the aforementioned choice of $\omega'$, we obtain the following bounds:  
\begin{align*}
r_1&>R \cr 
\sum_{j\in S}r_j&>\sum_{j\in S} I(U_j;U_{S[j]}) \\
\sum_{j\in [1:N]}r_j&>R+\sum_{j\in [2:N-1]} I(U_j;U^{j-1})\\
r_k&<I(U_k;Y_k) \\
r_k'&>I(\hat{Y}_k;Y_k|U_k)\\
r_1+r_N+\sum_{j\in S}r_j+\sum_{j\in T} r_j' &< \sum_{j\in S} I(U_j;U_{S[j]\cup S^c}, \hat{Y}_{T^c}, Y_N)+I(X_1; \hat{Y}_{T^c}, Y_N|U_{[2:N-1]})\cr
&~~~~~~+\sum_{j\in T} I(\hat{Y}_j;X_1,U_{[2:N-1]}, \hat{Y}_{T[j] \cup T^c}, Y_N|U_j) 
\end{align*}
for all $k\in [2:N]$ and for all $S$ and $T$ such that $S\subseteq T \subseteq [2:N-1]$. By performing the F-M elimination, Theorem \ref{thm:GDCAF} is proved. 
\end{proof}

\begin{figure}[t]
 \centering
   {\includegraphics[width=110mm]{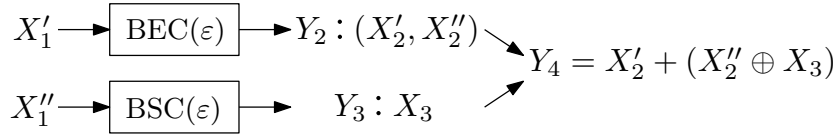} }
   \caption{An example of diamond network}  \label{fig:diamond_ex}
\end{figure}

Now, let us show that the GDCAF scheme strictly outperforms many existing schemes for the diamond network illustrated in Fig. \ref{fig:diamond_ex}. In this diamond network, $X_1=(X_1',X_1'')$, the channel from $X_1'$ to $Y_2$ is a binary erasure channel (BEC) with erasure probability $\varepsilon$, and the channel from $X_1''$ to $Y_3$ is a binary symmetric channel (BSC) with cross over probability $\varepsilon$, where the two channels are correlated, i.e., cross over in the BSC happens iff an erasure occurs in the BEC. We have $X_2=(X_2',X_2'')$ and $Y_4=X_2'+(X_2''\oplus X_3)$, where $X_2',X_2''$, and $X_3$ are binary and $\oplus$ denotes the XOR operation. Let us assume $\varepsilon=1-H(1/3)$. Then, the capacity of the BEC is $1-\varepsilon \approx 0.9183$ and the capacity of the BSC is $1-H(\varepsilon)\approx 0.5918$. The cut-set bound is given as $\log 3\approx 1.5850$.

In this diamond network, the GDCAF scheme achieves the cut-set bound, which means the capacity is $\log 3$. Let $\hat{Y}_2=\hat{Y}_3=U_3=\emptyset$, $U_2=(X_1',X_2')$, $X_1'\sim\operatorname{Bern}(1/2)$, $(X_1'',X_2')\sim \operatorname{Unif}(\{(0,0),(0,1),(1,1)\})$, $X_2''=0$ if $Y_2$ is not erased, $X_2''=1$ if $Y_2$ is erased, and $X_3=Y_3$, where $X_1'$ and $(X_1'',X_2')$ are independent. Then, the GDCAF rate is given as 
\begin{align*}
&\min\{I(X_1,X_2';Y_4),I(X_1;Y_4|X_1',X_2')+I(X_1',X_2';Y_2)\}=\log 3. 
\end{align*}

% [shl: mentioned why our scheme outperform the other schemes]
Our choice of coding parameters for GDCAF scheme indicates that node 2 performs a combination of partial DF and AF and node 3 employs AF. Now, let us show the suboptimality of DF \cite{schein_thesis}, partial DF, DDF \cite{LimKimKim:14}, and hybrid coding \cite{MineroLimKim:15}  schemes. First, the  DF rate  \cite{schein_thesis} is given as $\min\{I(X_1;Y_2)$, $I(X_1;Y_3)$, $I(X_2,X_3;Y_4)\}$ for some $p(x_1)p(x_2,x_3)$, which is upper-bounded by 1.
Next, a partial DF scheme can be constructed by combining Marton coding with common information \cite{liang_thesis} for the first hop and the optimal scheme \cite{SlepianWolf:73} for the MAC with common information for the second hop. From the first hop, we have
\begin{align*}
R&<I(U_0,U_1;Y_2)+I(U_2;Y_3|U_0)-I(U_1;U_2|U_0)\\
&\le I(U_0,U_1,X_1;Y_2)+I(U_2,X_1;Y_3|U_0)\\
&\le I(X_1;Y_2)+I(X_1;Y_3)\\
&\le 1-\varepsilon+1-H(\varepsilon)\approx 1.5101
\end{align*}
for some $p(u_0,u_1,u_2)$ and a function $x_1(u_0,u_1,u_2)$.
Thus, such a partial DF scheme is also suboptimal.

On the other hand, the DDF rate \cite{LimKimKim:14} for the diamond channel is given as follows: 
\begin{align*}
R&<\min\{I(X_1;U_2,U_3|X_2,X_3)-I(U_2;X_1,X_3|X_2,Y_2)-I(U_3;U_2,X_1,X_2|X_3,Y_3),\\
&~~~~~~~~~~~I(X_1,X_2;U_3,Y_4|X_3)-I(U_3;X_1,X_2|X_3,Y_3),\\
&~~~~~~~~~~~I(X_1,X_3;U_2,Y_4|X_2)-I(U_2;X_1,X_3|X_2,Y_2),I(X_2,X_3;Y_4)\}\\
&=\min\{H(U_{2},U_3|X_{2},X_3)-H(U_2|X_2,Y_2)-H(U_3|X_3,Y_3),\\
&~~~~~~~~~~~I(X_2;Y_4|U_3,X_3)+I(U_3;Y_3|X_3),I(X_3;Y_4|U_2,X_2)+I(U_2;Y_2|X_2),I(X_{2},X_3;Y_4)\}
\end{align*}
for $p(x_2)p(x_3)p(x_1,u_2,u_3|x_2,x_3)$. The DDF rate is upper-bounded as follows: 
\begin{align*}
R&\le H(U_{2},U_3|X_{2},X_3)-H(U_2|X_2,Y_2)-H(U_3|X_3,Y_3)\\
&\le H(U_2|X_2)+H(U_3|X_3)-H(U_2|X_2,Y_2)-H(U_3|X_3,Y_3)\\
&=I(U_2;Y_2|X_2)+I(U_3;Y_3|X_3)\\
&\le I(U_2,X_1;Y_2|X_2)+I(U_3,X_1;Y_3|X_3)\\
&\le H(Y_2)-H(Y_2|X_1)+H(Y_3)-H(Y_3|X_1)\\
&=I(X_1;Y_2)+I(X_1;Y_3)\\
&\le 1-\varepsilon+1-H(\varepsilon)\approx 1.5101,
\end{align*}
hence it is suboptimal. 

Lastly, the hybrid coding scheme achieves the following rate
\begin{align*}
&\min\{I(X_1;\hat{Y}_2,\hat{Y}_3,Y_4),I(X_1,\hat{Y}_2;\hat{Y}_3,Y_4)-I(\hat{Y}_2;Y_2|X_1),\\
&\;\;\;\;\;\;\;\;I(X_1,\hat{Y}_3;\hat{Y}_2,Y_4)-I(\hat{Y}_3;Y_3|X_1),I(X_1,\hat{Y}_2,\hat{Y}_3;Y_4)-I(\hat{Y}_2,\hat{Y}_3;Y_2,Y_3|X_1)\}
\end{align*}
for some $p(x_1)p(\hat{y}_2|y_2)p(\hat{y}_3|y_3)$ and functions $x_2(\hat{y}_2,y_2)$ and $x_3(\hat{y}_3,y_3)$.
Let us show the suboptimality of the hybrid coding scheme by contradiction. Assume hybrid coding achieves the capacity. Fix $p(x_1)p(\hat{y}_2|y_2)p(\hat{y}_3|y_3)$ and functions $x_2(\hat{y}_2,y_2)$ and $x_3(\hat{y}_3,y_3)$ that achieves the capacity.
Then \[I(X_1,\hat{Y}_2,\hat{Y}_3;Y_4)-I(\hat{Y}_2,\hat{Y}_3;Y_2,Y_3|X_1)\ge \log 3\] should hold, which implies
\begin{align}
H(Y_4)&=\log 3 \label{eqn:dia_y4} \\
H(Y_4|X_1,\hat{Y}_2,\hat{Y}_3)&=0 \label{eqn:dia_y4_con}\\
I(\hat{Y}_2,\hat{Y}_3;Y_2,Y_3|X_1)&=0. \label{eqn:dia_markov}
\end{align}  
 
From (\ref{eqn:dia_markov}), we have $p(\hat{y}_2,\hat{y}_3|y_2,y_3,x_1)=p(\hat{y}_2,\hat{y}_3|x_1)$ for all $(y_2,y_3,x_1)$ such that $p(y_2,y_3,x_1)>0$. On the other hand, for all $(y_2,y_3,x_1)$ such that $p(y_2,y_3,x_1)>0$, $p(\hat{y}_2,\hat{y}_3|y_2,y_3,x_1)=p(\hat{y}_2|y_2)p(\hat{y}_3|y_3)$ should hold. Thus, we conclude that $p(\hat{y}_2|y_2)p(\hat{y}_3|y_3)=p(\hat{y}_2,\hat{y}_3|x_1)$ for all $(y_2,y_3,x_1)$ such that $p(y_2,y_3,x_1)>0$. Note that $p(x_1)>0$ for at least three different values of $x_1$ since otherwise the achievable rate is upper bounded by $R\le H(X_1)\le 1$. This means that there exists $x\in \{0,1\}$ such that $p(x_1'=0,x_1''=x)>0$ and $p(x_1'=1,x_1''=x)>0$. Then, we have
\begin{align*}
p(\hat{y}_2|y_2=0)p(\hat{y}_3|y_3=x)&=p(\hat{y}_2,\hat{y}_3|x_1'=0,x_1''=x)\\
p(\hat{y}_2|y_2=E)p(\hat{y}_3|y_3=1-x)&=p(\hat{y}_2,\hat{y}_3|x_1'=0,x_1''=x)\\
p(\hat{y}_2|y_2=1)p(\hat{y}_3|y_3=x)&=p(\hat{y}_2,\hat{y}_3|x_1'=1,x_1''=x)\\
p(\hat{y}_2|y_2=E)p(\hat{y}_3|y_3=1-x)&=p(\hat{y}_2,\hat{y}_3|x_1'=1,x_1''=x).
\end{align*}
Therefore, we get
\begin{align*}
p(\hat{y}_2|y_2=0)p(\hat{y}_3|y_3=x)=
p(\hat{y}_2|y_2=E)p(\hat{y}_3|y_3=1-x)=
p(\hat{y}_2|y_2=1)p(\hat{y}_3|y_3=x)
\end{align*}
for all $\hat{y}_2$ and $\hat{y}_3$. Thus, we conclude $p(\hat{y}_3|y_3)=p(\hat{y}_3)$ by summing $p(\hat{y}_2|y_2=0)p(\hat{y}_3|y_3=x)=p(\hat{y}_2|y_2=E)p(\hat{y}_3|y_3=1-x)$ over $\hat{y}_2$. Similarly, we conclude $p(\hat{y}_2|y_2)=p(\hat{y}_2)$. Thus, we get $p(x_1,y_2,y_3,\hat{y}_2,\hat{y}_3)=p(x_1)p(y_2,y_3|x_1)p(\hat{y}_2)p(\hat{y}_3)$.

From (\ref{eqn:dia_y4_con}), we have
\begin{align}
H(Y_4|X_1=x_1,\hat{Y}_2=\hat{y}_2,\hat{Y}_3=\hat{y}_3)=0 \label{eqn:dia_cond2}
\end{align} 
for all $(x_1,\hat{y}_2,\hat{y}_3)$ such that $p(x_1,\hat{y}_2,\hat{y}_3)=p(x_1)p(\hat{y}_2)p(\hat{y}_3)>0$. 
On the other hand, since we are assuming hybrid coding achieves the capacity, we must have $I(X_1;\hat{Y}_2,\hat{Y}_3,Y_4)\ge H(Y_4)=\log 3$. Note that $I(X_1;\hat{Y}_2,\hat{Y}_3,Y_4)=I(X_1;Y_4|\hat{Y}_2,\hat{Y}_3)\le H(Y_4|\hat{Y}_2,\hat{Y}_3)$. Thus, we conclude $H(Y_4|\hat{Y}_2,\hat{Y}_3)=H(Y_4)=\log 3$. This means 
\begin{align}
H(Y_4|\hat{Y}_2=\hat{y}_2, \hat{Y}_3=\hat{y}_3)=\log 3 \label{eqn:dia_cond1}
\end{align}
for all $\hat{y}_2$ and $\hat{y}_3$ such that $p(\hat{y}_2)p(\hat{y}_3)>0$. 

Because $Y_4$ is a function of $X_2$ and $X_3$, there exists a function $y_4(\cdot)$ such that $Y_4=y_4(Y_2,Y_3,\hat{Y}_2,\hat{Y}_3)$. Fix $(\hat{y}_2,\hat{y}_3)$ such that $p(\hat{y}_2)p(\hat{y}_3)>0$. Because $p(x_1'=0, x_1''=x)>0$, we have $p(x_1'=0, x_1''=x, y_2=0, y_3=x)>0$ and $p(x_1'=0, x_1''=x, y_2=E, y_3=1-x)>0$. Due to (\ref{eqn:dia_cond2}), this implies 
\begin{align}
y_4(0,x,\hat{y}_2,\hat{y}_3)=y_4(E,1-x,\hat{y}_2,\hat{y}_3).
\end{align}
Similarly, from $p(x_1'=1, x_1''=x)>0$, we conclude 
\begin{align}
y_4(1,x,\hat{y}_2,\hat{y}_3)=y_4(E,1-x,\hat{y}_2,\hat{y}_3).
\end{align}
Hence, we have 
\begin{align}
y_4(0,x,\hat{y}_2,\hat{y}_3)=y_4(1,x,\hat{y}_2,\hat{y}_3)=y_4(E,1-x,\hat{y}_2,\hat{y}_3). \label{eqn:dia_same}
\end{align}
From (\ref{eqn:dia_cond2}), (\ref{eqn:dia_cond1}), and (\ref{eqn:dia_same}), both $p(x_1'=0, x_1''=1-x)$  and $p(x_1'=1, x_1''=1-x)$ should be positive, which implies 
\begin{align}
y_4(0,1-x,\hat{y}_2,\hat{y}_3)=y_4(1,1-x,\hat{y}_2,\hat{y}_3)=y_4(E,x,\hat{y}_2,\hat{y}_3).\label{eqn:dia_same2}
\end{align}
Note that (\ref{eqn:dia_same}) and (\ref{eqn:dia_same2}) implies that for $(\hat{Y}_2,\hat{Y}_3)=(\hat{y}_2, \hat{y}_3)$, $Y_4$ has only two possibilities, which is contradictory to (\ref{eqn:dia_cond1}). Therefore, we conclude that the hybrid coding scheme is strictly suboptimal.

\section{Acyclic Gaussian Network} \label{sec:Gaussian} 
In this section, we consider an $N$-node acyclic Gaussian network (AGN), in which the channel output $Y_k$ and channel input $X_k$ at node $k$ are $r_k$-dimensional and $t_k$-dimensional vectors, respectively, and the channel from nodes $1,\ldots,k-1$ to node $k$ is given as 
\begin{align}
Y_k=\sum_{j\in [1:k-1]}H_{kj}X_j+\sum_{j\in [1:k-1]}H_{kj}'Y_j+Y_k', \label{eqn:gaussian_model}
\end{align}
where $H_{kj}$ is an $r_k\times t_j$ matrix, $H_{kj}'$ is an $r_k\times r_j$ matrix, and $Y_k'\sim \mathcal{N}(0, \Lambda_{Y_k'})$ is independent from $X^{k-1}$ and $Y^{k-1}$. Let $n$ denote the number of channel uses and $Y_k^n$ and $X_k^n$ for $k\in [1:N]$ denote the $r_k\times n$ and $t_k\times n$ matrices, respectively, where the $i$-th vectors $Y_{k,i}$ and $X_{k,i}$ of $Y_k^n$ and $X_k^n$ are the channel output and channel input, respectively, at node $k$ at the $i$-th channel use. Similarly as in ADMNs, node operations are sequential, i.e., $Y_1^n$ is generated according to (\ref{eqn:gaussian_model}) and node 1 maps it to $X_1^n$, $Y_2^n$ is generated according to  (\ref{eqn:gaussian_model}) and node 2 maps it to $X_2^n$, and so on.

The objective of this network is specified by  a $\theta$-dimensional nonnegative vector $\Theta^*$ and a nonnegative  function $\Theta$ that maps $(x_1, \ldots, x_N, y_1, \ldots, y_N)\in \mathcal{X}_1\times \ldots\times \mathcal{X}_N\times \mathcal{Y}_1\times \ldots \times \mathcal{Y}_N$ to a $\theta$-dimensional vector whose elements are all nonnegative. For a set of node processing functions $Y_k^n\rightarrow X_k^n$, $k=1,\ldots, N$, the $\epsilon$-probability of error for $\epsilon>0$ is defined as 
\begin{align*}
P_e^{(n)}(\Theta, \Theta^*,\epsilon)&=1-P\Big(\frac{1}{n}\sum_{i=1}^n \Theta(X_{1,i},\ldots, X_{N,i}, Y_{1,i},\ldots, Y_{N,i})\prec (1+\epsilon)\Theta^* \Big).
\end{align*}

We say $(\Theta, \Theta^*)$ is achievable if there exists a sequence of node processing functions $Y_k^n\rightarrow X_k^n$, $k=1,\ldots, N$, such that $\lim_{n\rightarrow \infty} P_e^{(n)}(\Theta, \Theta^*,\epsilon)=0$ for any $\epsilon>0$. We note that $(\Theta, \Theta^*)$ can be used for imposing input power constraint and quadratic distortion constraint between a source sequence and a reconstructed sequence. 

To derive a sufficient condition for achieving $(\Theta, \Theta^*)$ by using Theorem \ref{thm:main}, let us first assume a distribution $f^*(x_{[1:N]}, y_{[1:N]})$ such that $\E(\Theta(X_1, \ldots, X_N, Y_1, \ldots, Y_N))\prec \Theta^*$. For such $f^*(x_{[1:N]}, y_{[1:N]})$, consider a subset $\Omega_g(f^*)$ of $\Omega(f^*)$,\footnote{The definition of $\Omega$ in Section \ref{sec:main} can be naturally generalized for continous random variables.}  where $U_j$ is an $a_j$-dimensional vector and $U_{ \uin{W}_k}$ and $X_k$ have the form of 
\begin{align}
U_{ \uin{W}_k}&=G_{k}U_{ \uin{D}_k}+G_k'Y_k+U_{ \uin{W}_k}' \label{eqn:uwk_ori}\\
X_k&=F_{k}U_{ \uin{D}_k\cup  \uin{W}_k}+F_k'Y_k. \nonumber
\end{align}
In the above, $G_k$ is a $\sum_{j\in  \uin{W}_k}a_j \times \sum_{j\in  \uin{D}_k}a_j$ matrix, $G_k'$ is a $\sum_{j\in  \uin{W}_k}a_j \times r_k$ matrix, $U_{ \uin{W}_k}'\sim \mathcal{N}(0, \Lambda_{U_{ \uin{W}_k}'})$ is a  $\sum_{j\in  \uin{W}_k}a_j$-dimensional Gaussian vector, $F_k$ is a $t_k \times \sum_{j\in  \uin{D}_k\cup  \uin{W}_k}a_j$ matrix, and $F_k'$ is a $t_k \times r_k$ matrix, where $U_{ \uin{W}_k}'$ is independent from $U_{ \uin{D}_k}$ and $Y_k$.

Note that $U_{ \uin{W}_k}$ and $X_k$ can be rewritten as follows:  
\begin{align}
U_{ \uin{W}_k}&=\sum_{j\in [1:k-1]}G_{kj}U_{ \uin{W}_j}+G_k'Y_k+U_{ \uin{W}_k}' \label{eqn:uwk} \\
X_k&=\sum_{j\in[1:k]}F_{kj}U_{ \uin{W}_j}+F_k'Y_k, \label{eqn:xk}
\end{align}
where the columns of $G_{kj}$ and $F_{kj}$ corresponding to $U_{ \uin{D}_k}$ and $U_{ \uin{D}_k\cup  \uin{W}_k}$ are from $G_k$ and $F_k$, respectively, and the other columns of $G_{kj}$ and $F_{kj}$ are zero vectors. 

The following lemma gives $U_{ \uin{W}_k}$ and $Y_k$, whose proof is given at the end of this section.
\begin{lemma} \label{lemma:nonrecursive}
For $k\in [1:N]$, we have 
\begin{align}
\begin{bmatrix}U_{ \uin{W}_k} \\ Y_k \end{bmatrix}=\sum_{j\in [1:k]} \Phi_{kj} \Psi_j, \label{eqn:uwk_nonrecursive}
\end{align}
where 
\begin{align}
\Phi_{kj}\triangleq
\begin{cases} 
\sum_{S: \{j,k\}\subseteq S \subseteq [j:k]}\prod_{i\in [1:|S|-1]}\Upsilon_{S_{[i+1]}S_{[i]}} & \mbox{ if } j<k \\
\mathbf{I} &\mbox{ if } j=k 
\end{cases}, \label{eqn:Mkj}
\end{align}

\begin{align} \label{eqn:Mkjprime2}
\Psi_j\triangleq \begin{bmatrix}
G_j'Y_j'+U_{ \uin{W}_j}' \\ Y_j'
\end{bmatrix},
\end{align}

\begin{align} \label{eqn:Mkjprime1}
\Upsilon_{k'j'}\triangleq \begin{bmatrix}
G_{k'j'}+G_{k'}'\sum_{i\in [j':k'-1]}H_{k'i}F_{ij'} & G_{k'}'(H_{k'j'}F_{j'}'+H_{k'j'}') \\
\sum_{i\in [j':k'-1]}H_{k'i}F_{ij'} & H_{k'j'}F_{j'}'+H_{k'j'}' \end{bmatrix}. \end{align}
 \end{lemma}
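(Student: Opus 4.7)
The plan is to first reduce the stated non-recursive identity to a one-step linear recursion of the form
\[
\begin{bmatrix} U_{\uin{W}_k} \\ Y_k \end{bmatrix} = \Psi_k + \sum_{j\in[1:k-1]} \Upsilon_{kj} \begin{bmatrix} U_{\uin{W}_j} \\ Y_j \end{bmatrix},
\]
and then unroll that recursion by induction on $k$. To derive the recursion, I would start from the network equation $Y_k = \sum_{j\in[1:k-1]} H_{kj}X_j + \sum_{j\in[1:k-1]} H_{kj}' Y_j + Y_k'$, substitute the expression (\ref{eqn:xk}) for each $X_j$, and swap the order of summation in the resulting double sum so the outer index runs over $U_{\uin{W}_j}$'s. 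This yields
\[
Y_k = \sum_{j\in[1:k-1]} \Bigl(\sum_{i\in[j:k-1]} H_{ki} F_{ij}\Bigr) U_{\uin{W}_j} + \sum_{j\in[1:k-1]} (H_{kj}F_j' + H_{kj}') Y_j + Y_k'.
\]
Substituting this into (\ref{eqn:uwk}) and collecting the coefficients of $U_{\uin{W}_j}$ and $Y_j$ for each $j<k$ directly produces the blocks of $\Upsilon_{kj}$ as in (\ref{eqn:Mkjprime1}), while the innovation terms coming from $Y_k'$ and $U_{\uin{W}_k}'$ assemble into $\Psi_k$ as in (\ref{eqn:Mkjprime2}).

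Once the recursion is established, I would do induction on $k$. The base case $k=1$ is immediate since $U_{\uin{W}_1}$ and $Y_1$ depend only on $U_{\uin{W}_1}'$ and $Y_1'$, and $\Phi_{11}=\mathbf{I}$. For the inductive step, plug the inductive hypothesis $\begin{bmatrix} U_{\uin{W}_j}\\ Y_j\end{bmatrix} = \sum_{i\in[1:j]} \Phi_{ji}\Psi_i$ into the recursion; this gives
\[
\begin{bmatrix} U_{\uin{W}_k}\\ Y_k\end{bmatrix} = \Psi_k + \sum_{i\in[1:k-1]} \Bigl(\Upsilon_{ki}\Phi_{ii} + \sum_{j\in[i+1:k-1]} \Upsilon_{kj}\Phi_{ji}\Bigr) \Psi_i.
\]
The remaining task is the combinatorial identity $\Phi_{ki} = \Upsilon_{ki} + \sum_{j\in[i+1:k-1]} \Upsilon_{kj}\Phi_{ji}$ for $i<k$. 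This follows from the interpretation of $\Phi_{ki}$ as a sum over chains $S$ with $\min S = i$ and $\max S = k$: conditioning on the second-largest element $j\in S$ partitions such chains into those with $j=i$ (giving the term $\Upsilon_{ki}$, since then $S=\{i,k\}$) and those with $j>i$ (giving a chain from $i$ to $j$, encoded by $\Phi_{ji}$, followed by the edge $j\to k$, contributing the factor $\Upsilon_{kj}$).

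The main obstacle I anticipate is the bookkeeping in the chain-expansion identity for $\Phi_{ki}$: one must be careful that the product $\prod_{i\in[1:|S|-1]} \Upsilon_{S_{[i+1]}S_{[i]}}$ in the definition (\ref{eqn:Mkj}) multiplies its factors in the correct (right-to-left) order consistent with the recursion, and that appending the edge $j\to k$ to a chain ending at $j$ corresponds exactly to left-multiplication by $\Upsilon_{kj}$. Beyond that bookkeeping, the algebra reducing $Y_k$ and $U_{\uin{W}_k}$ to the one-step recursion is a straightforward substitution and re-indexing, and the induction then closes the argument.
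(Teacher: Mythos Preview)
Your proposal is correct and follows essentially the same route as the paper: first derive the one-step recursion $\begin{bmatrix}U_{\uin{W}_k}\\ Y_k\end{bmatrix}=\Psi_k+\sum_{j<k}\Upsilon_{kj}\begin{bmatrix}U_{\uin{W}_j}\\ Y_j\end{bmatrix}$ by substituting (\ref{eqn:xk}) into (\ref{eqn:gaussian_model}) and then into (\ref{eqn:uwk}), and second unroll it by strong induction, reducing the inductive step to the chain-expansion identity $\Phi_{ki}=\sum_{j\in[i:k-1]}\Upsilon_{kj}\Phi_{ji}$. Your anticipated bookkeeping concern (that conditioning on the second-largest element of $S$ corresponds to peeling off the leftmost factor $\Upsilon_{kj}$) is exactly the point, and the paper handles it in the same way.
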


From Lemma \ref{lemma:nonrecursive}, for any $S\subseteq  \uin{W}^N$, we can construct a matrix $\Phi_{U_S}$ such that 
$U_S=\Phi_{U_S}\Psi^{k(S)}$, where $k(S)=\max(\{k:S\cap  \uin{W}_k \neq \emptyset\})$. Also, for any $k\in [1:N]$ and $S\subseteq  \uin{W}^k$, we can construct matrices $\Phi_{U_S,Y_k}$ such that 
$[U_S^t~ Y_k^t]^t=\Phi_{U_S,Y_k}\Psi^k$. 

Now, we are ready to present a sufficient condition for achieving $(\Theta, \Theta^*)$ for an $N$-node AGN. 
\begin{theorem}\label{thm:gaussian}
For an $N$-node AGN, $(\Theta, \Theta^*)$ is achievable if there exists $\omega_g\in \Omega_g(f^*)$ for some $f^*$ that satisfies  $\E(\Theta(X_1, \ldots, X_N, Y_1, \ldots, Y_N))\prec \Theta^*$ such that for $1\leq k \leq N$
\begin{align*}
\sum_{j\in \lset{S}_k} r_j&<\frac{1}{2}\log \frac{|\Phi_{U_{  \uset{S}_k^c}, Y_k}\Lambda_{\Psi^k} \Phi_{U_{  \uset{S}_k^c}, Y_k}^t  |\cdot \prod_{j\in   \uset{S}_k}|\Phi_{U_{j},U_{A_j}}\Lambda_{\Psi^{k(\{j\}\cup A_j)}} \Phi_{U_{j},U_{A_j}}^t|}{|\Phi_{U_{ \uin{D}_k\cup  \uin{B}_k},Y_k}\Lambda_{\Psi^k} \Phi_{U_{ \uin{D}_k\cup  \uin{B}_k},Y_k}^t|\cdot \prod_{j\in   \uset{S}_k}|\Phi_{U_{A_j}}\Lambda_{\Psi^{k(A_j)}} \Phi_{U_{A_j}}^t|}\\ 
\sum_{j\in \lset{T}_k} r_j&>\frac{1}{2}\log \frac{\prod_{j\in   \uset{T}_k}|\Phi_{U_{j},U_{A_j}}\Lambda_{\Psi^{k(\{j\}\cup A_j)}} \Phi_{U_{j},U_{A_j}}^t|}{|\Lambda_{U_{  \uset{T}_k}'}|\cdot \prod_{j\in   \uset{T}_k}|\Phi_{U_{A_j}}\Lambda_{\Psi^{k(A_j)}} \Phi_{U_{A_j}}^t|}
\end{align*}
for all $\lset{S}_k\subseteq \lin{D}_k\cup \lin{B}_k$ such that $\lset{S}_k\cap \lin{D}_k\neq \emptyset$ and for all $\lset{T}_k\subseteq \lin{W}_k$ such that $\lset{T}_k\neq \emptyset$, where $\lin{D}_k$, $\lin{B}_k$, $\lin{W}_k$, $ \uset{S}_k$, and $\uset{T}_k$ are defined in Theorem \ref{thm:main} and $\Lambda_{\Psi^k}$ is a block diagonal matrix with diagonal blocks 
\begin{align*}
\Lambda_{\Psi_j}=
\begin{bmatrix}
\Lambda_{U_{\uin{W}_j}'}+G_j'\Lambda_{Y_j'}G_j'^t & G_j'\Lambda_{Y_j'} \\
\Lambda_{Y_j'}G_j'^t & \Lambda_{Y_j'}
\end{bmatrix},~~ j\in [1:k].
\end{align*}
\end{theorem}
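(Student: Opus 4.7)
The plan is to derive Theorem~\ref{thm:gaussian} as the Gaussian specialization of Theorem~\ref{thm:main}, with three main ingredients: (i) a continuous-alphabet extension of Theorem~\ref{thm:main} with jointly Gaussian auxiliaries, (ii) closed-form evaluation of the conditional mutual informations in (\ref{eqn:main_sk}) and (\ref{eqn:main_tk}) using Lemma~\ref{lemma:nonrecursive}, and (iii) conversion from the joint-typicality objective to the moment-type objective $(\Theta,\Theta^*)$.

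First, I would establish that Theorem~\ref{thm:main} continues to hold for continuous alphabets when the auxiliary vectors are jointly Gaussian. This is the standard Gaussianization step: one either truncates and quantizes the Gaussian vectors on a sufficiently fine grid, applies the discrete theorem, and passes to the limit, or one adopts the continuous-alphabet typicality framework and re-runs the error analysis in the proof of Theorem~\ref{thm:main} verbatim. Since the coding operations specified in $\Omega_g(f^*)$, namely superposition generation conditional on $U_{A_j}$, simultaneous (nonunique) decoding, simultaneous compression according to the affine Gaussian kernel in (\ref{eqn:uwk_ori}), and the affine symbol-by-symbol mapping, all preserve Gaussianity, the extension produces exactly the discrete-alphabet bounds of Theorem~\ref{thm:main} but with differential-entropy mutual informations.

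Second, I would evaluate these mutual informations in closed form. By Lemma~\ref{lemma:nonrecursive}, every tuple $(U_S,Y_k)$ of interest is an affine function $\Phi_{U_S,Y_k}\Psi^k$ of the innovation stack $\Psi^k$, whose covariance is the block-diagonal $\Lambda_{\Psi^k}$ given in the theorem. Hence the joint covariance of any such tuple is $\Phi_{U_S,Y_k}\Lambda_{\Psi^k}\Phi_{U_S,Y_k}^t$. Applying the standard Gaussian identity
\[
I(A;B\mid C)=\tfrac{1}{2}\log\frac{|\Lambda_{AC}|\,|\Lambda_{BC}|}{|\Lambda_C|\,|\Lambda_{ABC}|}
\]
term-by-term to (\ref{eqn:main_sk}) and (\ref{eqn:main_tk}) and telescoping over $j\in\bar S_k$ (respectively $j\in\bar T_k$), the interior covariance determinants indexed by $U_{\bar S_k[j]}$ (resp.\ $U_{\bar T_k[j]}$) cancel in chain, leaving exactly the determinant ratios stated in Theorem~\ref{thm:gaussian}. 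For the compression inequality, the key additional observation is that $U_{\bar T_k}$ is driven by the independent innovations $U_{\bar T_k}'$ with covariance $\Lambda_{U_{\bar T_k}'}$, which yields the $|\Lambda_{U_{\bar T_k}'}|$ factor in the denominator.

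Third, I would convert joint-typicality achievability into $(\Theta,\Theta^*)$-achievability. Since $f^*$ is chosen so that $\mathrm{E}[\Theta(X_{[1:N]},Y_{[1:N]})]\prec\Theta^*$ strictly and component-wise, I fix $\epsilon>0$ small enough that $(1+\delta(\epsilon))\mathrm{E}[\Theta]\prec(1+\epsilon)\Theta^*$, and invoke the typical-average lemma coordinate-by-coordinate on $\Theta$: whenever the realized sequences are jointly $\epsilon'$-typical with respect to $f^*$ for sufficiently small $\epsilon'$, the empirical average $\tfrac{1}{n}\sum_i\Theta(X_{[1:N],i},Y_{[1:N],i})$ is component-wise below $(1+\epsilon)\Theta^*$. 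Combined with the continuous-alphabet version of Theorem~\ref{thm:main}, this gives $\lim_n P_e^{(n)}(\Theta,\Theta^*,\epsilon)=0$ as required.

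The main obstacle I anticipate is the continuous-alphabet extension itself: although this is folklore in isolated Gaussian settings, a rigorous quantization/truncation argument for the fully general unified scheme is delicate, because the decoding and compression analyses in the appendices rely on counting-type bounds for typical sequences, and the affine map $F_kU_{\bar D_k\cup\bar W_k}+F_k'Y_k$ can take arbitrarily large values that must be reconciled with the quadratic cost implicit in $\Theta$. Handling this cleanly will require either a careful truncation of Gaussian codewords together with a vanishing-probability excursion bound, or a direct reproof of the error analysis using $\epsilon$-typicality for densities and the differential-entropy packing/covering lemmas. Once this extension is secured, the determinant calculation and the typical-average step are essentially mechanical.
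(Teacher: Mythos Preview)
Your proposal is correct and essentially matches the paper's proof: apply Theorem~\ref{thm:main} to $f^*$ and $\omega_g\in\Omega_g(f^*)$, collapse each mutual-information sum via the chain rule as $\sum_{j\in S_k} h(U_j|U_{A_j})-h(U_{S_k}|U_{S_k^c},Y_k)$ (and for the covering bound use $h(U_{T_k}|U_{D_k},Y_k)=h(U'_{T_k})$ from the affine form~(\ref{eqn:uwk_ori})), then write each differential entropy as a log-determinant of the appropriate $\Phi\Lambda_{\Psi}\Phi^t$ via Lemma~\ref{lemma:nonrecursive}. The continuous-alphabet extension you flag as the main obstacle is dispatched in the paper by a one-line appeal to the standard discretization procedure and the typical average lemma, so your concern is legitimate but the paper does not elaborate further.
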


\begin{proof}   
We apply Theorem \ref{thm:main} for $f^*(x_{[1:N]}, y_{[1:N]})$ such that $\E(\Theta(X_1, \ldots, X_N, Y_1, \ldots, Y_N))\prec \Theta^*$ and $\omega_g\in \Omega_g(f^*)$. Then, the inequalities in Theorem \ref{thm:main} is given as follows: for $k\in [1:N]$, $\lset{S}_k\subseteq \lin{D}_k\cup \lin{B}_k$ such that $\lset{S}_k\cap \lin{D}_k\neq \emptyset$, and $\lset{T}_k\subseteq \lin{W}_k$ such that $\lset{T}_k\neq \emptyset$, we have 
\begin{align*}
\sum_{j\in \lset{S}_k} r_j&<\sum_{j\in   \uset{S}_k}I(U_j;U_{  \uset{S}_k[j]\cup   \uset{S}_k^c},Y_k|U_{A_j})\cr
&=\sum_{j\in   \uset{S}_k}h(U_j|U_{A_j})-h(U_{  \uset{S}_k}|U_{  \uset{S}_k^c},Y_k)\cr
&=\sum_{j\in   \uset{S}_k}\left(h(U_j,U_{A_j})-h(U_{A_j})\right)-h(U_{ \uin{D}_k\cup  \uin{B}_k},Y_k)+h(U_{  \uset{S}_k^c}, Y_k)\cr
&=\frac{1}{2}\log \frac{|\Phi_{U_{  \uset{S}_k^c}, Y_k}\Lambda_{\Psi^k} \Phi_{U_{  \uset{S}_k^c}, Y_k}^t  |\cdot \prod_{j\in   \uset{S}_k}|\Phi_{U_{j},U_{A_j}}\Lambda_{\Psi^{k(\{j\}\cup A_j)}} \Phi_{U_{j},U_{A_j}}^t|}{|\Phi_{U_{ \uin{D}_k\cup  \uin{B}_k},Y_k}\Lambda_{\Psi^k} \Phi_{U_{ \uin{D}_k\cup  \uin{B}_k},Y_k}^t|\cdot \prod_{j\in   \uset{S}_k}|\Phi_{U_{A_j}}\Lambda_{\Psi^{k(A_j)}} \Phi_{U_{A_j}}^t|}
\end{align*}
and 
\begin{align*}
\sum_{j\in \lset{T}_k} r_j&>\sum_{j\in   \uset{T}_k}I(U_j;U_{  \uset{T}_k[j]\cup  \uin{D}_k},Y_k|U_{A_j})\cr
&=\sum_{j\in   \uset{T}_k}h(U_j|U_{A_j})-h(U_{  \uset{T}_k}|U_{ \uin{D}_k},Y_k)\cr
&=\sum_{j\in   \uset{T}_k}\left(h(U_j,U_{A_j})-h(U_{A_j})\right)-h(U_{  \uset{T}_k}')\cr
&=\frac{1}{2}\log \frac{\prod_{j\in   \uset{T}_k}|\Phi_{U_{j},U_{A_j}}\Lambda_{\Psi^{k(\{j\}\cup A_j)}} \Phi_{U_{j},U_{A_j}}^t|}{|\Lambda_{U_{  \uset{T}_k}'}|\cdot \prod_{j\in   \uset{T}_k}|\Phi_{U_{A_j}}\Lambda_{\Psi^{k(A_j)}} \Phi_{U_{A_j}}^t|}. 
\end{align*}
Now, by following the standard discretization procedure \cite{McEliece:77} and the typical average lemma \cite{ElGamalKim:11}, Theorem \ref{thm:gaussian} is proved. 
\end{proof}

\begin{remark}
If $U_j, j\in [1:\nu]$ has the form of $G_j''U_{A_j}+U''_j$ as a special case of (\ref{eqn:uwk_ori}) for some matrix $G_j''$, where $U''_j$ is independent of $U_{A_j}$, then we have
\begin{align*}
\frac{|\Phi_{U_{j},U_{A_j}}\Lambda_{\Psi^{k(\{j\}\cup A_j)}} \Phi_{U_{j},U_{A_j}}^t|}{|\Phi_{U_{A_j}}\Lambda_{\Psi^{k(A_j)}} \Phi_{U_{A_j}}^t|}=|\Lambda_{U''_j}|.
\end{align*}
\end{remark}

\subsubsection*{Proof of Lemma \ref{lemma:nonrecursive}}
By substituting (\ref{eqn:xk}) into (\ref{eqn:gaussian_model}), $Y_k$ is written as follows: 
\begin{align}
Y_k&=\sum_{i\in[1:k-1]}H_{ki}\left(\sum_{j\in [1:i]}F_{ij}U_{ \uin{W}_j}+F_i'Y_i\right)+\sum_{j\in [1:k-1]}H_{kj}'Y_j+Y_k'\cr
&=\sum_{i\in[1:k-1]}\sum_{j\in [1:i]}H_{ki}F_{ij}U_{ \uin{W}_j}+\sum_{j\in [1:k-1]}(H_{kj}F_j'+H_{kj}')Y_j+Y_k'\cr
&\overset{(a)}{=}\sum_{j\in[1:k-1]}\sum_{i\in [j:k-1]}H_{ki}F_{ij}U_{ \uin{W}_j}+\sum_{j\in [1:k-1]}(H_{kj}F_j'+H_{kj}')Y_j+Y_k', \label{eqn:yk1}
\end{align}
where $(a)$ is by changing the summation order. 

Next, by substituting (\ref{eqn:yk1}) into (\ref{eqn:uwk}), $U_{ \uin{W}_k}$ is given as follows:
\begin{align*}
U_{ \uin{W}_k}&=\sum_{j\in [1:k-1]}G_{kj}U_{ \uin{W}_j}+G_k'\left(\sum_{j\in[1:k-1]}\sum_{i\in [j:k-1]}H_{ki}F_{ij}U_{ \uin{W}_j}+\sum_{j\in [1:k-1]}(H_{kj}F_j'+H_{kj}')Y_j+Y_k'\right)+U_{ \uin{W}_k}' \cr
&=\sum_{j\in [1:k-1]}(G_{kj}+G_k'\sum_{i\in [j:k-1]}H_{ki}F_{ij} )U_{ \uin{W}_j}+\sum_{j\in [1:k-1]}G_k'(H_{kj}F_j'+H_{kj}')Y_j+G_k'Y_k'+U_{ \uin{W}_k}'.
\end{align*}
Hence, we have
\begin{align}
\begin{bmatrix}U_{ \uin{W}_k} \\ Y_k \end{bmatrix} = \sum_{j\in [1:k-1]} \Upsilon_{kj} \begin{bmatrix} U_{ \uin{W}_j} \\ Y_j\end{bmatrix} + \Psi_k \label{eqn:uwk_recursive}
\end{align}
where $ \Upsilon_{kj}$ and $\Psi_k$ are defined in (\ref{eqn:Mkjprime1}) and (\ref{eqn:Mkjprime2}), respectively. 

We prove Lemma \ref{lemma:nonrecursive} by solving the recursive formula in (\ref{eqn:uwk_recursive}) using strong induction. For $k=1$, $[U_{ \uin{W}_1}^t~Y_1^t]^t=\Psi_1$ from (\ref{eqn:uwk_recursive}), and hence Lemma \ref{lemma:nonrecursive} holds trivially. For $k>1$, assume that $[U_{ \uin{W}_j}^t~Y_j^t]^t=\sum_{i\in [1:j]} \Phi_{ji} \Psi_i$ for all $j<k$. Then, 
\begin{align}
[U_{ \uin{W}_k}^t~Y_k^t]^t &\overset{(a)}{=}\sum_{j\in [1:k-1]} \Upsilon_{kj}[U_{ \uin{W}_j}^t~Y_j^t]^t + \Psi_k \cr
&\overset{(b)}{=}\sum_{j\in [1:k-1]} \sum_{i\in [1:j]} \Upsilon_{kj} \Phi_{ji} \Psi_i + \Psi_k \cr
&\overset{(c)}{=}\sum_{i\in [1:k-1]} \sum_{j\in [i:k-1]} \Upsilon_{kj} \Phi_{ji} \Psi_i + \Psi_k, 
\end{align}
where $(a)$ is from (\ref{eqn:uwk_recursive}), $(b)$ is from the induction assumption, and $(c)$ is by changing the summation order. Now, we have 
\begin{align*}
\sum_{j\in [i:k-1]} \Upsilon_{kj} \Phi_{ji} &= \Upsilon_{ki}+\sum_{j\in [i+1:k-1]} \Upsilon_{kj} \Phi_{ji} \cr
&= \Upsilon_{ki}+\sum_{j\in [i+1:k-1]} \Upsilon_{kj} \sum_{S: \{i,j\}\subseteq S \subseteq [i:j]}\prod_{i'\in [1:|S|-1]}\Upsilon_{S_{[i'+1]}S_{[i']}} \cr
&=\sum_{S: \{i,k\}\subseteq S \subseteq [i:k]} \prod_{i'\in [1:|S|-1]}\Upsilon_{S_{[i'+1]}S_{[i']}} \cr
&=\Phi_{ki}.
\end{align*}
Hence, we have $[U_{ \uin{W}_k}^t~Y_k^t]^t=\sum_{j\in [1:k]} \Phi_{kj} \Psi_j$, and Lemma \ref{lemma:nonrecursive} is proved by strong induction. 
\endproof

\section{Conclusion} \label{sec:conclusion}
  
 We showed a unified achievability theorem that generalizes most of achievability results in network information theory that are based on random coding. Our single-letter rate expression has a very simple form.   This was made possible due to our framework, where many different ingredients in network information theory are treated in a unified way, and our coding scheme that consists of a few basic ingredients but is at least as powerful as many existing schemes. Using our result, obtaining many new achievability results in network information theory can now be done more easily. As a simple application of our main theorem, we derived a generalized decode-compress-amplify-and-forward bound and showed it strictly outperforms previously known results.  Because the final expression of our main theorem has a simple form, it enables us to get new insights. As an example, we showed how to derive three types of network duality from our main theorem. Our result can be made more general if other coding strategies such as structured codes can also be incorporated in our setting. However, such a task does not seem easy and the rate expression may become too complicated.

  \appendix
%%%%%%%%%%%%%%%%%%%%%%%
\subsection{Bounding the second term in the summation in (\ref{eqn:error_analysis_ub}) } \label{appendix:sec_error}
For given $k\in [1:N]$, we have 
\begin{align*}
&P(\mathcal{E}_{k,2}\cap \bigcap_{j=1}^{k-1}(\mathcal{E}_{j,1}\cup \mathcal{E}_{j,2}\cup \mathcal{E}_{j,3})^c) \cr
&\leq P((U_{ \uin{D}_k\cup  \uin{B}_k}^n(l'_{\lin{D}_k\cup \lin{B}_k}), Y_k^n)\in \mathcal{T}_{\epsilon_k}^{(n)} \mbox{ for some } l'_{\lin{D}_k\cup \lin{B}_k} \mbox{ s.t. } l'_{\lin{D}_k}\neq L_{\lin{D}_k}, \hat{L}_{\lin{D}_j,j}=L_{\lin{D}_j},\cr 
&~~(U_{ \uin{D}_j\cup  \uin{B}_j}^n(L_{\lin{D}_j},L_{\lin{B}_j}), Y_j^n)\in \mathcal{T}_{\epsilon_j}^{(n)},(U_{ \uin{D}_j}^n(L_{\lin{D}_j}), U_{ \uin{W}_j}^n(L_{\lin{D}_j}, L_{\lin{W}_j}), Y_j^n)\in \mathcal{T}_{\epsilon_j'}^{(n)} \mbox{ for all } j\in [1:k-1])\cr 
&\leq\sum_{\lset{S}_k\subseteq \lin{D}_k\cup \lin{B}_k\atop \lset{S}_k\cap \lin{D}_k\neq \emptyset}P((U_{ \uin{D}_k\cup  \uin{B}_k}^n(l'_{\lset{S}_k},L_{\lset{S}_k^c}), Y_k^n)\in \mathcal{T}_{\epsilon_k}^{(n)} \mbox{ for some } l'_{\lset{S}_k} \mbox{ s.t. } l'_i\neq L_i \mbox{ for all } i\in \lset{S}_k, \hat{L}_{\lin{D}_j,j}=L_{\lin{D}_j},\cr 
&~~(U_{ \uin{D}_j\cup  \uin{B}_j}^n(L_{\lin{D}_j},L_{\lin{B}_j}), Y_j^n)\in \mathcal{T}_{\epsilon_j}^{(n)},(U_{ \uin{D}_j}^n(L_{\lin{D}_j}), U_{ \uin{W}_j}^n(L_{\lin{D}_j}, L_{\lin{W}_j}), Y_j^n)\in \mathcal{T}_{\epsilon_j'}^{(n)} \mbox{ for all } j\in [1:k-1]).
\end{align*}

For $\lset{S}_k\subseteq \lin{D}_k\cup \lin{B}_k$ such that $\lset{S}_k\cap \lin{D}_k\neq \emptyset$, we have  
\begin{align}
&P((U_{ \uin{D}_k\cup  \uin{B}_k}^n(l'_{\lset{S}_k},L_{\lset{S}_k^c}), Y_k^n)\in \mathcal{T}_{\epsilon_k}^{(n)} \mbox{ for some } l'_{\lset{S}_k} \mbox{ s.t. } l'_i\neq L_i \mbox{ for all } i\in \lset{S}_k, \hat{L}_{\lin{D}_j,j}=L_{\lin{D}_j},
\cr &~~~~(U_{ \uin{D}_j\cup  \uin{B}_j}^n(L_{\lin{D}_j},L_{\lin{B}_j}), Y_j^n)\in \mathcal{T}_{\epsilon_j}^{(n)},(U_{ \uin{D}_j}^n(L_{\lin{D}_j}), U_{ \uin{W}_j}^n(L_{\lin{D}_j}, L_{\lin{W}_j}), Y_j^n)\in \mathcal{T}_{\epsilon_j'}^{(n)} \mbox{ for all } j\in [1:k-1])\cr
=&P((U_{ \uin{D}_k\cup  \uin{B}_k}^n(l'_{\lset{S}_k},l_{\lset{S}_k^c}), Y_k^n)\in \mathcal{T}_{\epsilon_k}^{(n)} \mbox{ for some } l'_{\lset{S}_k} \mbox{ s.t. } l'_i\neq l_i \mbox{ for all } i\in \lset{S}_k, \hat{L}_{\lin{D}_j,j}=l_{\lin{D}_j},
\cr &~~~~(U_{ \uin{D}_j\cup  \uin{B}_j}^n(l_{\lin{D}_j},l_{\lin{B}_j}), Y_j^n)\in \mathcal{T}_{\epsilon_j}^{(n)},(U_{ \uin{D}_j}^n(l_{\lin{D}_j}), U_{ \uin{W}_j}^n(l_{\lin{D}_j}, l_{\lin{W}_j}), Y_j^n)\in \mathcal{T}_{\epsilon_j'}^{(n)} \mbox{ for all } j\in [1:k-1],\cr  &~~~~~~~~L_{\lin{W}^{k-1}}=l_{\lin{W}^{k-1}} \mbox{ for some } l_{\lin{W}^{k-1}})\cr
=&P((U^n_{ \uin{D}_k\cup  \uin{B}_k}(l'_{\lset{S}_k},l_{\lset{S}_k^c}), Y_k^n)\in \mathcal{T}_{\epsilon_k}^{(n)} \mbox{ for some } l'_{\lset{S}_k} \mbox{ s.t. } l'_i\neq l_i \mbox{ for all } i\in \lset{S}_k, U_{ \uin{D}_j\cup  \uin{B}_j}^n(\cdot)\in \mathcal{A}_{j}(Y_j^n,  l_{\lin{D}_j\cup \lin{B}_j}),\cr
&~~~~ (U_{ \uin{D}_j}^n(l_{\lin{D}_j}),U_{ \uin{W}_j}^n(l_{\lin{D}_j},\cdot))\in \mathcal{B}_{j}(Y_j^n, l_{\lin{D}_j\cup \lin{W}_j}) \mbox{ for all } j\in[1:k-1] \mbox{ for some } l_{\lin{W}^{k-1}} ) \label{eqn:sec_error_1}
\end{align}
where $\mathcal{A}_{j}(y_j^n,l_{\lin{D}_j\cup \lin{B}_j})$ is the ensemble of codebooks $u_{ \uin{D}_j\cup  \uin{B}_j}^n(\cdot)$ such that node $j$ with received channel output $y_j^n$ decodes $\hat{l}_{\lin{D}_j,j}$ as $l_{\lin{D}_j}$ and the joint typicality (\ref{eqn:packing_typicality}) is satisfied for given $l_{\lin{D}_j\cup \lin{B}_j}$ and $\mathcal{B}_{j}(y_j^n,l_{\lin{D}_j\cup \lin{W}_j})$ is the ensemble of codebooks $(u_{ \uin{D}_j}^n(l_{\lin{D}_j}),u_{ \uin{W}_j}^n(l_{\lin{D}_j},\cdot))$ such that node $j$ with received channel output $y_j^n$ and decoded index vector $l_{\lin{D}_j}$ chooses $l_{\lin{W}_j}$ and the joint typicality (\ref{eqn:covering_typicality}) is satisfied for given $l_{\lin{D}_j\cup \lin{W}_j}$. More precisely, we define 
\begin{align*}
\mathcal{A}_{j}(y_j^n, l_{\lin{D}_j\cup \lin{B}_j})&\triangleq  \{u_{ \uin{D}_j\cup  \uin{B}_j}^n(\cdot):(u_{ \uin{D}_j\cup  \uin{B}_j}^n(\tilde{l}_{\lin{D}_j\cup\lin{B}_j}), y_j^n)\notin T_{\epsilon_j} \mbox{ for all } \tilde{l}_{\lin{D}_j}<l_{\lin{D}_j}, \tilde{l}_{\lin{B}_j}, \cr
&~~~~~~~~ (u_{ \uin{D}_j\cup  \uin{B}_j}^n(l_{\lin{D}_j\cup \lin{B}_j}), y_j^n)\in T_{\epsilon_j} \} \cr
\mathcal{B}_{j}(y_j^n,l_{\lin{D}_j\cup \lin{W}_j})&\triangleq  \{(u_{ \uin{D}_j}^n(l_{\lin{D}_j}),u_{ \uin{W}_j}^n(l_{\lin{D}_j},\cdot)):(u_{ \uin{D}_j}^n(l_{\lin{D}_j}), u^n_{ \uin{W}_j}(l_{\lin{D}_j},\tilde{l}_{\lin{W}_j}),y_j^n)\notin T_{\epsilon_j'} \mbox{ for all } \tilde{l}_{\lin{W}_j}<l_{\bar{W}_j}, \cr
&~~~~~~~~ (u_{ \uin{D}_j}^n(l_{\lin{D}_j}), u^n_{ \uin{W}_j}(l_{\lin{D}_j\cup \lin{W}_j}),y_j^n)\in T_{\epsilon_j'} \}.
\end{align*}
Continuing with the bound in (\ref{eqn:sec_error_1}), we have
\begin{align}
&P((U^n_{ \uin{D}_k\cup  \uin{B}_k}(l'_{\lset{S}_k},l_{\lset{S}_k^c}), Y_k^n)\in \mathcal{T}_{\epsilon_k}^{(n)} \mbox{ for some } l'_{\lset{S}_k} \mbox{ s.t. } l'_i\neq l_i \mbox{ for all } i\in \lset{S}_k, U_{ \uin{D}_j\cup  \uin{B}_j}^n(\cdot)\in \mathcal{A}_{j}(Y_j^n, l_{\lin{D}_j\cup \lin{B}_j}), \cr
&~~(U_{ \uin{D}_j}^n(l_{\lin{D}_j}),U_{ \uin{W}_j}^n(l_{\lin{D}_j},\cdot))\in \mathcal{B}_{j}(Y_j^n, l_{\lin{D}_j\cup \lin{W}_j}) \mbox{ for all } j\in[1:k-1] \mbox{ for some } l_{\lin{W}^{k-1}} )\cr
&=P((U^n_{ \uin{D}_k\cup  \uin{B}_k}(l'_{\lset{S}_k},l_{\lset{S}_k^c}), Y_k^n)\in \mathcal{T}_{\epsilon_k}^{(n)},U_{ \uin{D}_j\cup  \uin{B}_j}^n(\cdot)\in \mathcal{A}_{j}(Y_j^n,l_{\lin{D}_j\cup \lin{B}_j}), (U_{ \uin{D}_j}^n(l_{\lin{D}_j}),U_{ \uin{W}_j}^n(l_{\lin{D}_j},\cdot))\in \mathcal{B}_{j}(Y_j^n, l_{\lin{D}_j\cup \lin{W}_j})\cr
&~~ \mbox{ for all } j\in [1:k-1] \mbox{ for some } l_{\lin{W}^{k-1}} \mbox{ s.t. }l_i\neq l'_i \mbox{ for all } i\in \lset{S}_k  \mbox{ for some } l'_{\lset{S}_k} )\cr
&\leq \sum_{l_{\lset{S}_k}'}P((U^n_{ \uin{D}_k\cup  \uin{B}_k}(l'_{\lset{S}_k},l_{\lset{S}_k^c}), Y_k^n)\in \mathcal{T}_{\epsilon_k}^{(n)},U_{ \uin{D}_j\cup  \uin{B}_j}^n(\cdot)\in \mathcal{A}_{j}(Y_j^n,l_{\lin{D}_j\cup \lin{B}_j}), \cr
&(U_{ \uin{D}_j}^n(l_{\lin{D}_j}),U_{ \uin{W}_j}^n(l_{\lin{D}_j},\cdot))\in \mathcal{B}_{j}(Y_j^n, l_{\lin{D}_j\cup \lin{W}_j}) \mbox{ for all } j\in [1:k-1] \mbox{ for some } l_{\lin{W}^{k-1}}  \mbox{ s.t. }l_i\neq l'_i \mbox{ for all } i\in \lset{S}_k )\cr
&\overset{(a)}{=}\sum_{l_{\lset{S}_k}'}P((U^n_{ \uin{D}_k\cup  \uin{B}_k}(l'_{\lset{S}_k},L''_{\lset{S}_k^c}), \tilde{Y}_k^n)\in \mathcal{T}_{\epsilon_k}^{(n)}, U_{ \uin{D}_j\cup  \uin{B}_j}^n(\cdot)\in \mathcal{A}_{j}(Y_j^n,l_{\lin{D}_j\cup \lin{B}_j}), \cr
&(U_{ \uin{D}_j}^n(l_{\lin{D}_j}),U_{ \uin{W}_j}^n(l_{\lin{D}_j},\cdot))\in \mathcal{B}_{j}(Y_j^n, l_{\lin{D}_j\cup \lin{W}_j}) \mbox{ for all } j\in [1:k-1] \mbox{ for some } l_{\lin{W}^{k-1}}  \mbox{ s.t. }l_i\neq l'_i \mbox{ for all } i\in \lset{S}_k )\label{eqn:ldoubleprime}
\end{align}
where $\tilde{Y}_k^n$ is the channel output sequence at node $k$ assuming that decoded index vector $\hat{l}''_{\lin{D}_j,j}=\hat{l}''_{\lin{D}_j,j}(y_j^n,l'_{\lset{S}_k},u_{ \uin{D}_j\cup  \uin{B}_j}^n(\tilde{l}_{\lin{D}_j\cup \lin{B}_j})$ for all $\tilde{l}_{\lin{D}_j\cup \lin{B}_j} \mbox{ such that } \tilde{l}_i\neq l_i' \mbox{ for all } i\in \lset{S}_k\cap(\lin{D}_j\cup \lin{B}_j))$ and covering index vector $l''_{\lin{W}_j}=l''_{\lin{W}_j}(y_j^n,l'_{\lset{S}_k},\hat{l}''_{\lin{D}_j,j},$ $u^n_{ \uin{D}_j}(\hat{l}''_{\lin{D}_j,j}), u_{ \uin{W}_j}^n(\hat{l}''_{\lin{D}_j,j},\tilde{l}_{\lin{W}_j})$  for all 
$\tilde{l}_{\lin{W}_j} \mbox{ such that } \tilde{l}_i\neq l_i' \mbox{ for all } i\in \lset{S}_k\cap \lin{W}_j)$ at node $j\in [1:k-1]$ are chosen according to the following rule: 
\begin{itemize}
\item Find the smallest  $\hat{l}''_{\lin{D}_j,j}\in \{\tilde{l}_{\lin{D}_j}:\tilde{l}_i\neq l_i' \mbox{ for all } i\in \lset{S}_k\cap \lin{D}_j  \}$ such that 
\begin{align*}
(u_{ \uin{D}_j\cup  \uin{B}_j}^n(\hat{l}''_{\lin{D}_j,j},\tilde{\tilde{l}}''_{\lin{B}_j}), y_j^n)\in \mathcal{T}_{\epsilon_j}^{(n)}
\end{align*}
for some $\tilde{\tilde{l}}''_{\lin{B}_j}\in \{\tilde{l}_{\lin{B}_j}:\tilde{l}_i\neq l_i' \mbox{ for all } i\in \lset{S}_k\cap \lin{B}_j \}$. If there is no such index vector, let $\hat{l}''_{\lin{D}_j,j}$ be the smallest one in $\{\tilde{l}_{\lin{D}_j}:\tilde{l}_i\neq l_i' \mbox{ for all } i\in \lset{S}_k \cap \lin{D}_j\}$. 

\item Find the smallest $l''_{\lin{W}_j}\in \{\tilde{l}_{\lin{W}_j}:\tilde{l}_i\neq l_i' \mbox{ for all } i\in \lset{S}_k\cap \lin{W}_j \}$ such that 
\begin{align*}
(u_{ \uin{D}_j}^n(\hat{l}''_{\lin{D}_j,j}), u_{ \uin{W}_j}^n(\hat{l}''_{\lin{D}_j,j}, l''_{\lin{W}_j}), y_j^n)\in \mathcal{T}_{\epsilon_j'}^{(n)}. 
\end{align*}
If there is no such index vector, let $l''_{\lin{W}_j}$ be the smallest one in $\{\tilde{l}_{\lin{W}_j}:\tilde{l}_i\neq l_i' \mbox{ for all } i\in \lset{S}_k\cap \lin{W}_j \}$.
\end{itemize}
Note that $(a)$ follows because if $U_{ \uin{D}_j\cup  \uin{B}_j}^n(\cdot)\in \mathcal{A}_{j}(Y_j^n,l_{\lin{D}_j\cup \lin{B}_j}), (U_{\lin{D}_j}^n(l_{\lin{D}_j}),U_{ \uin{W}_j}^n(l_{\lin{D}_j},\cdot))\in \mathcal{B}_{j}(Y_j^n, l_{\lin{D}_j\cup \lin{W}_j})$ for all $j\in [1:k-1]$ for some $l_{\lin{W}^{k-1}}$ such that $l_i\neq l'_i$ for all $i\in \lset{S}_k$, then $\hat{L}''_{\lin{D}_j,j}=l_{\lin{D}_j}$, $L''_{\lin{W}_j}=l_{\lin{W}_j}$ for all $j\in[1:k-1]$, and hence $\tilde{Y}_k^n=Y_k^n$.

By discarding some constraints, (\ref{eqn:ldoubleprime}) is upper-bounded by 
\begin{align*}
\sum_{l_{\lset{S}_k}'}P((U^n_{ \uin{D}_k\cup  \uin{B}_k}(l'_{\lset{S}_k},L''_{\lset{S}_k^c}), \tilde{Y}_k^n)\in \mathcal{T}_{\epsilon_k}^{(n)}).
\end{align*}

Now, we can show that the joint distribution of $(U^n_{ \uin{D}_k\cup  \uin{B}_k}(l'_{\lset{S}_k},L''_{\lset{S}_k^c}), \tilde{Y}_k^n)$ is given as follows: 
\begin{align}
P(U^n_{ \uin{D}_k\cup  \uin{B}_k}(l'_{\lset{S}_k},L''_{\lset{S}_k^c})=u^n_{ \uin{D}_k\cup  \uin{B}_k}, \tilde{Y}_k^n=y_k^n)=p(u_{  \uset{S}_k^c}^n, y_k^n)\prod_{j\in   \uset{S}_{k}}\prod_{i=1}^np(u_{j,i}|u_{A_j,i}), \label{eqn:seconderror_distr}
\end{align}
where $  \uset{S}_k$ is defined in (\ref{eqn:barSk}).

Now, we can obtain the following upper bound: 
\begin{align*}
&\sum_{l_{\lset{S}_k}'}P((U^n_{ \uin{D}_k\cup  \uin{B}_k}(l'_{\lset{S}_k},L''_{\lset{S}_k^c}), \tilde{Y}_k^n)\in \mathcal{T}_{\epsilon_k}^{(n)}) \cr 
&=2^{n\sum_{j\in \lset{S}_k}r_j}\sum_{(u_{  \uset{S}_k^c}^n, y_k^n)\in \mathcal{T}_{\epsilon_k}^{(n)}} \sum_{ u_{  \uset{S}_k}^n \in \mathcal{T}_{\epsilon_k}^{(n)}(U_{  \uset{S}_k}|u_{  \uset{S}_k^c}^n, y_k^n)}p(u_{  \uset{S}_k^c}^n, y_k^n)\prod_{j\in   \uset{S}_{k}}\prod_{i=1}^np(u_{j,i}|u_{A_j,i})  \cr 
&\leq 2^{n\sum_{j\in \lset{S}_k}r_j}\sum_{(u_{  \uset{S}_k^c}^n, y_k^n)\in \mathcal{T}_{\epsilon_k}^{(n)}} p(u_{  \uset{S}_k^c}^n, y_k^n)\cdot 2^{n(H(U_{  \uset{S}_k}|U_{  \uset{S}_k^c},Y_k)+\delta(\epsilon_k))} \cdot  \prod_{j\in   \uset{S}_k}2^{-n(H(U_j|U_{A_j})-\delta(\epsilon_k))}\cr 
&\leq 2^{n\sum_{j\in \lset{S}_k}r_j}\cdot 2^{-n(\sum_{j\in   \uset{S}_k }H(U_j|U_{A_j})-H(U_{  \uset{S}_k}|U_{  \uset{S}_k^c},Y_k)-(1+\nu)\delta(\epsilon_k))}\cr
&=2^{n\sum_{j\in \lset{S}_k}r_j}\cdot 2^{-n(\sum_{j\in   \uset{S}_k }(H(U_j|U_{A_j})-H(U_{j}|U_{A_j}, U_{  \uset{S}_k[j]}, U_{  \uset{S}_k^c},Y_k))-(1+\nu)\delta(\epsilon_k))}\cr
&= 2^{n\sum_{j\in \lset{S}_k}r_j} \cdot 2^{-n(\sum_{j\in   \uset{S}_k} I(U_j;U_{  \uset{S}_k[j]\cup   \uset{S}_k^{c}}, Y_k|U_{A_j})-(1+\nu)\delta(\epsilon_k))},
\end{align*}
which tends to zero as $n$ tends to infinity if 
\begin{align}
\sum_{j\in \lset{S}_k}r_j&<\sum_{j\in   \uset{S}_k} I(U_j;U_{  \uset{S}_k[j]\cup   \uset{S}_k^{c}}, Y_k|U_{A_j})-(1+\nu)\delta(\epsilon_k).  \label{eqn:sec_error_given_sk}
\end{align}

Therefore, $P(\mathcal{E}_{k,2}\cap \bigcap_{j=1}^{k-1}(\mathcal{E}_{j,1}\cup \mathcal{E}_{j,2}\cup \mathcal{E}_{j,3})^c)$ tends to zero as $n$ tends to infinity when (\ref{eqn:sec_error_given_sk}) is satisfied 
for all $\lset{S}_k\subseteq \lin{D}_k\cup \lin{B}_k$ such that $\lset{S}_k\cap \lin{D}_k\neq \emptyset$.

%%%%%%%%%%%%%%%%%%%%%%%%%%%%%%%%%%

\subsection{Bounding the third term in the summation in (\ref{eqn:error_analysis_ub})} \label{appendix:third_error}
The proof follows similar steps to the mutual covering lemma in \cite{ElGamlvanderMeulen:81}. For given $k\in [1:N]$, we have 
\begin{align*}
&P(\mathcal{E}_{k,3}\cap \mathcal{E}_{k,1}^c \cap \mathcal{E}_{k,2}^c )\cr
&\leq P((U_{ \uin{D}_k}^n(L_{\lin{D}_k}), Y_k^n)\in \mathcal{T}_{\epsilon_k}^{(n)}, (U_{ \uin{D}_k}^n(L_{\lin{D}_k}), U_{ \uin{W}_k}^n(L_{\lin{D}_k}, l_{\lin{W}_k}), Y_k^n)\notin \mathcal{T}_{\epsilon_k'}^{(n)} \mbox{ for all } l_{\lin{W}_k} )\cr 
&= \sum_{l_{\lin{D}_k},(u_{ \uin{D}_k}^n, y_k^n)\in \mathcal{T}_{\epsilon_k}^{(n)}}P(L_{\lin{D}_k}=l_{\lin{D}_k}, U_{ \uin{D}_k}^n(l_{\lin{D}_k})=u_{ \uin{D}_k}^n,Y_k^n=y_k^n)P(|\mathcal{L}_{k}(l_{\lin{D}_k},u_{ \uin{D}_k}^n,y_k^n)|=0|U_{ \uin{D}_k}^n(l_{\lin{D}_k})=u_{ \uin{D}_k}^n)
\end{align*}
where 
\begin{align*}
\mathcal{L}_{k}(l_{\lin{D}_k},u_{ \uin{D}_k}^n,y_k^n)\triangleq \{l_{\lin{W}_k}: (u_{ \uin{D}_k}^n, U_{ \uin{W}_k}^n(l_{\lin{D}_k\cup\lin{W}_k}), y_k^n)\in \mathcal{T}_{\epsilon_k'}^{(n)} \}.
\end{align*}

Consider $l_{\lin{D}_k}$ and $(u_{ \uin{D}_k}^n, y_k^n)\in \mathcal{T}_{\epsilon_k}^{(n)}$. From the Chebyshev lemma, we have 
\begin{align*}
P(|\mathcal{L}_{k}(l_{\lin{D}_k},u_{ \uin{D}_k}^n,y_k^n)|=0|U_{ \uin{D}_k}^n(l_{\lin{D}_k})=u_{ \uin{D}_k}^n)\leq \frac{\var(|\mathcal{L}_{k}(l_{\lin{D}_k},u_{ \uin{D}_k}^n,y_k^n)||U_{ \uin{D}_k}^n(l_{\lin{D}_k})=u_{ \uin{D}_k}^n)}{(\E[|\mathcal{L}_{k}(l_{\lin{D}_k},u_{ \uin{D}_k}^n,y_k^n)||U_{ \uin{D}_k}^n(l_{\lin{D}_k})=u_{ \uin{D}_k}^n])^2}.
\end{align*}
Now, define the indicator function 
\begin{align*}
I(l_{\lin{W}_k})=
\begin{cases}
1 & \mbox{if }(u_{ \uin{D}_k}^n, U_{ \uin{W}_k}^n(l_{\lin{D}_k\cup \lin{W}_k}), y_k^n)\in \mathcal{T}_{\epsilon_k'}^{(n)}\\ 
0 & \mbox{otherwise}
\end{cases}
\end{align*}
for each $l_{\lin{W}_k}$. Note that $|\mathcal{L}_{k}(l_{\lin{D}_k},u_{ \uin{D}_k}^n,y_k^n)|=\sum_{l_{\lin{W}_k}}I(l_{\lin{W}_k})$.

Due to the symmetry of the codebook generation, for any $l_{\lin{W}_k}, l'_{\lin{W}_k}, \tilde{l}_{\lin{W}_k}, \tilde{l}'_{\lin{W}_k}$, and $\lset{T}_k\subseteq \lin{W}_k$ such that $l_{\lset{T}_k}=l'_{\lset{T}_k}$, $\tilde{l}_{\lset{T}_k}=\tilde{l}'_{\lset{T}_k}$  and  $l_{i}\neq l'_{i}$, $\tilde{l}_{i}\neq \tilde{l}'_{i}$ for all $i\notin \lset{T}_k$, we have $\E[I(l_{\lin{W}_k})I(l_{\lin{W}_k}')|U_{ \uin{D}_k}^n(l_{\lin{D}_k})=u_{ \uin{D}_k}^n]=\E[I(\tilde{l}_{\lin{W}_k})I(\tilde{l}_{\lin{W}_k}')|U_{ \uin{D}_k}^n(l_{\lin{D}_k})=u_{ \uin{D}_k}^n]$. Let $p_{\lset{T}_k}$ for $\lset{T}_k\subseteq \lin{W}_k$ denote $\E[I(l_{\lin{W}_k})I(l_{\lin{W}_k}')|l_{\lset{T}_k}=l'_{\lset{T}_k}, l_{i}\neq l'_{i} \mbox{ for all } i\notin \lset{T}_k, U_{ \uin{D}_k}^n(l_{\lin{D}_k})=u_{ \uin{D}_k}^n]$. 

Then, we have 
\begin{align}
\E[|\mathcal{L}_{k}(l_{\lin{D}_k},u_{ \uin{D}_k}^n,y_k^n)||U_{ \uin{D}_k}^n(l_{\lin{D}_k})=u_{ \uin{D}_k}^n]&=\sum_{l_{\lin{W}_k}} \E[I(l_{\lin{W}_k})|U_{ \uin{D}_k}^n(l_{\lin{D}_k})=u_{ \uin{D}_k}^n]\cr
&=\sum_{l_{\lin{W}_k}} \E[I^2(l_{\lin{W}_k})|U_{ \uin{D}_k}^n(l_{\lin{D}_k})=u_{ \uin{D}_k}^n]\cr
&=2^{n\sum_{i\in \lin{W}_k}r_i} p_{\lin{W}_k} \label{eqn:third_exp}
\end{align}
and 
\begin{align*}
\E[|\mathcal{L}_{k}(l_{\lin{D}_k},u_{ \uin{D}_k}^n,y_k^n)|^2|U_{ \uin{D}_k}^n(l_{\lin{D}_k})=u_{ \uin{D}_k}^n]&=\sum_{l_{\lin{W}_k}}\sum_{\lset{T}_k\subseteq \lin{W}_k}\sum_{l'_{\lset{T}_k^c}: l'_i\neq l_i, \forall i\in \lset{T}_k^c}\E[I(l_{\lin{W}_k})I(l_{\lset{T}_k},l'_{\lset{T}_k^c})|U_{ \uin{D}_k}^n(l_{\lin{D}_k})=u_{ \uin{D}_k}^n]\cr 
&=\sum_{\lset{T}_k\subseteq \lin{W}_k}\sum_{l_{\bar{W}_k}}\sum_{l'_{\lset{T}_k^c}: l'_i\neq l_i, \forall i\in \lset{T}_k^c}\E[I(l_{\lin{W}_k})I(l_{\lset{T}_k},l'_{\lset{T}_k^c})|U_{ \uin{D}_k}^n(l_{\lin{D}_k})=u_{ \uin{D}_k}^n]\cr 
&\leq\sum_{\lset{T}_k\subseteq \lin{W}_k}2^{n(\sum_{i\in \lset{T}_k}r_i+2\sum_{i\in \lset{T}_k^c}r_i)} p_{\lset{T}_k}.
\end{align*}

Since $p_{\emptyset}=p_{\lin{W}_k}^2$, we have 
\begin{align}
&\var(|\mathcal{L}_{k}(l_{\lin{D}_k},u_{ \uin{D}_k}^n,y_k^n)||U_{ \uin{D}_k}^n(l_{\lin{D}_k})=u_{ \uin{D}_k}^n)\cr
&= \E[|\mathcal{L}_{k}(l_{\lin{D}_k},u_{ \uin{D}_k}^n,y_k^n)|^2|U_{ \uin{D}_k}^n(l_{\lin{D}_k})=u_{ \uin{D}_k}^n]- \E^2[|\mathcal{L}_{k}(l_{\lin{D}_k},u_{ \uin{D}_k}^n,y_k^n)||U_{ \uin{D}_k}^n(l_{\lin{D}_k})=u_{ \uin{D}_k}^n]\cr
&\leq \sum_{\lset{T}_k\subseteq \lin{W}_k, \lset{T}_k\neq \emptyset }2^{n(\sum_{i\in \lset{T}_k}r_i+2\sum_{i\in \lset{T}_k^c}r_i)} p_{\lset{T}_k}. \label{eqn:third_var}
\end{align}

Now, for $\lset{T}_k\subseteq \lin{W}_k $ such that $\lset{T}_k\neq \emptyset$, we have 
\begin{align*}
p_{\lset{T}_k}&= P((u_{ \uin{D}_k}^n, U_{ \uin{W}_k}^n(l_{\lin{D}_k\cup \lin{W}_k}), y_k^n)\in \mathcal{T}_{\epsilon_k'}^{(n)}, (u_{ \uin{D}_k}^n, U_{ \uin{W}_k}^n(l_{\lin{D}_k}, l_{\lin{W}_k}'), y_k^n)\in \mathcal{T}_{\epsilon_k'}^{(n)} |l_{\lset{T}_k}=l'_{\lset{T}_k}, \cr
&~~~~~~~~~~~~~~l_{i}\neq l'_{i} \mbox{ for all } i\notin \lset{T}_k, U_{ \uin{D}_k}^n(l_{\lin{D}_k})=u_{ \uin{D}_k}^n) \cr
&\leq 2^{-n(\sum_{j\in   \uset{T}_k}I(U_j;U_{  \uset{T}_k[j]\cup  \uin{D}_k},Y_k|U_{A_j})+2\sum_{j\in   \uset{T}_k^c} I(U_j;U_{  \uset{T}_k\cup   \uset{T}_k^c[j]\cup  \uin{D}_k},Y_k|U_{A_j})-2(1+\nu)\delta(\epsilon_k'))}
\end{align*}
by the joint typicality lemma \cite{ElGamalKim:11}, where $  \uset{T}_k$ is defined in (\ref{eqn:barTk}). Similarly, for $\lset{T}_k\subseteq \lin{W}_k $ such that $\lset{T}_k\neq \emptyset$, we have 
\begin{align*}
p_{\lin{W}_k}\geq 2^{-n(\sum_{j\in   \uset{T}_k}I(U_j;U_{  \uset{T}_k[j]\cup  \uin{D}_k},Y_k|U_{A_j})+\sum_{j\in   \uset{T}_k^c} I(U_j;U_{  \uset{T}_k\cup   \uset{T}_k^c[j]\cup  \uin{D}_k},Y_k|U_{A_j})+(1+\nu)\delta(\epsilon_k'))}. 
\end{align*}

By substituting the above bounds into (\ref{eqn:third_exp}) and (\ref{eqn:third_var}), we obtain 
\begin{align*}
\frac{\var(|\mathcal{L}_{k}(l_{\lin{D}_k},u_{ \uin{D}_k}^n,y_k^n)||U_{ \uin{D}_k}^n(l_{\lin{D}_k})=u_{ \uin{D}_k}^n)}{(\E[|\mathcal{L}_{k}(l_{\lin{D}_k},u_{ \uin{D}_k}^n,y_k^n)||U_{ \uin{D}_k}^n(l_{\lin{D}_k})=u_{ \uin{D}_k}^n])^2}\leq \sum_{\lset{T}_k\subseteq \lin{W}_k, \lset{T}_k\neq \emptyset}2^{-n(\sum_{j\in \lset{T}_k}r_j-\sum_{j\in   \uset{T}_k}I(U_j;U_{  \uset{T}_k[j]\cup  \uin{D}_k},Y_k|U_{A_j})-4(1+\nu)\delta(\epsilon_k'))}.
\end{align*}
Therefore, $P(|\mathcal{L}_{k}(l_{\lin{D}_k},u_{ \uin{D}_k}^n,y_k^n)|=0|U_{ \uin{D}_k}^n(l_{\lin{D}_k})=u_{ \uin{D}_k}^n)$ and thus  $P(\mathcal{E}_{k,3}\cap \mathcal{E}_{k,1}^c \cap \mathcal{E}_{k,2}^c )$  tend to zero as $n$ tends to infinity if 
\begin{align}
\sum_{j\in \lset{T}_k}r_j &>\sum_{j\in   \uset{T}_k}I(U_j;U_{  \uset{T}_k[j]\cup  \uin{D}_k},Y_k|U_{A_j})+4(1+\nu)\delta(\epsilon_k') \label{eqn:covering_pf_eq}
\end{align}
for all  $\lset{T}_k\subseteq \lin{W}_k $ such that $\lset{T}_k\neq \emptyset$.

%%%%%%%%%%%%%%%%%%%%%%%%%%%%%%%%%
\subsection{Bounding the fourth term in the summation in (\ref{eqn:error_analysis_ub})} \label{appendix:fourth_error}
For given $k\in [1:N]$, we have 
\begin{align*}
&P(\mathcal{E}_{k,4}\cap \mathcal{E}_{k,1}^c \cap \mathcal{E}_{k,2}^c )\cr
&\leq P((U^n_{ \uin{W}^{k-1}}(L_{\lin{W}^{k-1}}), Y_{[1:k]}^n)\in \mathcal{T}_{\epsilon_k}^{(n)},(U^n_{ \uin{W}^k}(L_{\lin{W}^k}), Y_{[1:k]}^n)\notin \mathcal{T}_{\epsilon_k''}^{(n)}, \hat{L}_{\lin{D}_k,k}=L_{\lin{D}_k})\cr
&\leq P((U^n_{ \uin{W}^{k-1}}(\hat{L}_{\lin{D}_k,k}, L_{\lin{W}^{k-1}\setminus \lin{D}_k}), Y_{[1:k]}^n)\in \mathcal{T}_{\epsilon_k}^{(n)},\cr 
&~~~~~~~(U^n_{ \uin{W}^{k-1}}(\hat{L}_{\lin{D}_k,k},L_{\lin{W}^{k-1}\setminus \lin{D}_k}), Y_{[1:k]}^n, U^n_{ \uin{W}_k}(\hat{L}_{\lin{D}_k,k}, L_{\lin{W}_k}))\notin \mathcal{T}_{\epsilon_k''}^{(n)})\cr
&=\sum_{l_{\lin{D}_k},(u^n_{ \uin{W}^{k-1}}, y_{[1:k]}^n)\in \mathcal{T}_{\epsilon_k}^{(n)}}P(\hat{L}_{\lin{D}_k,k}=l_{\lin{D}_k}, U^n_{ \uin{W}^{k-1}}(l_{\lin{D}_k}, L_{\lin{W}^{k-1}\setminus \lin{D}_k})=u^n_{ \uin{W}^{k-1}},Y_{[1:k]}^n=y_{[1:k]}^n )\cr
&~~\cdot P((u^n_{ \uin{W}^{k-1}}, y_{[1:k]}^n, U^n_{ \uin{W}_k}(l_{\lin{D}_k}, L_{\lin{W}_k}))\notin \mathcal{T}_{\epsilon_k''}^{(n)}|\hat{L}_{\lin{D}_k,k}=l_{\lin{D}_k}, U^n_{ \uin{W}^{k-1}}(l_{\lin{D}_k}, L_{\lin{W}^{k-1}\setminus \lin{D}_k})=u^n_{ \uin{W}^{k-1}},Y_{[1:k]}^n=y_{[1:k]}^n) 
\end{align*}

We use the following modified Markov lemma to bound the above, which can be proved from the proof of the Markov lemma in \cite{tung_thesis}, \cite{ElGamalKim:11} with some minor modification.

\begin{lemma} \label{lemma:markov}
Consider random variables $X, Y, Z, A$ such that $X\rightarrow Y\rightarrow Z$ form a Markov chain. 
Let $(x^n, y^n)\in \mathcal{T}_{\epsilon}^{(n)}$ and $a\in \mathcal{A}$. Suppose that $P(Z^n=z^n|X^n=x^n, Y^n=y^n, A=a)=P(Z^n=z^n|Y^n=y^n, A=a)$, where $P(Z^n=z^n|Y^n=y^n, A=a)$ satisfies the following conditions for $\epsilon'>\epsilon$:
\begin{enumerate}
\item $\lim_{n\rightarrow \infty} P((y^n, Z^n)\in \mathcal{T}_{\epsilon'}^{(n)}|Y^n=y^n, A=a)=1$.
\item For every $z^n\in \mathcal{T}_{\epsilon'}^{(n)}(Z|y^n)$ and $n$ sufficiently large, \[P(Z^n=z^n|Y^n=y^n, A=a)\leq 2^{-n(H(Z|Y)-\delta(\epsilon'))}.\] 
\end{enumerate}
Then, for sufficiently small $\epsilon$ and $\epsilon'$ such that $\epsilon<\epsilon'<\epsilon''$, 
\begin{align*}
\lim_{n\rightarrow \infty} P((x^n,y^n,Z^n)\in \mathcal{T}_{\epsilon''}^{(n)}|X^n=x^n, Y^n=y^n, A=a)=1.
\end{align*}
\end{lemma}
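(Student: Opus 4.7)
The plan is to mimic the proof of the classical Markov lemma in \cite{tung_thesis} and \cite{ElGamalKim:11}, observing that the extra conditioning on $\{A=a\}$ never interacts with the combinatorial typicality counts. Both hypotheses of the lemma are asserted \emph{uniformly after} conditioning on $A=a$, so everything reduces to the classical argument once we work with the conditional law $P(Z^n=\cdot\mid Y^n=y^n,A=a)$ in place of the usual one. Since $P(Z^n=z^n\mid X^n=x^n,Y^n=y^n,A=a)=P(Z^n=z^n\mid Y^n=y^n,A=a)$ by assumption, the target probability is a function of the latter law only.

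First I would write
\begin{align*}
&P((x^n,y^n,Z^n)\notin \mathcal{T}_{\epsilon''}^{(n)}\mid X^n=x^n,Y^n=y^n,A=a)\\
&\le P((y^n,Z^n)\notin \mathcal{T}_{\epsilon'}^{(n)}\mid Y^n=y^n,A=a)+\sum_{z^n\in\mathcal{B}_n}P(Z^n=z^n\mid Y^n=y^n,A=a),
\end{align*}
where $\mathcal{B}_n=\mathcal{T}_{\epsilon'}^{(n)}(Z\mid y^n)\setminus \mathcal{T}_{\epsilon''}^{(n)}(Z\mid x^n,y^n)$. The first term vanishes by condition~1. For the second, condition~2 gives $P(Z^n=z^n\mid Y^n=y^n,A=a)\le 2^{-n(H(Z\mid Y)-\delta(\epsilon'))}$ for every $z^n$ in the sum, so the sum is bounded by $|\mathcal{B}_n|\cdot 2^{-n(H(Z\mid Y)-\delta(\epsilon'))}$.

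The main step is then to control $|\mathcal{B}_n|$ using $X\rightarrow Y\rightarrow Z$ and the $\epsilon$-joint-typicality of $(x^n,y^n)$. The classical counting argument (see the proof of the Markov lemma in \cite{tung_thesis} and \cite{ElGamalKim:11}) shows that for $\epsilon$ and $\epsilon'$ sufficiently small relative to $\epsilon''$, the number of $z^n$ conditionally $\epsilon'$-typical with $y^n$ but not jointly $\epsilon''$-typical with $(x^n,y^n)$ is at most $2^{n(H(Z\mid Y)-\gamma)}$ for some $\gamma=\gamma(\epsilon,\epsilon',\epsilon'')>0$; this uses a Hoeffding/Sanov-type concentration on the empirical joint distribution of $(x^n,y^n,z^n)$ with $(x^n,y^n)$ held fixed, together with the Markov identity $H(Z\mid X,Y)=H(Z\mid Y)$. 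Combining this with the per-sequence bound gives an overall decay of $2^{-n(\gamma-\delta(\epsilon'))}$, which vanishes provided $\epsilon'$ is small enough that $\gamma>\delta(\epsilon')$.

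The main obstacle is verifying that the classical counting estimate for $|\mathcal{B}_n|$ is insensitive to the presence of the auxiliary variable $A$. But $A$ affects only the \emph{law} of $Z^n$ (already absorbed into conditions~1 and~2), not the combinatorial volumes of typical sets, which depend solely on the target distribution $p(x,y,z)=p(x,y)p(z\mid y)$ encoded by the Markov structure and on the $\epsilon$'s. Thus the classical size-of-typical-set bound goes through verbatim, and the three pieces combine to give the claim.
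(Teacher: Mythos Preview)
Your proposal is correct and takes essentially the same approach as the paper: the paper does not give a detailed proof but simply states that the lemma ``can be proved from the proof of the Markov lemma in \cite{tung_thesis}, \cite{ElGamalKim:11} with some minor modification,'' and your sketch is precisely that minor modification---observing that the conditioning on $\{A=a\}$ is fully absorbed into hypotheses~1 and~2 and does not touch the combinatorial typical-set counts.
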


Fix $l_{\lin{D}_k}$ and $(u^n_{ \uin{W}^{k-1}}, y_{[1:k]}^n)\in \mathcal{T}_{\epsilon_k}^{(n)}$.  We use Lemma \ref{lemma:markov} to show 
\begin{align}
&\lim_{n\rightarrow \infty} P((u^n_{ \uin{W}^{k-1}}, y_{[1:k]}^n, U^n_{ \uin{W}_k}(l_{\lin{D}_k}, L_{\lin{W}_k}))\in \mathcal{T}_{\epsilon_k''}^{(n)}|\hat{L}_{\lin{D}_k,k}=l_{\lin{D}_k}, \cr 
&~~~~~~~~~~~~~~~ U^n_{ \uin{W}^{k-1}}(l_{\lin{D}_k}, L_{\lin{W}^{k-1}\setminus \lin{D}_k})=u^n_{ \uin{W}^{k-1}},Y_{[1:k]}^n=y_{[1:k]}^n)=1. \label{eqn:markov_toshow}
\end{align}

Note that $(U_{ \uin{W}^{k-1}\setminus  \uin{D}_k}, Y_{[1:k-1]})-(U_{ \uin{D}_k},Y_k)-U_{ \uin{W}_k}$ form a Markov chain and 
\begin{align*}
&P(U^n_{ \uin{W}_k}(l_{\lin{D}_k}, L_{\lin{W}_k})=u^n_{ \uin{W}_k}|\hat{L}_{\lin{D}_k,k}=l_{\lin{D}_k}, U^n_{ \uin{W}^{k-1}}(l_{\lin{D}_k}, L_{\lin{W}^{k-1}\setminus \lin{D}_k})=u^n_{ \uin{W}^{k-1}},Y_{[1:k]}^n=y_{[1:k]}^n)\cr
&=P(U^n_{ \uin{W}_k}(l_{\lin{D}_k}, L_{\lin{W}_k})=u^n_{ \uin{W}_k}|\hat{L}_{\lin{D}_k,k}=l_{\lin{D}_k}, U^n_{ \uin{D}_k}(l_{\lin{D}_k})=u^n_{ \uin{D}_k},Y_{k}^n=y_{k}^n).
\end{align*}

The first condition in Lemma \ref{lemma:markov} is safisfied if (\ref{eqn:covering_pf_eq}) is satisfied for all  $\lset{T}_k\subseteq \lin{W}_k $ such that $\lset{T}_k\neq \emptyset$ since 
\begin{align*}
&P((u^n_{ \uin{D}_k}, y_{k}^n, U^n_{ \uin{W}_k}(l_{\lin{D}_k}, L_{\lin{W}_k}))\in \mathcal{T}_{\epsilon_k'}^{(n)}|\hat{L}_{\lin{D}_k,k}=l_{\lin{D}_k}, U^n_{ \uin{D}_k}(l_{\lin{D}_k})=u^n_{ \uin{D}_k},Y_{k}^n=y_{k}^n)\cr 
&=1-P((u^n_{ \uin{D}_k}, y_{k}^n, U^n_{ \uin{W}_k}(l_{\lin{D}_k}, l_{\lin{W}_k}))\notin \mathcal{T}_{\epsilon_k'}^{(n)} \mbox{ for all } l_{\lin{W}_k}|\hat{L}_{\lin{D}_k,k}=l_{\lin{D}_k}, U^n_{ \uin{D}_k}(l_{\lin{D}_k})=u^n_{ \uin{D}_k},Y_{k}^n=y_{k}^n) \cr
&=1-P((u^n_{ \uin{D}_k}, y_{k}^n, U^n_{ \uin{W}_k}(l_{\lin{D}_k}, l_{\lin{W}_k}))\notin \mathcal{T}_{\epsilon_k'}^{(n)} \mbox{ for all } l_{\lin{W}_k}|U^n_{ \uin{D}_k}(l_{\lin{D}_k})=u^n_{ \uin{D}_k}) 
\end{align*}
and we have showed in the analysis of the third error event
\begin{align*}
\lim_{n\rightarrow \infty}P((u^n_{ \uin{D}_k}, y_{k}^n, U^n_{ \uin{W}_k}(l_{\lin{D}_k}, l_{\lin{W}_k}))\notin \mathcal{T}_{\epsilon_k'}^{(n)} \mbox{ for all } l_{\lin{W}_k}|U^n_{ \uin{D}_k}(l_{\lin{D}_k})=u^n_{ \uin{D}_k}) =0
\end{align*}
under the aforementioned condition. 

Now, let us show that the second condition in Lemma \ref{lemma:markov} is satisfied. For every $u^n_{ \uin{W}_k}\in \mathcal{T}_{\epsilon_k'}^{(n)}(U_{ \uin{W}_k}|u^n_{ \uin{D}_k}, y_{k}^n)$,
\begin{align*}
&P(U^n_{ \uin{W}_k}(l_{\lin{D}_k}, L_{\lin{W}_k})=u^n_{ \uin{W}_k}|\hat{L}_{\lin{D}_k,k}=l_{\lin{D}_k}, U^n_{ \uin{D}_k}(l_{\lin{D}_k})=u^n_{ \uin{D}_k}, Y_{k}^n=y_{k}^n)\cr
&=P(U^n_{ \uin{W}_k}(l_{\lin{D}_k}, L_{\lin{W}_k})=u^n_{ \uin{W}_k},U^n_{ \uin{W}_k}(l_{\lin{D}_k}, L_{\lin{W}_k})\in \mathcal{T}_{\epsilon_k'}^{(n)}(U_{ \uin{W}_k}| u^n_{ \uin{D}_k}, y_{k}^n)|\hat{L}_{\lin{D}_k,k}=l_{\lin{D}_k}, U^n_{ \uin{D}_k}(l_{\lin{D}_k})=u^n_{ \uin{D}_k}, Y_{k}^n=y_{k}^n)\cr 
&\leq P(U^n_{ \uin{W}_k}(l_{\lin{D}_k}, L_{\lin{W}_k})=u^n_{ \uin{W}_k}|\hat{L}_{\lin{D}_k,k}=l_{\lin{D}_k}, U^n_{ \uin{D}_k}(l_{\lin{D}_k})=u^n_{ \uin{D}_k}, Y_{k}^n=y_{k}^n,U^n_{ \uin{W}_k}(l_{\lin{D}_k}, L_{\lin{W}_k})\in \mathcal{T}_{\epsilon_k'}^{(n)}(U_{ \uin{W}_k}|u^n_{ \uin{D}_k}, y_{k}^n))\cr 
&=\sum_{l_{\lin{W}_k}} P(L_{\lin{W}_k}=l_{\lin{W}_k}|\hat{L}_{\lin{D}_k,k}=l_{\lin{D}_k}, U^n_{ \uin{D}_k}(l_{\lin{D}_k})=u^n_{ \uin{D}_k}, Y_{k}^n=y_{k}^n,U^n_{ \uin{W}_k}(l_{\lin{D}_k}, L_{\lin{W}_k})\in \mathcal{T}_{\epsilon_k'}^{(n)}(U_{ \uin{W}_k}|u^n_{ \uin{D}_k}, y_{k}^n))\cr 
&~~~~~~~~~~~~~~~~~~~~\cdot P(U^n_{ \uin{W}_k}(l_{\lin{D}_k\cup \lin{W}_k})=u^n_{ \uin{W}_k}|\hat{L}_{\lin{D}_k,k}=l_{\lin{D}_k}, L_{\lin{W}_k}=l_{\lin{W}_k},U^n_{ \uin{D}_k}(l_{\lin{D}_k})=u^n_{ \uin{D}_k}, Y_{k}^n=y_{k}^n,\cr 
&~~~~~~~~~~~~~~~~~~~~~~~~~~~~~~~~~~~~~~~~U^n_{ \uin{W}_k}(l_{\lin{D}_k\cup \lin{W}_k})\in \mathcal{T}_{\epsilon_k'}^{(n)}(U_{ \uin{W}_k}|u^n_{ \uin{D}_k}, y_{k}^n)). 
\end{align*}
For given  $l_{\lin{W}_k}$, we have 
\begin{align*}
&P(U^n_{ \uin{W}_k}(l_{\lin{D}_k\cup \lin{W}_k})=u^n_{ \uin{W}_k}|\hat{L}_{\lin{D}_k,k}=l_{\lin{D}_k}, L_{\lin{W}_k}=l_{\lin{W}_k},U^n_{ \uin{D}_k}(l_{\lin{D}_k})=u^n_{ \uin{D}_k}, Y_{k}^n=y_{k}^n, \cr
&~~~~~~~~~~~~~~~~~~~~~~~~~~~~~~~~~~~~~~~~~~~~~~~~~~~~~~~~~U^n_{ \uin{W}_k}(l_{\lin{D}_k\cup \lin{W}_k})\in \mathcal{T}_{\epsilon_k'}^{(n)}(U_{ \uin{W}_k}|u^n_{ \uin{D}_k}, y_{k}^n))\cr 
&\overset{(a)}{=}P(U^n_{ \uin{W}_k}(l_{\lin{D}_k\cup\lin{W}_k})=u^n_{ \uin{W}_k}|U^n_{ \uin{D}_k}(l_{\lin{D}_k})=u^n_{ \uin{D}_k}, U^n_{ \uin{W}_k}(l_{\lin{D}_k\cup\lin{W}_k})\in \mathcal{T}_{\epsilon_k'}^{(n)}(U_{ \uin{W}_k}|u^n_{ \uin{D}_k}, y_{k}^n))\cr 
&=\frac{P(U^n_{ \uin{W}_k}(l_{\lin{D}_k\cup\lin{W}_k})=u^n_{ \uin{W}_k}|U^n_{ \uin{D}_k}(l_{\lin{D}_k})=u^n_{ \uin{D}_k})}{P(U^n_{ \uin{W}_k}(l_{\lin{D}_k\cup \lin{W}_k})\in \mathcal{T}_{\epsilon_k'}^{(n)}(U_{ \uin{W}_k}|u^n_{ \uin{D}_k}, y_{k}^n)|U^n_{ \uin{D}_k}(l_{\lin{D}_k})=u^n_{ \uin{D}_k})}\cr
&\leq 2^{-n(H(U_{ \uin{W}_k}|U_{ \uin{D}_k},Y_k)-\delta(\epsilon_k'))},
\end{align*}
where $(a)$ is because 
\begin{align*}
&P(U^n_{ \uin{W}_k}(l_{\lin{D}_k\cup\lin{W}_k})=u^n_{ \uin{W}_k},\hat{L}_{\lin{D}_k,k}=l_{\lin{D}_k}, L_{\lin{W}_k}=l_{\lin{W}_k},U^n_{ \uin{D}_k}(l_{\lin{D}_k})=u^n_{ \uin{D}_k}, Y_{k}^n=y_{k}^n, \cr
&~~~~~~~~~~~~~~~~~~~~~~~~~~~~~~~~~~~~~~~~~~~~~~~~~~~~~~~~~U^n_{ \uin{W}_k}(l_{\lin{D}_k\cup \lin{W}_k})\in \mathcal{T}_{\epsilon_k'}^{(n)}(U_{ \uin{W}_k}|u^n_{ \uin{D}_k}, y_{k}^n))\cr 
&=P(\hat{L}_{\lin{D}_k,k}=l_{\lin{D}_k},U^n_{ \uin{D}_k}(l_{\lin{D}_k})=u^n_{ \uin{D}_k}, Y_{k}^n=y_{k}^n)\cr
&~~~\times P(U^n_{ \uin{W}_k}(l_{\lin{D}_k\cup \lin{W}_k})=u^n_{ \uin{W}_k},U^n_{ \uin{W}_k}(l_{\lin{D}_k\cup \lin{W}_k})\in \mathcal{T}_{\epsilon_k'}^{(n)}(U_{ \uin{W}_k}|u^n_{ \uin{D}_k}, y_{k}^n)| U^n_{ \uin{D}_k}(l_{\lin{D}_k})=u^n_{ \uin{D}_k})\cr
&~~~~~\times  	P( L_{\lin{W}_k}=l_{\lin{W}_k}|\hat{L}_{\lin{D}_k,k}=l_{\lin{D}_k},U^n_{ \uin{D}_k}(l_{\lin{D}_k})=u^n_{ \uin{D}_k}, Y_{k}^n=y_{k}^n,U^n_{ \uin{W}_k}(l_{\lin{D}_k\cup \lin{W}_k})\in \mathcal{T}_{\epsilon_k'}^{(n)}(U_{ \uin{W}_k}|u^n_{ \uin{D}_k}, y_{k}^n))\cr
&=P(\hat{L}_{\lin{D}_k,k}=l_{\lin{D}_k},U^n_{ \uin{D}_k}(l_{\lin{D}_k})=u^n_{ \uin{D}_k}, Y_{k}^n=y_{k}^n,U^n_{ \uin{W}_k}(l_{\lin{D}_k\cup \lin{W}_k})\in \mathcal{T}_{\epsilon_k'}^{(n)}(U_{ \uin{W}_k}|u^n_{ \uin{D}_k}, y_{k}^n),L_{\lin{W}_k}=l_{\lin{W}_k})\cr
&~~~\times P(U^n_{ \uin{W}_k}(l_{\lin{D}_k\cup \lin{W}_k})=u^n_{ \uin{W}_k}| U^n_{ \uin{D}_k}(l_{\lin{D}_k})=u^n_{ \uin{D}_k},U^n_{ \uin{W}_k}(l_{\lin{D}_k\cup \lin{W}_k})\in \mathcal{T}_{\epsilon_k'}^{(n)}(U_{ \uin{W}_k}|u^n_{ \uin{D}_k}, y_{k}^n)).
\end{align*}
Hence, the second condition in Lemma \ref{lemma:markov} is satisfied. 

Now, from  Lemma \ref{lemma:markov}, (\ref{eqn:markov_toshow}) holds and hence $P(\mathcal{E}_{k,4}\cap \mathcal{E}_{k,1}^c \cap \mathcal{E}_{k,2}^c )$ tends to zero as $n$ tends to infinity for sufficiently small $\epsilon_k$ and $\epsilon_k'$ if (\ref{eqn:covering_pf_eq}) is satisfied for all  $\lset{T}_k\subseteq \lin{W}_k $ such that $\lset{T}_k\neq \emptyset$.

\subsection{Special cases} \label{appendix:special}
In this appendix, we present more examples of previous results that can be obtained as simple corollaries of our theorem.

\subsubsection{Channels with action-dependent states \cite{Weissman:10}}
\begin{itemize}
\item ADMN: $N=3, H(Y_1)=R, Y_2=(X_1,Y_1,S)$, $p(s|x_1,y_1)=p(s|x_1)$, $p(y_3|y_{[1:2]},x_{[1:2]})=p(y_3|x_2,s)$.
\item Target distribution: $p^*$ such that $X_3=Y_1$.
\item Choice of $\omega'\in \Omega'$ in Corollary \ref{corollary:main}: $\mu=2, U_1=(Y_1,X_1), W_1=\{1\}, W_2=\{2\}, D_2=\{1\}, D_3=\{1\}, B_3=\{2\}, A_2=\{1\}$, $p(x_1|y_1)=p(x_1)$, $p(u_2|y_2,u_1)=p(u_2|s,x_1)$, $x_2(y_2,u_1,u_2)=x_2(s,u_2)$, $x_3(y_3,u_1)=y_1$.
\end{itemize}

\subsubsection{Marton's inner bound with common message \cite{liang_thesis}}
\begin{itemize}
\item ADMN: $N=3$, $Y_1=(M_1,M_2,M_3)$, $H(M_k)=R_k$ for $k\in [1:3]$, $p(y_1)=p(m_1)p(m_2)p(m_3)$,  $p(y_2|y_1,x_1)=p(y_2|x_1)$, $p(y_3|y_{[1:2]},x_{[1:2]})=p(y_3|x_1,y_2)$.
\item Target distribution: $p^*$ such that $X_2=(M_1,M_2)$, $X_3=(M_1,M_3)$.
\item Choice of $\omega'\in \Omega'$ in Corollary \ref{corollary:main}: $\mu=3$, $U_j=(M_j,V_j)$ for $j\in [1:3]$, $W_1=\{1,2,3\}, D_2=\{1,2\}, D_3=\{1,3\}, A_2=\{1\}, A_3=\{1\}$, $p(v_{[1:3]}|y_1)=p(v_{[1:3]})$, $x_1(u_{[1:3]},y_1)=x_1(v_{[1:3]})$, $x_2(u_{[1:2]}, y_2)= (m_1, m_2)$, $x_3(u_{\{1,3\}}, y_3)= (m_1, m_3)$.
\end{itemize}

%\vspace{0.1in}
\subsubsection{Three-receiver multilevel broadcast channel \cite{NairElGamal:09}}
\begin{itemize}
\item ADMN: $N=4$, $Y_1=(M_0, M_{10}, M_{11})$, $H(M_0)=R_0, H(M_{10})=R_{10}, H(M_{11})=R_{11}$, $R_1=R_{10}+R_{11}$, $p(y_1)=p(m_0)p(m_{10})p(m_{11})$, $p(y_2|y_1,x_1)=p(y_2|x_1)$, $p(y_3|y_{[1:2]}, x_{[1:2]})=p(y_3|y_2)$, $p(y_4|y_{[1:3]}, x_{[1:3]})=p(y_4|y_2,x_1)$.
\item Target distribution: $p^*$ such that $X_2=(M_0, M_{10}, M_{11})$, $X_3=M_0$, $X_4=M_0$.
\item Choice of $\omega'\in \Omega'$ in Corollary \ref{corollary:main}:  $\mu=3$, $U_1=(M_0, V_1)$, $U_2=(M_{10}, V_2)$, $U_3=(M_{11}, X_1)$, $W_1=\{1,2,3\}$, $D_2=\{1,2,3\}$, $D_3=\{1\}$, $D_4=\{1\}$, $B_4=\{2\}$, $A_2=\{1\}$, $A_3=\{1,2\}$, $p(v_{[1:2]},x_1|y_1)=p(v_{[1:2]})p(x_1|v_2)$.
\end{itemize}

% Generated by IEEEtran.bst, version: 1.13 (2008/09/30)

\end{document}